\newcommand {\ignore} [1] {}
\newcommand{\remove}[1]{}
\newcommand{\accset}{A}
\newcommand{\optmatch}{T^{\textnormal{opt}}}
\def\reals{\mathbb{R}}
\def\nats{\mathbb{N}}
\newcommand{\Mod}[1]{\ (\mathrm{mod}\ #1)}
\def\Exp{\mathbb{E}}
\def\Prob{\mathbb{P}\mathrm{r}}
\newcommand{\R}{\mathbb{R}}
\newcommand{\N}{\mathbb{N}}
\renewcommand{\th}{\textsuperscript{th} }
\renewcommand{\phi}{\varphi}
\newcommand{\EqComment}[1]{\text{\emph{(#1)}}}
\newcommand{\EXP}{\mathbb{E}}
\newcommand{\E}{\EXP}
\newcommand{\cbr}[1]{\left\{\,#1\,\right\}}
\newcommand{\floor}[1]{\left\lfloor #1 \right\rfloor}
\newtheorem{theorem}{Theorem}
\newtheorem{lemma}[theorem]{Lemma}
\newtheorem{fact}[theorem]{Fact}
\newtheorem{definition}[theorem]{Definition}
\title{
Beating Competitive Ratio $4$ for Graphic
Matroid Secretary
\footnote{Submitted to STOC 2025 on November 4, 2024.}
}
\author{
Kiarash Banihashem 
\thanks{University of Maryland, College Park, Maryland, USA. \texttt{kiarash@umd.edu}.}
\and
MohammadTaghi Hajiaghayi 
\thanks{University of Maryland, College Park, Maryland, USA. Email: \texttt{hajiagha@umd.edu}.}
\and
Dariusz R. Kowalski
\thanks{Augusta University, Augusta, Georgia, USA.
Email: \texttt{dkowalski@augusta.edu}.
}
\and
Piotr Krysta
\thanks{Augusta University, Augusta, Georgia, USA.
Email: 
\texttt{pkrysta@augusta.edu}.
Piotr Krysta is also affiliated with Computer Science Department, University of Liverpool, U.K.
}
\and
Danny Mittal
\thanks{University of Maryland, College Park, Maryland, USA.
Email: \texttt{dannymittal@gmail.com}.}
\and 
Jan Olkowski
\thanks{University of Maryland, College Park, Maryland, USA.
Email: \texttt{olkowski@umd.edu}.}
}
\date{}
\begin{document}
\maketitle

\begin{abstract}
One of the classic problems in online decision-making is the {\em secretary problem}, where the goal is to hire the best secretary out of $n$ rankable applicants or, in a natural extension, to maximize the probability of selecting the largest number from a sequence arriving in random order.
Many works have considered generalizations of this problem where one can accept multiple values subject to a combinatorial constraint. 
The seminal work of  Babaioff, Immorlica, Kempe, and Kleinberg (SODA'07, JACM'18) proposed the {\em matroid secretary conjecture}, suggesting that 
there exists an $O(1)$-competitive algorithm 
for the matroid constraint,
and many works since have attempted to obtain algorithms for both general matroids and specific classes of matroids.
The ultimate goal of these results is to obtain an $e$-competitive algorithm, and the {\em strong matroid secretary conjecture} states that this is possible for general matroids.

One of the most important classes of matroids is the
{\em graphic matroid}, where
a set of edges in a graph is deemed independent if it contains no cycle. 
Given the rich combinatorial structure of graphs, obtaining algorithms for these matroids is often seen as a good first step towards solving the problem
for general matroids.
For matroid secretary, Babaioff et al. (SODA'07, JACM'18) first studied graphic matroid case and  obtained a $16$-competitive algorithm.
Subsequent works have improved the competitive ratio, most recently to $4$ by Soto, Turkieltaub, and Verdugo (SODA'18).

In this paper, we break the $4$-competitive barrier for the problem, obtaining a new algorithm with a competitive ratio of $3.95$. 
For the special case of simple graphs (i.e., graphs that do not contain parallel edges) we further improve this to $3.77$. 
Intuitively, solving the problem for simple graphs is easier as they do not contain cycles of length two. 
A natural question that arises is whether we can obtain a ratio arbitrarily close to $e$ by assuming the graph has a large enough girth.

We answer this question affirmatively,
proving that one can obtain a competitive ratio arbitrarily close to $e$ even for constant values of girth, providing further evidence for the strong matroid secretary conjecture.
We further show that this bound is tight: for any constant $g$, one cannot obtain a competitive ratio better than $e$ even if we assume that the input graph has girth at least $g$.
To our knowledge, such a bound was not previously known even for simple graphs.

\end{abstract}

\section{Introduction}

In the past two decades there has been a renewed interest in online item selection problems
where a sequence of items arrive one by one, revealing their \emph{weight}, and a decision maker needs to irrevocably decide whether or not to accept each item as it arrives.
The goal is to maximize the total accepted weight, subject to a feasibility constraint on the chosen items.
An algorithm's performance is typically measured by its \emph{competitive ratio}, which compares the algorithm’s total weight with the offline optimum---the total weight achievable if all item weights were known in advance.
These problems are appealing both from a mathematical perspective, as they are concise models for online decision making, 
and from an economical perspective, as they have close connection to pricing and auction theory~\cite{HajiaghayiKP04, hajiaghayi2007automated, babaioff2007matroids, chawla2009sequential}.

In the absence of any information about future items, the problem is essentially hopeless even with two items as the first item may be either significantly heavier or significantly lighter than the second item, and the decision maker has no way of deciding which is the case.
As such, most works make distributional assumptions on either the weight of the arriving items, or their arrival order.
The former class of problems are generally referred to as \emph{prophet inequalities} while the latter are known as \emph{secretary problems}.
Numerous works have studied secretary problems for a large class of combinatorial constraints~\cite{KorulaP09, dimitrov2008competitive, KesselheimRTV13, jaillet2013advances, 
dinitz2014matroid, rubinstein2016beyond, SotoTV18, EzraFGT22, 0002PZ24} and objective functions~\cite{BHZ13,FeldmanZ18}, and considered close variants of these problems such as the prophet-secretary problem~\cite{EHLM17}. 

Perhaps the most important open question in the area of online decision making is the \emph{matroid secretary proplem} posed by the seminal work of Babaioff, Immorlica, Kempe, and Kleinberg (SODA'07,JACM'18)~\cite{babaioff2007matroids,BabaioffIKK18}. 
In the matroid secretary problem, items arrive in a random order, and each item corresponds to an element in a matroid 
$M=(E,I)$,where 
$E$ is the set of elements and 
$I$ is the collection of independent sets in matroid $M$.  Upon arrival, the weight of each item is revealed, and the decision maker must immediately decide whether to accept or reject it. The goal is to maximize the total weight of accepted items, with the constraint that the selected items form an independent set in the matroid.
The matroid secretary conjecture~\cite{babaioff2007matroids,BabaioffIKK18} states that there exists a constant-competitive ratio algorithm for this problem, and \emph{strong matroid secretary conjecture} (e.g., see \cite{SotoTV18}) states that there exists an $e$-competitive algorithm.
Many works have studied the problem for both general matroids and specific cases.

In this paper, we focus on the specific case of graphic matroids. 
In this case, the arriving items corresponds to the edges of a graph and the goal is to accept a set of edges that do not contain a cycle.
Solving problems for graphic matroids is often viewed as a promising first step toward addressing arbitrary matroids, as graphs possess rich combinatorial structures and graphic matroids, along with linear matroids, are among the most intuitive examples of non-trivial matroids.
Many counterexamples for candidate matroid secretary algorithms are, in fact, graphic matroids~\cite{babaioff2007matroids,BahraniBSW21}. It is common to illustrate the main ideas behind general algorithms by showing their behavior on this specific case~\cite{svensson_matroid_secretary_2016}. Recent work has also explored whether techniques for graphic matroids can be extended to general matroids~\cite{AbdolazimiKKG23}.

The seminal paper of Babaioff et al. originally studied the graphic matroid secretary problem, obtaining a $16$-competitive algorithm for the problem. Babaioff et. al.~\cite{babaioff2009secretary} designed a $3e$-competitive algorithm. 
The competitive ratio was later improved to $2e$ by Korula and Pal~\cite{KorulaP09}
and later to $4$ by Soto, Turkieltaub, and Verdugo~\cite{SotoTV18}.
Whether or not the ratio can be improved has been open at least since 2018.

\subsection{Our results and techniques}
In this paper we obtain an algorithm with competitive ratio $3.95$ for the problem, breaking the $4$-competitive barrier.
We further improve this result for the specific case of \emph{simple graphs},  i.e., graphs that do not have parallel edges.
Intuitively, simple graphs represent an easier special case as they don't have any cycles of length $2$; since the algorithm is forbidden from accepting edges that form a cycle, the lack of $2$-cycles gives the algorithm more freedom to accept edges.
Our main result is the following theorem.
\begin{theorem}
\label{mainthm}
    There exists a $3.95$-competitive algorithm for the graphic matroid secretary problem. Furthermore, if the input graph is assumed to be simple, there exists an algorithm with competitive ratio~$3.77$.
\end{theorem}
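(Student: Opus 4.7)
The plan is to sharpen the general framework that has driven all previous progress on the graphic matroid secretary problem: take (a proxy for) the offline optimum spanning forest $F^*$, orient each of its edges toward one of its two endpoints, and run a secretary-style subroutine at each vertex that tries to protect the edges assigned to it. The standard $4$-competitive analysis compounds two roughly factor-$2$ losses: one from orientation, since each edge of $F^*$ contributes to the subproblem of only a single endpoint, and one from the per-vertex subroutine itself. To break the barrier I would attack both factors jointly, using the crucial fact that the constraint is graphic rather than partition-based: at any vertex, the algorithm is free to accept multiple incident edges as long as no cycle is created, which is strictly more generous than the matching-style ``one edge per vertex'' view implicit in the tight $4$-competitive instances.

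First, I would fix a random sampling prefix of fraction $p$ of the arrivals, compute a reference forest $\tilde F$ from what is observed, and use $\tilde F$ as a stand-in for $F^*$. Standard random-order arguments give that $\tilde F$ retains a controllable fraction of the weight of $F^*$, with each edge of $F^*$ appearing in the sample with probability $p$. I would then orient the edges of $\tilde F$ in a load-aware, weight-dependent manner, rather than the uniformly random orientation underlying the $2e$-competitive algorithm of Korula and P\'al. The orientation should skew toward the endpoint carrying the lighter competing load, so that the effective competition faced by the per-vertex subroutine drops strictly below the worst-case factor of $2$.

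Second, at every vertex $v$ the algorithm would run a threshold-based secretary over the post-sample edges, with the threshold derived from the rank or weight structure of $v$'s sample. The key refinement is a \emph{consolation accounting}: even when the vertex-level routine fails to grab the designated heavy edge of $\tilde F$, it often accepts a nearly-as-heavy neighbor, and because the graphic matroid allows additional acyclic acceptances at $v$, this consolation weight can be charged back toward $F^*$ instead of being discarded. Quantifying this gain against adversarial weight configurations is what should strictly beat $4$, landing at the claimed $3.95$. For the simple-graph case, the absence of parallel edges rules out the tightest extremal example---two high-weight parallel copies simultaneously competing at \emph{both} endpoints---which allows one to tighten the orientation analysis and push the ratio down to $3.77$.

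The principal obstacle will be the joint tight analysis: the sampling fraction $p$, the orientation rule, and the per-vertex threshold interact delicately, and any weakening in one place erodes the gain in another. One must simultaneously control (i) the probability that each edge of $F^*$ is ``captured'' by $\tilde F$, (ii) the probability that the per-vertex secretary correctly recognizes it among the later arrivals, and (iii) the additional expected weight earned through consolation acceptances when recognition fails, all without violating global acyclicity across vertices. Showing that the combined loss evaluates to precisely $3.95$, and to $3.77$ under the simple-graph assumption, is where the bulk of the technical effort will lie; the algorithmic skeleton itself is a variation on well-established ideas, and the novelty must come from the refined accounting and parameter optimization.
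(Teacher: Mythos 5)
Your proposal is an \emph{outline}, not a proof, and the core mechanisms you name are left unspecified at precisely the points where the real difficulty lies. You yourself concede that ``showing that the combined loss evaluates to precisely $3.95$, and to $3.77$ under the simple-graph assumption, is where the bulk of the technical effort will lie,'' which means the statement is not actually proved by the proposal. Moreover, the two ideas you lean on --- a weight-dependent, ``load-aware'' orientation of a static reference forest, and ``consolation accounting'' of near-optimal acceptances --- are both underdefined: you never say how the orientation is chosen, how its bias is quantified against the random-order arrival, or how consolation weight can be charged globally without violating acyclicity across vertices. These are exactly the interactions that made $4$ the barrier for six years, and an appeal to ``refined accounting and parameter optimization'' does not constitute a proof that they can be overcome.

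The paper's actual route is different in ways that matter. It does not fix a reference forest at the end of the sampling phase; it recomputes the maximum spanning forest $\optmatch_t$ at every step $t$, re-roots it arbitrarily, and assigns a unique ``outgoing'' edge to each vertex based on only the \emph{set} of edges seen so far. This order-independence is what makes the probability that a given vertex's outgoing edge arrives at step $j$ exactly $1/j$, which drives the recursions for the functions $f, g, h$. The improvement over $4$ comes not from orientation or consolation weight, but from an \emph{asymmetric acceptance condition}: for one endpoint $a$ the algorithm demands that no outgoing edge has even been \emph{seen}, while for the other endpoint $b$ it only demands that none has been \emph{taken}. The stronger condition on $a$ is then used to lower-bound the probability that a potential blocker of $b$ was itself blocked, which is exactly what inflates the acceptance probability past $1/4$. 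The random choice of which endpoint plays the role of $a$ is what handles parallel edges; the improvement to $3.77$ for simple graphs comes from making this choice deterministic (always applying the stronger condition to the endpoint from which the candidate edge is \emph{not} outgoing), which is sound precisely because a simple graph guarantees the blocker's far endpoint is distinct from $v$. None of this is captured by your proposal, and the specific numbers $3.95$ and $3.77$ do not emerge from your framework --- they emerge from the recursion $h_\delta(i) = (1-\tfrac{2}{i})h_\delta(i-1) + \tfrac{\delta}{i}\cdot\tfrac{m'(m'-1)}{(i-1)(i-2)}\bigl[1 - \tfrac{m'}{i-1}\bigr]$ with $\delta = \tfrac{1}{4}$ and $\delta = 1$ respectively, after optimizing the sampling fraction $\alpha$.
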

The basic approach for this result is to, at each step, compute a set of outgoing edges such that each node has exactly one outgoing edge. More exactly, at each step we compute a maximum spanning forest, then direct the edges in this forest toward an arbitrary root; each node then has a unique outgoing edge. We then argue that if the algorithm only accepts an edge for whose endpoints it has not previously accepted an outgoing edge, then the algorithm's set of taken edges will always be independent.

Given this, the key aspect of Algorithm \ref{alg:graph} that allows it to obtain an improved competitive ratio is a slightly stronger condition used to determine whether an edge is taken. Specifically, while for one endpoint we only demand that we have not taken an outgoing edge, for the other endpoint we demand that we have not even seen an outgoing edge. We can then show an increased probability for the former condition being satisfied for a currently considered by using the fact that the latter condition may not be satisfied by a previously seen outgoing edge, causing said edge to not be taken. Algorithm \ref{alg:graph} furthermore makes use of random choice in determining which endpoint to apply the stronger condition to, which is crucial for handling the case of duplicate edges. When the input graph is guaranteed to be simple, this random choice is unnecessary -- removing this random choice gives Algorithm \ref{alg:simple-graph} which allows us to obtain an even lower competitive ratio.

Motivated by the improvement for simple graphs,
we additionally study, for the first time, the landscape of the graphic matroid secretary problem for graphs of high \emph{girth}, where we recall that the girth of a graph is the length of its shortest cycle.
We show that when the graph has large (but constant) girth, one can obtain a competitive ratio arbitrarily close to $e$. Formally, we prove the following theorem. 
\begin{theorem}\label{highgirththm}
For any graph $G$ with girth at least $g$,
there exists an algorithm being $\frac{1}{e} \big(1 - o_g(1)\big)$-competitive.
\end{theorem}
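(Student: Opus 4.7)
My plan is to adapt the classical single-choice secretary to the graphic matroid setting in a way that exploits the high girth to decouple per-vertex decisions and to make the cycle-induced loss vanish with~$g$.

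\emph{Algorithm.} Work in the random-arrival-time model and fix the classical sampling threshold $\tau = 1/e$. During $[0,\tau]$ observe edges without accepting, recording for each vertex $v$ the maximum weight $m_v$ of an incident edge seen so far. During $(\tau,1]$, when an edge $e' = (u,v)$ arrives, tentatively mark $e'$ as \emph{selected at $v$} (respectively at $u$) if $e'$ is the first incident edge at $v$ (respectively $u$) arriving after $\tau$ whose weight exceeds $m_v$ (respectively $m_u$). Let $S$ be the set of edges selected at at least one of their endpoints. Orienting each $e' \in S$ away from a selecting endpoint yields a digraph of out-degree at most~$1$ at every vertex, so $S$ is a pseudoforest; by the girth hypothesis, every undirected cycle in $S$ has length at least $g$. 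Delete the lightest edge of each such cycle to obtain a forest $S'$, which the algorithm returns.

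\emph{Analysis.} Root each tree of $\optmatch$ arbitrarily and let $e_v^{\star}$ denote the parent edge of each non-root vertex $v$. Let $X_v$ denote the tentative selection at $v$ and $M_v$ denote the maximum-weight edge incident to $v$. The classical single-choice secretary guarantee applied to the stream of edges incident to $v$ gives $\Ex{w(X_v)} \ge \tfrac{1}{e}\, w(M_v)$; since $e_v^{\star}$ is incident to $v$ we have $w(M_v) \ge w(e_v^{\star})$, and summing over all non-root $v$ gives $\sum_v \Ex{w(X_v)} \ge \tfrac{1}{e}\, w(\optmatch)$. On the cycle-breaking side, the edge deleted from any cycle of length $\ell \ge g$ carries at most a $1/\ell \le 1/g$ fraction of the cycle's weight, so $\Ex{w(S')} \ge (1 - 1/g)\, \Ex{w(S)}$.

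The principal obstacle is bridging $\sum_v \Ex{w(X_v)}$ and $\Ex{w(S)}$: an edge selected at both of its endpoints is counted twice on the left but only once on the right, and a naive bound loses a factor of~$2$, giving only a $\tfrac{1}{2e}$-competitive ratio. My plan to close this gap is to show that in a graph of girth $g$, the expected weight of doubly-selected edges is an $o_g(1)$ fraction of $\sum_v \Ex{w(X_v)}$. The argument conditions on the arrival-time profile inside a radius-$(g/2)$ ball around any candidate edge, which is a tree because of the girth hypothesis; so the secretary processes at the two endpoints become conditionally independent classical-secretary instances, and a decoupled branching-process union bound shows that the probability of double selection decays with~$g$. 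Combining this decoupling with the cycle-breaking loss then yields the claimed competitive ratio $\tfrac{1}{e}(1 - o_g(1))$.
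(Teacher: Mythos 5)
Your proposal has two genuine gaps, one structural and one analytical.

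\textbf{The algorithm is not online.} You build the pseudoforest $S$ online, but then ``delete the lightest edge of each cycle'' to obtain $S'$. In the secretary model acceptances are irrevocable; you cannot remove a previously accepted edge once a cycle closes, and you cannot defer acceptance of the earlier edges until the cycle completes. The closest online surrogate is to reject the \emph{last-arriving} edge of each would-be cycle, which is a random (not lightest) edge of the cycle and requires a careful conditional argument that the last arrival is in fact close to uniform among the cycle's edges. This is exactly the difficulty the paper's Algorithm~\ref{alg:high_girth} sidesteps by an online probabilistic rule: non-cycle-closing $A^\ast$-edges are accepted only with probability $\rho=x_g$, and the cycle-closing edge is accepted iff the earlier cycle edges did not all make it into $\accset$, which happens with probability $1-\rho^{g-1}=x_g$ by choice of $x_g$.

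\textbf{The decoupling claim is false.} You argue that in a graph of girth $g$ the doubly-selected weight is an $o_g(1)$ fraction of $\sum_v \Ex{w(X_v)}$, relying on conditional independence of the two endpoint secretary processes. But independence does not make the product small: the issue is that each marginal selection probability can itself be $\Theta(1)$. Take a single isolated edge $e'=(u,v)$, a graph of infinite girth. Then $X_u=e'$ iff $e'$ arrives after $\tau$, and the same for $X_v$, so $e'$ is doubly selected with probability $1-1/e$ while $\sum_v\Ex{w(X_v)}=2(1-1/e)w(e')$; the doubly-selected mass is exactly half of the sum, independent of girth. The same phenomenon appears at any low-degree vertex (e.g.\ along long paths), so the loss from double-counting is a constant, not $o_g(1)$. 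The fundamental problem is that your scheme allows both endpoints to ``own'' the same edge. The paper avoids double-counting entirely by orienting each edge of the online maximum spanning forest toward the smallest-labeled vertex of its component, so that exactly one endpoint (the tail) can select it; each vertex selects at most one outgoing edge, which both gives the $\frac{1}{e}$ extension probability into $A^\ast$ (Lemma~\ref{lem:graphical-ext}) and guarantees $A^\ast$ is a pseudoforest with at most one cycle per component (Lemma~\ref{lem:graphical-final}). Without a single-owner rule of this kind, the $\frac{1}{e}$ constant is not recoverable from $\sum_v\Ex{w(X_v)}$.
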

The result builds on yet another property of the combinatorial structure of the set of outgoing edges introduced in the proof of Theorem \ref{mainthm}. We first observe that the fact that every vertex has at most one outgoing edge implies that, in a graph induced by the outgoing edges, each edge belongs to at most one cycle. On the other hand, we prove that if Algorithm \ref{alg:graph} were to accept outgoing edges without respecting the graphic matroid independence condition, it would yield a higher acceptance probability of $\frac{1}{e}$ for an edge from the maximum independent set. These observations lead to a natural approach: accepting an outgoing edge only with a certain probability such that, for every cycle of length $g$, the probabilities of taking each edge of this cycle are equal. Intuitively, as the length of the shortest cycle increases, this probability tends towards $\frac{1}{e}$. We refer the reader to Section~\ref{sec:high-girth} for more details.

The best competitive ratios we attain over the three algorithms we introduce are listed in Table \ref{tab:competitive_ratios} for girths less than $10$.
\begin{table}[h!]
    \centering
    \noindent
    \begin{minipage}{0.45\textwidth}
        \centering
        \begin{tabular}{@{} c c c @{}}
            \toprule
            \textbf{Girth (\( g \))} & \textbf{Competitive Ratio} & \textbf{Algorithm} \\
            \midrule
        2 & 3.95 & Algorithm \ref{alg:graph} \\
        3 & 3.77 & Algorithm \ref{alg:simple-graph} \\
        4 & 3.77 & Algorithm \ref{alg:simple-graph} \\
        5 & 3.76 & Algorithm \ref{alg:high_girth} \\
            \bottomrule
        \end{tabular}
    \end{minipage}%
    \hspace{0.05\textwidth}%
    \begin{minipage}{0.45\textwidth}
        \centering
        \begin{tabular}{@{} c c c @{}}
            \toprule
            \textbf{Girth (\( g \))} & \textbf{Competitive Ratio} & \textbf{Algorithm} \\
            \midrule
        6 & 3.61 & Algorithm \ref{alg:high_girth} \\
        7 & 3.50 & Algorithm \ref{alg:high_girth} \\
            8 & 3.42 & Algorithm \ref{alg:high_girth} \\
            9 & 3.35 & Algorithm \ref{alg:high_girth} \\
            \bottomrule
        \end{tabular}
    \end{minipage}
    \caption{The best competitive ratio we obtain for graphic matroid secretary when the input graph is restricted to have girth at least $g$ for $g < 10$. $g = 2$ corresponds to multigraphs, while $g = 3$ corresponds to simple graphs. As $g$ approaches infinity, the competitive ratio approaches $e$.
    }
    \label{tab:competitive_ratios}
\end{table}

Perhaps more surprisingly, we show that this is tight and that no algorithm can obtain a competitive ratio better than $e$, even if the graph is assumed to have a high girth.
To our knowledge, such a lower bound was not previously known, even for the special case of simple graphs, and we believe our techniques are of independent interest for similar online arrival problems.

\begin{theorem}\label{lowerboundthm}
For any $g \in \N$, there does not exist an algorithm for the graphic matroid secretary problem on graphs of girth $\geq g$ that obtains competitive ratio less than $e$.
\end{theorem}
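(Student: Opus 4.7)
The plan is to prove the lower bound by constructing a family of hard instances on graphs of girth $\geq g$ and invoking Yao's minimax principle. The standard rank-$1$ matroid secretary lower bound uses a multigraph with $n$ parallel edges between two vertices, which has girth $2$. To enforce girth $\geq g$, I would subdivide each such parallel edge into a path of length $\ell = \lceil g/2 \rceil$, yielding a graph $G_n$ consisting of $n$ internally disjoint $u$-$v$ paths each of length $\ell$, with girth exactly $2\ell \geq g$. This way each ``super-edge'' of the rank-$1$ instance is represented by a length-$\ell$ path, and any two fully-taken paths create a cycle of length $2\ell$.

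The weight distribution should be chosen so that OPT is essentially achieved by picking one ``critical'' path fully, and so the algorithm must make a classical secretary-style commitment to identify which path to prioritize. Concretely, I would draw weights adversarially from a heavy-tailed distribution so that the maximum-weight path dominates OPT, and arrange the weights within each path so that the algorithm cannot identify the critical path from any strict prefix of its edges. Under Yao's principle, for any deterministic algorithm the expected value is then bounded by a classical secretary-type argument applied to the $n$ paths: the algorithm's commitment decision (accepting the $\ell$-th edge of some path, or equivalently, accepting edges that together complete a path) plays the role of the selection decision in the $1$-of-$n$ problem, and the known $1/e$ bound on the probability of selecting the maximum kicks in as $n \to \infty$. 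Combined with the dominance of the heaviest path's weight, this would yield a lower bound of $e - o_n(1)$ on the competitive ratio for the family $\{G_n\}$.

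The main obstacle is ruling out \emph{partial-commitment} strategies, which is where the construction becomes delicate and, as the paper emphasizes, where previous techniques for high-girth graphs were lacking. In the naive subdivision with weights spread uniformly along each path, the algorithm can accept $\ell - 1$ of each path's $\ell$ edges without committing to any of them, collecting a $(\ell-1)/\ell$ fraction of every path's weight. For $\ell \geq 2$ this already exceeds $1/e$ of OPT and destroys the hoped-for lower bound. To overcome this, the weight design must concentrate each path's value on a small, randomly-placed subset of its edges so that capturing only $\ell - 1$ edges of a path yields a vanishing fraction of its total weight, while capturing all $\ell$ edges (i.e., committing) yields essentially the full weight. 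Coupling this concentration with adversarial across-path weights is the technical heart of the construction: the position-level randomness defeats the algorithm's attempts to extract information from the arrival of partial paths, and the across-path randomness enforces the classical secretary hardness. Making these two randomizations interact cleanly (so that neither partial commitment nor early value-based identification of the critical path lets the algorithm beat $1/e$) is the step I expect to be the main obstacle, and it is the aspect the paper's authors identify as their novel contribution of independent interest for similar online arrival problems.
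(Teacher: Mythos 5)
Your subdivision construction does not produce hardness, and the obstacle you identify as the ``main obstacle'' (partial commitment) is a symptom of a more fundamental structural problem: the rank of the graphic matroid on your subdivided graph is simply too large relative to the number of edges. In $G_n$ with $n$ internally disjoint $u$--$v$ paths of length $\ell$, there are $n\ell$ edges but $2 + n(\ell-1)$ vertices, so a spanning forest has $1 + n(\ell-1)$ edges --- nearly all of them. Equivalently, the only cycles are unions of two complete $u$--$v$ paths, so the only constraint the matroid imposes is that at most one path may be completed. Any set that picks fewer than $\ell$ edges from each path is automatically independent. This means your weight-concentration fix backfires: if each path's weight lives on a single randomly-placed edge $h_i$, then $\{h_1,\dots,h_n\}$ is independent (one edge per path, no path complete), OPT includes all $n$ heavy edges, and an algorithm that simply accepts every heavy edge as it arrives achieves OPT exactly. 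No secretary-type tension arises, because the ``choose one path'' decision never has to be made. More generally, for any weight assignment, the algorithm may greedily take every edge that does not close a cycle, losing at most one edge per path plus one extra --- a vanishing fraction of OPT as $\ell$ grows. The subdivided parallel-edge gadget therefore cannot yield any lower bound bounded away from $1$, let alone $e$.

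The paper's construction overcomes exactly this problem by controlling the rank-to-edges ratio differently. It builds a bipartite graph $A \cup B$ with $|B| \gg |A|$ (ratio at least $t$) where every vertex in $B$ has degree $\geq d$, obtained by randomly sparsifying a high-girth Ramanujan graph. The rank is then $|A| + |B| - 1 \approx |B|$, while the edge count is $\geq d|B|$, so in any independent set almost every vertex of $B$ contributes at most one incident edge (the paper's ``good'' vertices). This genuinely embeds one $d$-item single-secretary instance per $B$-vertex, and that is where the $1/e$ barrier actually lives. There is also a second ingredient your proposal passes over quickly: because many secretary sub-instances are embedded in a single run, the algorithm could learn the weight distribution from the first sub-instances and exploit it on the later ones. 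The paper handles this by first proving, via a Finite Ramsey argument combined with a minimax/zero-sum-game step, that single secretary remains $1/e$-hard even when the instance is drawn from a distribution fully known to the algorithm in advance; a generic invocation of Yao's principle on an ad hoc ``heavy-tailed'' family, as in your sketch, does not give this stronger form of hardness.
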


The two results fully characterize the landscape of the secretary problem in the high-girth setting, essentially showing that the problem becomes
as easy as the single secretary problem in the limit.

To prove the lower bound in Theorem \ref{lowerboundthm}, we first show that weighted single secretary, also known as a cardinal secretary, is hard ``on average'' in the following sense. We demonstrate the existence of a finite distribution over the instances of this problem such that even if adversary samples instances, as opposed to choosing them in a worst-case manner, from that distribution, it is still hard for any algorithm to choose the maximum weight element.

We then construct a high girth bipartite graph based on Ramanujan graphs, where vertices of one bipartition have same degree $d$. The hard distribution of instances of weighted single secretary of size $d$ can now be sampled independently and embedded on the edges incident to each degree-$d$ vertex on one side of the constructed bipartite graph. Given an algorithm that performs well on this graph, we obtain an algorithm performing well on the original distribution of weighted single secretary by simulating the first algorithm and mimicking the choices that it makes on a single degree-$d$ vertex.

Our construction of the input graph in addition to high-girth Ramanujan graphs uses the probabilistic method. Our construction of the sets of hard weights is based on an extension of a similar argument from an infinite to a finite version of Ramsey theorem. We employ zero-sum games duality argument to show an existence of the final probabilistic distribution over the instances.

\section{Related work}

Many other derivations and specific cases of the general matroid secretary problem have been given attention over years. Hajiaghayi, Kleinberg and
Parkes~\cite{HajiaghayiKP04} had first introduced the
{\em multiple-choice value version} of the problem, aka the uniform matroid secretary problem, in which the goal
is to maximize the expected sum of the at most $k$ selected numbers.
Kleinberg~\cite{kleinberg2005multiple} later presented a tight
$(1-O(\sqrt{1/k})^{-1})$-competitive algorithm for the $k$ uniform secretary resolving an open problem of~\cite{HajiaghayiKP04}. Transversal matroids have been considered first in~\cite{babaioff2007matroids} who gave $4d$-competitive algorithm, where $d$ is the degree of the transversal matroid. This was improved by Dimitrov and Plaxton~\cite{dimitrov2008competitive} who showed a ratio of $16$ for all transversal matroids.  
For the laminar matroids, a long line of work led to $3\sqrt{3}e$ competitive ratio~\cite{im2011secretary, jaillet2013advances, ma2016simulated}.
The challenging general class of regular matroids was proven to admit $9e$-competitive algorithm~\cite{dinitz2014matroid}.

Other generalizations of the secretary problem such as the submodular variant have been initially studied by the Bateni, Hajiaghayi, and ZadiMoghaddam~\cite{BHZ13} and  Gupta, Roth, Schoenebeck, and
Talwar~\cite{DBLP:conf/wine/GuptaRST10}. The connection between the secretary problem and online auction
mechanisms has been explored by 
Kesselheim et al.~\cite{KesselheimRTV13}, 
who
give a
$e$-competitive solution to the online bipartite weighted matching problem.

The \emph{prophet secretary} problem is another well-studied variant of the secretary problem, closely related to prophet inequalities. In the \emph{prophet inequality} setting, introduced by Krengel and Sucheston~\cite{krengel1977semiamarts,krengel1978semiamarts}, we know the distributions of \( n \) arriving items and aim to maximize the expected ratio of the selected item’s value to the sequence maximum, with a tight competitive ratio of \(2 \). Research connecting prophet inequalities and online auctions, initiated by Hajiaghayi, Kleinberg, and Sandholm~\cite{hajiaghayi2007automated}, led to follow-up studies such as Alaei, Hajiaghayi, and Liaghat’s work~\cite{AHL13} on the bipartite matching variant of prophet inequality, also achieving a competitive ratio of \( 2 \)~\cite{AHL13}. Feldman et al.~\cite{feldman2015combinatorial} expanded the problem to combinatorial auctions with multiple buyers, achieving the same bound through a posted pricing scheme, and Kleinberg and Weinberg~\cite{KW-STOC12} extended this result to matroids with a \(2 \)-competitive algorithm. 
The prophet secretary model, introduced by Esfandiari, Hajiaghayi, Liaghat, and Monemizadeh~\cite{EHLM17}, assumes a random arrival order and known item distributions. They designed an algorithm achieving a competitive factor of \( (\frac{e}{e-1}) \), which has proven challenging to improve. However, Azar et al.~\cite{ACK18} and Correa et al.~\cite{CorreaSZ19,DBLP:journals/mp/CorreaSZ21} improved this bound to $(1-1/e + 1/30)^{-1} \approx 1.502$. For the single-item i.i.d. case, Abolhasani et al.~\cite{abolhassani2017beating} achieved a \( 1.37 \)-competitive ratio, later improved to \( 1.342 \) by Correa et al.~\cite{correa2017posted}. Recently, Peng and Tang~\cite{PT22} achieved a $1.379$-competitive algorithm for the {\em free order} case, however finding the tight competitive bound for the general prophet secretary problem remains an important open problem.

Last but not least, extensions beyond matroids for the secretary problem have also been studied. For example, Kleinberg and Weinberg~\cite{KW-STOC12} provided an $O(p)$-competitive algorithm for the intersection of \( p \) matroids, later generalized to polymatroids by Dütting and Kleinberg~\cite{dutting2015polymatroid}. Rubinstein~\cite{rubinstein2016beyond} and Rubinstein and Singla~\cite{RS-SODA17} considered prophet inequalities and secretary problems in arbitrary downward-closed set systems. For such settings, Babaioff et al.~\cite{babaioff2007matroids} proved a lower bound of \( \Omega(\log n \log \log n) \), and further studies have explored combinatorial optimization applications~\cite{DEHLS17,garg2008stochastic,gobel2014online,Meyerson-FOCS01}.

\section{Preliminaries}
\paragraph{Graph Notation.} In this paper, we assume all graphs are undirected, weighted, and may contain multiple edges between the same pair of vertices. Specifically, a graph is defined as a triple \( G = (V, E, w : E \rightarrow \mathbb{R}) \), where \( V \), with \( |V| = n \), is the set of vertices; \( E \), with \( |E| = m \), is the multiset of edges; and \( w \) assigns weights to the edges. We assume graphs do not contain loops (i.e., edges connecting a vertex to itself\footnote{As explained in the following paragraph, such edges are irrelevant in the context of graphical matroids.}). Given any vertex $u \in V$, we denote the degree of $u$ by $deg_G(u)$. A graph is called \textit{simple} if there is at most one edge connecting any pair of vertices. For a graph \( G \), we denote \( g \) as the \textit{girth} of \( G \), representing the length of its shortest cycle. In graphs with multiple edges, a cycle is defined as any multiset of edges \( C = \{(a_{1}, b_{1}), \ldots, (a_{k}, b_{k})\} \), for \( k \geq 2 \), such that \( \forall_{1 \leq i \leq k-1} \, b_{i} = a_{i+1} \) and \( b_{k} = a_{1} \). For an integer \( g \geq 2 \), let \( \mathcal{G}_{g} \) denote the set of all graphs with girth at least \( g \). We also denote $[n] =\{1,2,\ldots,n\}$.

\paragraph{Matroids.} A \textit{matroid} \( M = (E, I) \) is a combinatorial structure that generalizes independence. It consists of a finite set \( E \) and a collection \( I \) of independent subsets of \( E \) satisfying: (1) the empty set is independent, (2) any subset of an independent set is also independent, and (3) if one independent set is larger than another, an element from the larger set can extend the smaller one while preserving independence. These properties capture the concept of independence, making matroids useful for modeling optimization problems where we seek a maximum-weight or maximum-cardinality independent subset of elements.

\paragraph{Graphic Matroids.} A \textit{graphic matroid} \( \mathcal{M}(G) \) associated with a graph \( G \) is defined as follows: the elements of the matroid are the edges in the multiset \( E \), and the independent sets are all acyclic subgraphs (i.e., subsets of edges) of \( G \). The weight of an independent set is the sum of the weights of its edges.

\paragraph{Problems.} Consider a graphic matroid \( \mathcal{M}(G) \) associated with a \textit{multigraph} \( G = (V, E, w : E \rightarrow \mathbb{R}) \). In the online secretary problem on graphical matroids, the elements of \( E \) are presented to the algorithm in a random order, chosen uniformly from all possible permutations of the multiset \( E \). Elements arrive one at a time, effectively creating \( m \) time steps during the algorithm’s execution. Upon the arrival of an element, the algorithm must decide whether to accept or reject it, with the constraint that an element can only be accepted if it forms an independent set with the already accepted elements. Decisions are irrevocable.

The objective is to design an algorithm that maximizes the expected sum of the weights of the accepted elements, referred to as the algorithm’s gain. For an algorithm \( ALG \), we denote its (random) gain by \( ALG \) and its expected gain by \( \mathbb{E}(ALG) \).

Let \( OPT \) denote the maximum weight of an independent set in the matroid \( \mathcal{M}(G) \). We say an algorithm \( ALG \) is \textit{\(\alpha\)-competitive} (or has an \(\alpha \) competitive ratio, for some \(\alpha \geq 1 \)) for a family of matroids if
\[
\alpha \cdot \mathbb{E}(ALG) \geq OPT
\]
for all matroids in that family.

\section{Improved bounds for graphic matroid secretary: proof of Theorem~\ref{mainthm}}
In this section, we prove Theorem \ref{mainthm} by proving Theorems \ref{thm:graph-algorithm} and \ref{thm:simple-graph-algorithm}. To prove the former, we present an algorithm that attains a competitive ratio less than $3.95$ for the graphic matroid secretary problem, surpassing a result of Soto, Turkieltaub, and Verdugo \cite{SotoTV18} attaining a competitive ratio of $4$ that previously stood for six years as the best result achieved for graphic matroid secretary; this algorithm is described in Section \ref{section:multigraph}. To prove the latter, we present a slight modification of our algorithm that attains a competitive ratio less than $3.77$ for the graphic matroid secretary problem in the special case of simple graphs; this algorithm is described in Section \ref{section:simple}.

\subsection{Improvement for general graphs}
\label{section:multigraph}

\paragraph{Overview.}

\begin{algorithm}[!ht]
  Let $E'$ be the first $m'$ edges \hfill \Comment{set of observed edges}
  $\accset \gets \emptyset$ \hfill\Comment{set of accepted edges}
  
  $\forall v\in V$: $seen\_outgoing(v)\gets False$\\
  $\forall v\in V$: $taken\_outgoing(v)\gets False$\\
  \For{$t \in \cbr{m' + 1, \dots, m}$} {
    Let $e_t$ be the edge arriving in time $t$\\
    Add $e_t$ to $E'$ \\
    $\optmatch_t \gets $ maximum weighted directed
    forest on $G[E']$\\
    $\forall v \in V: outgoing(v) \gets $ the edge directed away from $v$ in $\optmatch_t$ if it exists\\
    $\forall v \in V$ such that $outgoing(v) = null : outgoing(v) \gets $ a unique edge in $E'$ not in $\optmatch_t$\\
    \If{$e_{t} \in \optmatch_{t}$} { 
    $e_{t} \gets (u, v)$ \hfill
    \Comment{where $e_t$ is directed from $u$ to $v$ in $\optmatch_{t}$
    }
    $(a, b) \gets $ either $(u, v)$ or $(v, u)$ with equal probability
    
    \If{$seen\_outgoing(a) = False$ and $taken\_outgoing(b) = False$}
    {
      $taken\_outgoing(u) \gets True$ \\
      add $e_{t}$ to $\accset$
    }
    
  }
  \If{$outgoing(w) = e_t$ for some $w \in V$} {
    $seen\_outgoing(w) \gets True$
  }
  }
  \caption{New algorithm for graphic matroid secretary.}
  \label{alg:graph}
\end{algorithm}

The pseudocode of the algorithm is provided in Figure~\ref{alg:graph}. The algorithm consists of two phases. During the first $m'$ steps (where $m'$ depends only on $m$), edges are only observed without being taken. In later steps, let $e$ be the presented edge. The algorithm computes a maximum spanning forest $\optmatch_t$ of all edges seen so far. This forest is directed away from the root, so that every edge is directed and each node has at most one outgoing edge. We additionally compute an array of \textit{outgoing} edges: for each node $v$, if $v$ has an outgoing edge in $\optmatch_t$, then that it is its outgoing edge; if it does not, then we choose an arbitrary edge not already assigned to be the outgoing edge of another vertex \footnote{Note that if $m$ is too small (specifically, if $m' + 1 < n$) then there may not exist sufficiently many edges for this to be possible. However, if this is the case, we can simply mix in "dummy edges" that the algorithm can treat the same as real edges in order to cause $m' + 1$ to be at least $n$. The details of this mixing in are described in the proof of Theorem \ref{thm:graph-algorithm}.} and declare that to be the outgoing edge from $v$. Note that said edge does not even have to have $v$ as an endpoint. Also note that crucially, the choice of edges declared to be outgoing for the vertices that need it depends on the set of edges observed so far but not their order.

We only consider $e$ if it is in $\optmatch_t$. In this case, it points from some $u$ to some $v$, and so is the outgoing edge for $u$. In order to take $e$, we randomly order $u, v$ as $a, b$ and impose the following constraints on $e$:
\begin{itemize}
    \item No edge has been taken that was the outgoing edge for $b$ when it was presented.
    \item No edge has appeared that was the outgoing edge for $a$ when it was presented.
\end{itemize}
If $e$ satisfies both constraints then it is taken. Otherwise, the algorithm still notes that $e$ could have been selected for $u$.

The distinct constraints used for $a, b$ are key -- the stronger condition applied to endpoint $a$ allows us to lower bound the probability that an edge is blocked from being taken, which can then be used to show an increased probability that the weaker condition for endpoint $b$ is satisfied for an edge that we would like to be taken. The randomness in ordering $u, v$ into $a, b$ is also essential. To see why, consider the specific example of an edge $e_t \in OPT$ with endpoints $u, v$ that is presented at step $t$, and consider some step $j < t$. Suppose that $\optmatch_j$ contains an edge $e^u$ outgoing from $u$ to a third node $x$ and an edge $e^v$ outgoing from $v$ to a fourth node $y$. We would like to argue that in the case that either $e^u$ or $e^u$ is the edge $e_j$ presented at step $j$, there is a possibility that some outgoing edge from $x$ or $y$ respectively had already been presented prior to step $j$, meaning that $e^u$ or $e^v$ could not have been taken.

\begin{figure}[t!]
    \centering
    \includegraphics[width=1\textwidth]{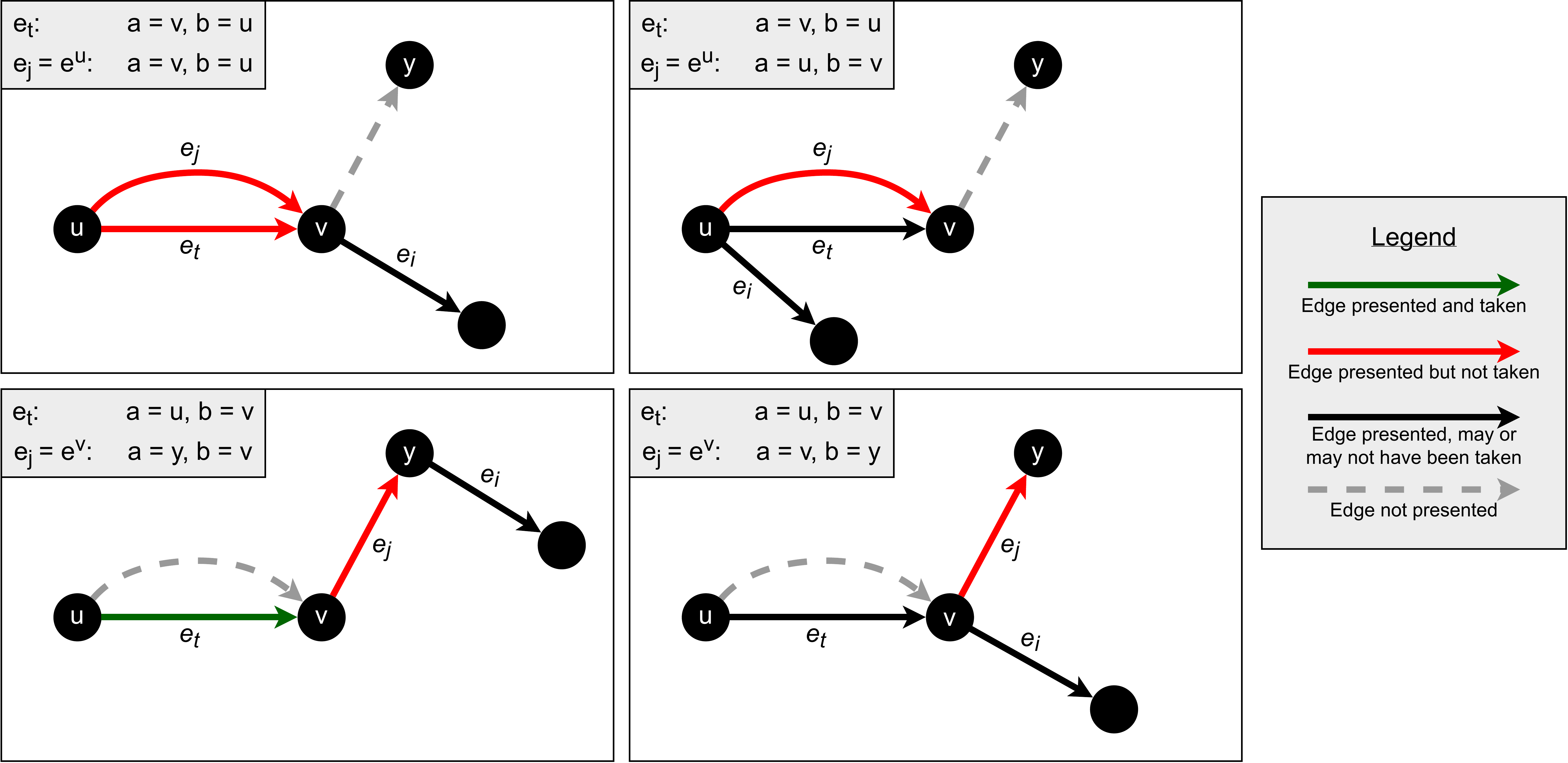} 
    \caption{The described "worst-case" example for Algorithm \ref{alg:graph}. At step $t$, we are presented with an edge $e_t$ in the optimum solution. $e_t$ may be blocked from being taken due to an earlier step $j$. We suppose that the outgoing edge $e^u$ from $u$ in $\optmatch_j$ goes to $v$ while the outgoing edge $e^v$ from $v$ in $\optmatch_j$ goes to a third vertex $y$; these are the only possible values of $e_j$ with a potential to block $e_t$. All of $e_t, e^u, e^v$ are depicted. Additionally depicted is an edge $e_i$ outgoing from the endpoint of $e_j$ selected as $b$ -- this edge being presented at step $i$ would block $e_j$ from being taken. Each of the four images depicts one equally likely possibility for the random choice of $(a, b)$ in the case of $e_t$ as well as in the case of $e_j$, where $e_j$ is assumed to be the edge outgoing from $b$ in $\optmatch_j$ (as this is the only case relevant for an increased probability arising from $e_j$ not being taken). Only in one case is $e_t$ guaranteed to be taken (assuming steps other than those depicted do not block $e_t$ from being taken).}
    \label{fig:four_cases}
\end{figure}

This means that, for example, if $e^u$ were the edge presented at step $j$, we would desire that, in the case of edge $e_t$, $b = u$, and in the case of edge $e_j = e^u$, $a = x$ (and so $b = u$), so that $e^u$ could be blocked by an edge outgoing from $x$ simply being presented, while $e_t$ is then not blocked because $e^u$ was not actually taken. This suggests that we could simply always take $b = u$ and $a = v$, applying the stronger condition to the endpoint which the edge is directed towards.

However, as in general graphs it is possible for multiple edge to have the same endpoints, it is possible that we in fact have $x = v$, in which case an edge outgoing from $x$ would block $e_t$ itself. This is illustrated in the top-left quadrant of Figure \ref{fig:four_cases}. Therefore, in order to guarantee an improvement, we crucially allow for the possibility that in the case of edge $e_t$, $b = v$, and in the case of edge $e_j = e^v$, $a = y$ -- if $e^u$ has the same endpoints as $e$, then $e^v$ cannot as $\optmatch_j$ is a spanning tree, and so $y$ will necessarily be distinct from $u$, meaning that there is a possibility for $e^v$ to be blocked by an edge outgoing from $y$ while still allowing $e_t$ to be taken. This is illustrated in the bottom-left quadrant of Figure \ref{fig:four_cases}.

\paragraph{Analysis.}

We aim to prove the following theorem:
\begin{theorem}\label{thm:graph-algorithm}
    There exists a $3.95$-competitive algorithm for graphic matroid secretary.
\end{theorem}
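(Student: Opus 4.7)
My strategy is to show that Algorithm \ref{alg:graph} accepts each edge $e^* \in \optforest$ of the maximum weight spanning forest of $G$ with probability at least $1/\alpha$ for some $\alpha < 3.95$; since the algorithm maintains independence of its accepted set $\accset$, linearity of expectation then yields $\E[ALG] \geq \frac{1}{\alpha}\sum_{e^* \in \optforest} w(e^*) = \opt/\alpha$. I would set the sampling threshold as $m' = \lfloor \beta m \rfloor$ for a constant $\beta$ to be optimized at the end, padding the instance with zero-weight dummy edges when $m'+1 < n$ so that the outgoing-edge assignment in Algorithm \ref{alg:graph} is always well-defined, as noted in the footnote.

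Fix $e^* = (u,v) \in \optforest$ and let $T$ be its random arrival time. The first useful observation is the hereditary property of maximum spanning forests: because $e^*$ lies in the global max spanning forest, it also lies in the max spanning forest of any subset of edges containing it, so $e^* \in \optmatch_t$ holds deterministically whenever $T \leq t$. Conditioning on $T = t > m'$, acceptance of $e^*$ therefore reduces purely to the event that $seen\_outgoing(a) = False$ and $taken\_outgoing(b) = False$ at time $t$, where $(a,b)$ is drawn uniformly from $\{(u,v),(v,u)\}$ by the algorithm. Writing $p(a,b)$ for this conditional probability, the target quantity is $\frac{1}{2}[p(u,v) + p(v,u)]$.

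The strict improvement over the $4$-competitive analysis of Soto, Turkieltaub, and Verdugo comes from exploiting the asymmetry between the ``seen'' condition at $a$ and the ``taken'' condition at $b$. A baseline bound would replace $taken\_outgoing(b)$ by $seen\_outgoing(b)$, reducing $p(a,b)$ to an event depending only on whether any $e_j$ with $j < t$ was the outgoing edge for $a$ or $b$ in $\optmatch_j$, and this already suffices to recover ratio $4$. The extra slack is the strictly positive gap $\Pr[taken\_outgoing(b) = False] - \Pr[seen\_outgoing(b) = False] > 0$: a candidate blocker $e_j$ that was outgoing for $b$ in $\optmatch_j$ is only lethal if the algorithm actually took it, which in turn requires $e_j$ to pass its own two conditions under its own independent coin $(a_j,b_j)$. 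I would quantify this by fixing a canonical bad step $j < t$, exposing the independent choice $(a_j,b_j) \in \{(u_j,v_j),(v_j,u_j)\}$ for $e_j$, and showing that with positive probability some still earlier edge $e_i$ has already triggered $seen\_outgoing(a_j)$ or $taken\_outgoing(b_j)$, thereby sparing $e^*$. The outer randomization of $(a,b)$ for $e^*$ is crucial to handle the multigraph pathology of Figure \ref{fig:four_cases}: in the quadrant where the relevant outgoing edge $e^u$ shares both endpoints with $e^*$, the alternative choice $(a,b)=(v,u)$ for $e^*$ together with a suitable pair for the blocker $e^v$ produces a third distinct vertex $y \neq u$ capable of blocking $e^v$ without obstructing $e^*$, so a positive gain survives in every configuration.

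Finally I would sum $\Pr[e^* \in \accset \mid T=t]$ over $t \in \{m'+1,\dots,m\}$, pass to the $m \to \infty$ limit so the sums become integrals in the variable $x = t/m$ with lower bound $\beta$, and minimize the resulting competitive ratio over $\beta$; a short calculus exercise then pushes the constant strictly below $3.95$. The main obstacle, I expect, is the probabilistic bookkeeping sketched in the previous paragraph: one must track, at each potential blocker step $j$, the joint law of the outgoing-edge identity, the independent coin $(a_j,b_j)$, and the conditioning from still earlier edges, and then argue that these sources of slack accumulate coherently over $j$ without being annihilated by correlations or cancelled by double counting in an inclusion--exclusion estimate. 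Making this combinatorial argument rigorous in the presence of parallel edges --- which prevent any clean reduction to the simple-graph case and give rise to exactly the pathology of Figure \ref{fig:four_cases} --- is the technical heart of the proof.
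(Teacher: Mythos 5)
Your plan matches the paper's proof at every structural level: a per-edge acceptance probability bound for each element of $OPT$, the hereditary property of maximum spanning forests ensuring $e^\ast \in \optmatch_t$ whenever $e^\ast$ has arrived, extracting slack from the \emph{seen}/\emph{taken} asymmetry, the random $(a,b)$ as the shield against parallel-edge pathologies (cf.\ Figure~\ref{fig:four_cases}), and a final integral-and-optimize step over the sampling threshold. The ``probabilistic bookkeeping'' you flag at the end is indeed the one genuine difficulty, and the paper avoids the inclusion--exclusion you fear via a clean one-step recursion on three nested quantities: $f_{u,S}(i)$ (probability $u$ is unseen after step $i$, which equals $m'/i$ exactly), $g_{u,v,S}(i)$ (probability both $u,v$ are unseen, lower bounded by $m'(m'-1)/(i(i-1))$), and the acceptance probability $h_{u,v,S}(i)$, which satisfies
\[
h(i) \;\geq\; \Bigl(1-\tfrac{2}{i}\Bigr)\,h(i-1) \;+\; \tfrac{1}{i}\cdot\tfrac14\cdot\bigl(1-f(i-1)\bigr)\cdot g(i-1),
\]
where the non-recursive term accounts for the event that the step-$i$ edge is the outgoing edge from (say) $u$ toward a third vertex $w$, that both independent coins land favorably (factor $\tfrac14$), that $w$'s own outgoing edge was already seen so the would-be blocker is absorbed (factor $1-f$), and that both $u,v$ were still unseen prior to step $i$ (factor $g$). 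Unrolling this recursion implicitly partitions on the \emph{last} step at which an outgoing edge of $u$ or $v$ appears, so the contributions from distinct candidate blocker steps $j$ are mutually exclusive by construction and no double counting or correlation correction is needed; plugging in $\delta = \tfrac14$ and threshold fraction $\alpha = 0.4914$ yields a limiting per-edge acceptance probability above $0.2536$, i.e.\ a competitive ratio below $3.95$. So your proposal is the right approach; what it leaves open is exactly what the $f,g,h$ recursion is designed to close.
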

We first show that any set of edges accepted by Algorithm \ref{alg:graph} is independent, meaning that the algorithm is a valid graphic matroid secretary algorithm.
\begin{lemma}\label{lemma:graph-algorithm-independent}
    The set $\accset$ of accepted edges in Algorithm \ref{alg:graph} is always an independent set in the graphic matroid.
\end{lemma}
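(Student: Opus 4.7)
The plan is to exhibit a simple structural invariant on $\accset$. For any edge $e$ accepted at time $t$, call the \emph{source} of $e$ the vertex from which it is directed in $\optmatch_t$. I would orient each edge of $\accset$ from its source to its target, show the resulting digraph has out-degree at most one at every vertex (a pseudoforest), and then argue that any undirected cycle in a pseudoforest is necessarily a directed cycle, at which point considering the last-accepted edge of the cycle yields a contradiction.

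\textbf{Key flag observation.} The technical heart of the proof is showing that whenever an edge is accepted with source $u$, \emph{both} $seen\_outgoing(u)$ and $taken\_outgoing(u)$ become True at that step. The latter is set explicitly in the accept branch. The former is set by the post-iteration update: since $e_t\in\optmatch_t$ is directed from $u$, the edge $e_t$ is precisely $outgoing(u)$ at time $t$, so the line ``if $outgoing(w) = e_t$ for some $w$'' triggers with $w=u$. Given this, when the acceptance check for $e_t=(u,v)$ succeeds, the conditions $seen\_outgoing(a)=False$ and $taken\_outgoing(b)=False$ with $\{a,b\}=\{u,v\}$ together imply that no previously accepted edge has $u$ or $v$ as its source; otherwise, the corresponding flag at that endpoint would already be True by the observation. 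Iterating over all acceptances, each vertex is the source of at most one edge of $\accset$, so the source-orientation of $\accset$ is a pseudoforest.

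\textbf{Pseudoforest to forest.} Suppose for contradiction that $\accset$ contains a cycle, and consider its underlying undirected cycle on vertices $v_1,\dots,v_k$. Because each vertex has out-degree at most one globally, each vertex has out-degree at most one restricted to the $k$ cycle edges; yet the $k$ edges contribute exactly $k$ to the total cycle-out-degree. Thus every cycle vertex has its single outgoing edge lying on the cycle, which forces the cycle to be directed as $v_1\to v_2\to\cdots\to v_k\to v_1$. Letting $f_j=(v_j\to v_{j+1})$ be the edge of this cycle accepted last, the earlier-accepted edge $f_{j+1}$ has $v_{j+1}$ as its source, which contradicts the conclusion from the previous paragraph applied at the moment $f_j$ is accepted (namely, that no previously accepted edge has $v_{j+1}$ as source).

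\textbf{Main obstacle.} The subtle point is the asymmetry between the two flags: $taken\_outgoing$ flips only on an acceptance, whereas $seen\_outgoing$ flips whenever the arriving edge happens to be some vertex's outgoing edge, independent of whether it is accepted. Getting the argument right therefore hinges on the key flag observation, which shows that an acceptance forces the ``appearance'' event at the source as well, and this is exactly what makes the algorithm's randomized check with two different conditions sufficient to forbid a previously sourced edge at \emph{both} endpoints. Once this bridge is in place, the pseudoforest-to-forest reduction is routine.
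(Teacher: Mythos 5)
Your proof is correct and takes essentially the same approach as the paper: both establish that each accepted edge's source is unique (via the observation that accepting an edge at $u$ forces both flags at $u$ to become True, so the acceptance check rules out any previously-sourced edge at either endpoint), and both derive the contradiction by examining the last-arriving cycle edge. Your reformulation via the source-orientation directed cycle is a slightly cleaner packaging of the paper's pigeonhole counting (all $l$ cycle vertices must be sources of cycle edges), but the underlying argument is the same.
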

\begin{proof}
    Proof by contradiction. Suppose that there exists a cycle $C$ contained within $A$. Recall that for each edge we take, we declare it to be selected by one of its endpoints, and that an edge $(u, v)$ cannot be taken if for either $u$ or $v$ we have taken an edge outgoing from that vertex either (the exact condition is stronger, as for one of the endpoints we demand that we cannot have \textit{seen} an edge outgoing from that vertex). The two statements below follow:
    \begin{itemize}
        \item Each taken edge is outgoing from a unique node, which must be one of its endpoints, meaning that if $C$ has length $l$, then because $C$ contains $l$ edges whose endpoints are all in $C$, for all $l$ vertices in $C$ we must have taken an outgoing edge from that vertex.
        \item Let $e = (u, v)$ be the edge in $C$ that was presented at the latest time, and let this time be $t$. Prior to step $t$, we had not taken an edge outgoing from either $u$ or $v$. Thus, following step $t$, for one of $u, v$ we have taken the outgoing edge $e$, and for the other we still have not taken an outgoing edge; as a consequence, we never took an outgoing edge from that vertex in $C$.
    \end{itemize}
    These statements contradict each other, and so the proof is complete.
\end{proof}

We now proceed to demonstrate that Algorithm \ref{alg:graph} has the desired approximation factor for an appropriate choice of $m'$. We first define quantities $f, g, h$ below that will be crucial for the analysis:
\begin{definition}
    Define $f_{u, S}(i)$ to be the probability that after step $i$, $seen\_outgoing(u) = False$, given that $S$ is the set of edges that are presented in steps $1, \ldots, i$.
\end{definition}
\begin{definition}
    Define $g_{u, v, S}(i)$ to be the probability that after step $i$, both $seen\_outgoing(u) = False$ and $seen\_outgoing(v) = False$, given that $S$ is the set of edges that are presented in steps $1, \ldots, i$.
\end{definition}
\begin{definition}
    Define $h_{u, v, S}(i)$ to be the probability that after step $i$, if we take $(a, b)$ to be either $(u, v)$ or $(v, u)$ with equal probability, then $seen\_outgoing(a) = False$ and $taken\_outgoing(b) = False$, given that $S$ is the set of edges that are presented in steps $1, \ldots, i$.
\end{definition}
The condition used to define $h$ is the same condition checked to see whether an edge between $u$ and $v$ can be taken.

A key tool for our analysis is the following. Note that for any node $u$, $selected(u)$ is assigned when at some step $j$ we observe an edge $e$ such that $e$ is in $\optmatch_j$ and is directed from $u$. Importantly, $\optmatch_j$ depends only on the set of edges presented up to step $j$ and not their order. Therefore, conditioned on the set of edges that appears in steps $1, \ldots, j$, the probability that at step $j$ we are presented with the outgoing edge from $u$ is at most $\frac 1 j$ because there is exactly one such edge, and this edge, if it exists, is presented at step $i$ with probability $\frac 1 i$.

The above idea is applied in the following lemma:
\begin{lemma}\label{lemma:graph-algorithm-f}
    For all vertices $u$, all $i$ from $m'$ to $n$, and all $S$ with $|S| = i$, $f_{u, S}(i) = \frac {m'} i$.
\end{lemma}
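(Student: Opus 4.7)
The plan is induction on $i$, using the key structural observation already highlighted in the paragraph preceding the lemma: the forest $\optmatch_j$, and consequently the edge declared $outgoing(u)$ at the state after $j$ arrivals, depends only on the multiset of edges seen so far and not on their order of arrival. Denote this edge by $o_u(S_j)$.

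The base case is $i = m'$. The main \textbf{for}-loop of Algorithm \ref{alg:graph} does not fire for $t \leq m'$, so no $seen\_outgoing(\cdot)$ flag is flipped during the observation phase, giving $f_{u, S}(m') = 1 = m'/m'$ for every $S$ with $|S| = m'$.

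For the inductive step, I would use the backward-sampling view of a uniformly random permutation: conditional on $S_i = S$, the last-arriving edge $e_i$ is uniformly distributed over the $i$ elements of $S$, and conditional on the value of $e_i$, the prefix $(e_1, \ldots, e_{i-1})$ is a uniformly random permutation of $S \setminus \{e_i\}$, independent of $e_i$. The event $\{seen\_outgoing(u) = False \text{ after step } i\}$ equals the intersection of $\{seen\_outgoing(u) = False \text{ after step } i-1\}$ with $\{e_i \neq o_u(S)\}$, because at step $i$ the flag can flip to $True$ only when the arriving edge is the (uniquely determined) outgoing edge of $u$ in the current state. Conditioning on $e_i = e$ and invoking the inductive hypothesis for the set $S \setminus \{e\}$ of size $i - 1 \geq m'$,
\begin{equation*}
    f_{u, S}(i) \;=\; \frac{1}{i} \sum_{e \in S \setminus \{o_u(S)\}} f_{u, S \setminus \{e\}}(i - 1) \;=\; \frac{1}{i} \cdot (i-1) \cdot \frac{m'}{i-1} \;=\; \frac{m'}{i},
\end{equation*}
since exactly $i - 1$ of the $i$ equally likely values of $e_i$ are distinct from the single edge $o_u(S)$.

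The only subtle point I would verify carefully is that, once we fix $e_i = e$, the subprocess over steps $1, \ldots, i - 1$ has exactly the distribution used to define $f_{u, S \setminus \{e\}}(i - 1)$, so the inductive hypothesis applies as stated. This is immediate from the exchangeability of the uniform random permutation on $S$: after conditioning on the last arrival, the first $i - 1$ arrivals form a uniformly random permutation of $S \setminus \{e\}$, and the value of $seen\_outgoing(u)$ after step $i - 1$ is measurable with respect to this prefix alone. Beyond this check, the argument is purely combinatorial and I do not foresee further obstacles.
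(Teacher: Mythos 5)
Your proof is correct and follows essentially the same inductive scheme as the paper's: both rely on the key observation that the outgoing edge for $u$ is determined by the set (not order) of edges seen, so conditioned on $|S| = i$ the flag flips at step $i$ with probability exactly $1/i$, and the inductive hypothesis is applied to the size-$(i-1)$ prefix sets. Your presentation is slightly more careful about the conditioning structure (explicitly writing the sum over $e_i \in S \setminus \{o_u(S)\}$ and invoking exchangeability of the prefix), whereas the paper phrases the same step as a factor $(1 - 1/i)$ times a weighted average of $f_{u,S'}(i-1)$ terms, but the underlying argument is identical.
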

\begin{proof}
    The proof is by induction. First note that as no edges are taken in the first $m'$ steps, $f_{u, S}(m') = 1 = \frac {m'} {m'}$. Then, for any $i > m'$, as previously described, $seen\_outgoing(u) = null$ after step $i$ iff $seen\_outgoing(u) = null$ before step $i$ and the edge presented at step $i$ is not the outgoing edge from $u$ based on $\optmatch_i$. As there is exactly one such edge, the probability that it appears at step $i$ is $\frac 1 i$, and so the probability that it does not is $1 - \frac 1 i$. $f_{u, S}(i)$ is therefore $1 - \frac 1 i$ multiplied by a weighted average of $f_{u, S'}(i - 1)$ over various sets $S'$ where $S'$ depends on what edge is presented at step $i$; as any such $f_{u, S'}(i - 1)$ is equal to $\frac m {i - 1}$, we can continue to state that $f_{u, S}(i) = (1 - \frac 1 i)\frac m {i - 1} = \frac {i - 1} i \cdot \frac m {i - 1} = \frac m i$ as desired.
\end{proof}
The next lemma uses the same idea to derive a similar expression for $g$:
\begin{lemma}\label{lemma:graph-algorithm-g}
    For all vertices $u, v$, all $i$ from $m'$ to $n$, and all $S$ with $|S| = i$, $g_{u, v, S}(i) \geq \frac {m'(m' - 1)} {i(i - 1)}$.
\end{lemma}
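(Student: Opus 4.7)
The plan is to mirror the proof of Lemma \ref{lemma:graph-algorithm-f} and argue by induction on $i$, with base case $i = m'$. For the base case, note that no edges are accepted during the first $m'$ steps, and so $seen\_outgoing$ is $False$ for every vertex; thus $g_{u,v,S}(m') = 1$, which matches $\frac{m'(m'-1)}{m'(m'-1)}$.

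For the inductive step, I would first make the following key structural observation: given any prefix of observed edges determining $\optmatch_i$ and the arrays $outgoing(\cdot)$, the assigned outgoing edge is unique across vertices. Specifically, distinct vertices $u \neq v$ have distinct outgoing edges in $\optmatch_i$ whenever both exist (since each tree edge is directed away from a single endpoint), and the algorithm explicitly assigns a \emph{unique} edge not in $\optmatch_i$ to each vertex lacking an outgoing tree edge. Hence there are exactly two edges whose arrival at step $i$ would flip either $seen\_outgoing(u)$ or $seen\_outgoing(v)$.

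Next I would condition on the identity of the edge $e_i$ presented at step $i$, given that the unordered set of the first $i$ edges is $S$. By symmetry this identity is uniform over $S$, so each edge in $S$ is the one presented at step $i$ with probability $\tfrac{1}{i}$. If $e_i$ is neither $outgoing(u)$ nor $outgoing(v)$ (computed from $S$), then both flags remain in whatever state they were before step $i$, and the probability that both were $False$ after step $i-1$ equals $g_{u,v,S\setminus\{e_i\}}(i-1)$, which the inductive hypothesis bounds below by $\tfrac{m'(m'-1)}{(i-1)(i-2)}$. Otherwise, the corresponding flag flips to $True$, contributing $0$. Summing over the $i-2$ safe choices of $e_i$ gives
\[
g_{u,v,S}(i) \;\geq\; \frac{1}{i} \cdot (i-2) \cdot \frac{m'(m'-1)}{(i-1)(i-2)} \;=\; \frac{m'(m'-1)}{i(i-1)},
\]
as desired.

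The only subtlety, and the step I would flag as the main point requiring care, is justifying that exactly two (and not fewer, and not more) edges in $S$ need to be avoided at step $i$ — i.e., that $outgoing(u)$ and $outgoing(v)$ are well-defined, distinct elements of $S$. This follows from the algorithm's construction of the $outgoing$ array and from the fact that $\optmatch_i$ (and hence the entire array) depends only on $S$, not on the arrival order; once this is established, the computation above is entirely routine. The edge case $m' \leq 1$ is trivial since then the claimed bound is $0$.
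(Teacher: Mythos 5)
Your proof is correct and follows essentially the same inductive approach as the paper: base case at $i = m'$ with both flags $False$ since no edges are observed earlier, and for the inductive step, conditioning on the identity of $e_i$ and noting that exactly two out of the $i$ equally likely edges would set one of the two flags. You are slightly more explicit than the paper in justifying that $outgoing(u)$ and $outgoing(v)$ are well-defined and distinct elements of $S$ (the paper states the $\tfrac{2}{i}$ bound without spelling out why the two outgoing edges cannot coincide), but this is a refinement rather than a different argument.
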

\begin{proof}
    The proof is by induction. First note that as no edges are taken in the first $m'$ steps, $g_{u, v, S}(m') = 1 = \frac {m'(m' - 1)} {m'(m' - 1)}$. Then, for any $i > m'$, as previously described, $seen\_outgoing(u), seen\_outgoing(v) = null$ after step $i$ iff $seen\_outgoing(u), seen\_outgoing(v) = null$ before step $i$ and the edge presented at step $i$ is not the outgoing edge from either $u$ or $v$ based on $\optmatch_i$. As there is exactly one such edge for each of $u, v$, the probability that such an edge appears at step $i$ is $\frac 2 i$. Arguing in a similar manner to the proof of Lemma \ref{lemma:graph-algorithm-f} we can say that $g_{u, v, S}(i)$ is $1 - \frac 2 i$ multiplied by the formula for any $g_{u, v, S'}(i - 1)$, giving the following:
    \begin{align*}
        g_{u, v, S}(i) = \left(1 - \frac 2 i\right) \frac {m'(m' - 1)} {(i - 1)(i - 1 - 1)} = \frac {i - 2} i \cdot \frac {m'(m' - 1)} {(i - 1)(i - 2)} = \frac {m'(m' - 1)} {i(i - 1)}.
    \end{align*}
\end{proof}
We now proceed to lower bound $h$; we again use the same idea but now additionally take advantage of the stronger condition applied to $a$ in the algorithm to derive a bound for $h$ superior than the one we derived for $g$ by lower bounding the probability that an edge that would have been selected for one of $u, v$ was not actually taken.

As the lower bound for $h$ is more complicated, we define its lower bound as an additional function $h_\delta$.
\begin{definition}\label{def:h-delta}
    Define $h_\delta(i)$ recursively by $h_a(m') = 1$ and $h_a(i) = (1 - \frac 2 i)h_a(i - 1) + \frac \delta i \cdot \frac {m'(m' - 1)} {(i - 1)(i - 2)} \left[1 - \frac {m'} {i - 1}\right].$
\end{definition}
We parameterize $h$ by $\delta$ as $h$ will reused later with a different value of $\delta$.
\begin{lemma}\label{lemma:graph-algorithm-h}
    For all vertices $u, v$, all $i$ from $m'$ to $n$, and all $S$ with $|S| = i$, $h_{u, v, S}(i) \geq h_{\frac 1 4}(i).$
\end{lemma}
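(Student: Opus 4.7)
The plan is to prove $h_{u,v,S}(i) \geq h_{1/4}(i)$ by induction on $i$, building on the frameworks for $f$ and $g$ established in Lemmas \ref{lemma:graph-algorithm-f} and \ref{lemma:graph-algorithm-g}. The base case $i = m'$ is immediate because no edges have been taken or marked ``seen'' during the initial observation phase, so both sides equal $1$. For the inductive step, I would decompose $h_{u,v,S}(i) = \tfrac{1}{2} P_1 + \tfrac{1}{2} P_2$, where $P_1 = \Pr[\mathrm{seen}(u) = F \wedge \mathrm{taken}(v) = F \text{ after step } i]$ and $P_2$ is the symmetric statement with $u, v$ swapped (here $\mathrm{seen}$ and $\mathrm{taken}$ abbreviate the algorithm's flags \texttt{seen\_outgoing} and \texttt{taken\_outgoing}), and condition on the identity of the edge $e_i$ arriving at step $i$; each edge in $S$ is $e_i$ with probability $\tfrac{1}{i}$. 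Writing $e^u_i$ and $e^v_i$ for the outgoing edges of $u$ and $v$ at step $i$ (which are distinct by construction of $outgoing(\cdot)$), the possibilities split into three cases: (a) $e_i \notin \{e^u_i, e^v_i\}$, (b) $e_i = e^u_i$, and (c) $e_i = e^v_i$.

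Case (a) is straightforward: none of $\mathrm{seen}(u), \mathrm{seen}(v), \mathrm{taken}(u), \mathrm{taken}(v)$ changes, so the conditional probability equals $h_{u,v,S \setminus \{e_i\}}(i-1)$; summing over the $i-2$ such edges and applying the inductive hypothesis yields a contribution of at least $(1 - 2/i) h_{1/4}(i-1)$. The real work is in bounding cases (b) and (c), which together should supply the bonus term $\tfrac{1}{4i} \cdot \tfrac{m'(m'-1)}{(i-1)(i-2)}[1 - m'/(i-1)]$. I handle case (b); case (c) follows symmetrically. In case (b), $\mathrm{seen}(u)$ becomes True at step $i$, so the contribution to $P_1$ is $0$. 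For the contribution to $P_2$, one needs $e_i$'s acceptance conditions to fail so that $\mathrm{taken}(u)$ stays False; letting $x$ denote the endpoint of $e^u_i$ opposite $u$ (when $e^u_i$ is a real tree edge), I lower-bound the failure probability by restricting to the event that the algorithm samples $(a', b') = (x, u)$ (probability $\tfrac{1}{2}$) and $\mathrm{seen}(x) = T$, which combined with $\mathrm{taken}(u) = F$ from the conditioning forces failure.

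The remaining task is to lower-bound $\Pr[\mathrm{seen}(v) = F \wedge \mathrm{taken}(u) = F \wedge \mathrm{seen}(x) = T \text{ before step } i]$. I would use the monotonicity $\mathrm{seen}(u) = F \Rightarrow \mathrm{taken}(u) = F$ to strengthen the second clause, then write the resulting probability as $\Pr[\mathrm{seen}(u) = \mathrm{seen}(v) = F] - \Pr[\mathrm{seen}(u) = \mathrm{seen}(v) = \mathrm{seen}(x) = F]$. Lemma \ref{lemma:graph-algorithm-g} (applied with equality, as its proof in fact shows) identifies the first term as exactly $g(i-1) = \tfrac{m'(m'-1)}{(i-1)(i-2)}$, and a three-variable extension of the same lemma bounds the second term above by $\tfrac{m'(m'-1)(m'-2)}{(i-1)(i-2)(i-3)}$. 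The elementary inequality $\tfrac{m'-2}{i-3} \leq \tfrac{m'}{i-1}$ (valid for $m' \leq i$) then lets me conclude the difference is at least $g(i-1)[1 - m'/(i-1)] = g(i-1)[1 - f(i-1)]$. Multiplying by $\tfrac{1}{i} \cdot \tfrac{1}{2}$ gives case (b)'s contribution to $P_2$ of at least $\tfrac{1}{2i} g(i-1)[1 - f(i-1)]$; symmetrically, case (c) contributes the same to $P_1$. Adding these and dividing by $2$ produces the required bonus $\tfrac{1}{4i} g(i-1)[1 - f(i-1)]$ in $h$, which added to the case (a) contribution gives exactly $h_{1/4}(i)$.

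The main obstacle is the degenerate sub-case $x = v$: then $\mathrm{seen}(x)$ and $\mathrm{seen}(v)$ refer to the same event and the probability above trivially collapses. The resolution is combinatorial: $x = v$ forces $e^u_i$ to be an edge between $u$ and $v$ in $\optmatch_i$, and the analogous degeneracy for case (c) would force $e^v_i$ to also be an edge between $u$ and $v$; but $e^u_i \neq e^v_i$ cannot both be parallel edges on $(u, v)$ in the spanning forest $\optmatch_i$ without forming a two-cycle, which a forest forbids. Therefore at least one of cases (b), (c) escapes degeneracy and contributes the full $\tfrac{1}{2i} g(i-1)[1 - f(i-1)]$ to $P_1 + P_2$, and the factor-$\tfrac{1}{2}$ loss from possibly missing the other is precisely what forces $\delta = \tfrac{1}{4}$ in the statement rather than $\tfrac{1}{2}$. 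A final routine sub-case, when $e^u_i$ or $e^v_i$ is a fake outgoing edge not in $\optmatch_i$, is handled directly since the algorithm does not attempt to accept $e_i$ and the corresponding $\mathrm{taken}$ flag is unchanged; the contribution to the relevant $P_j$ then reduces to a full $\tfrac{1}{i} P_j(i-1)$, and $P_j(i-1) \geq g(i-1) \geq g(i-1)[1 - f(i-1)]$ dominates the bonus.
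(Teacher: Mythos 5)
Your proposal is correct, and its overall structure matches the paper's: both proceed by induction, splitting on whether the edge at step $i$ is one of the outgoing edges from $u$ or $v$, and both produce the same bonus term $\tfrac{1}{4i}g(i-1)[1 - f(i-1)]$ that makes the recursion close against Definition \ref{def:h-delta}. The one substantive point of divergence is how the probability $\Pr[\mathrm{seen}(u) = \mathrm{seen}(v) = F \wedge \mathrm{seen}(w) = T \text{ before step } i]$ is lower bounded. The paper asserts, somewhat informally, that conditioning on $\mathrm{seen}(w) = T$ can only \emph{increase} the probability that $\mathrm{seen}(u) = \mathrm{seen}(v) = F$, and so uses $(1 - f_{w})\cdot g_{u,v}$ as a product lower bound without further justification. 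You instead express the probability exactly as $g_{u,v}(i-1) - g_{u,v,w}(i-1)$ by inclusion-exclusion, introduce a three-variable extension of Lemma \ref{lemma:graph-algorithm-g} giving $g_{u,v,w}(i-1) = \tfrac{m'(m'-1)(m'-2)}{(i-1)(i-2)(i-3)}$, and finish with the elementary inequality $\tfrac{m'-2}{i-3} \leq \tfrac{m'}{i-1}$ to land on the same lower bound $g_{u,v}(i-1)[1 - f(i-1)]$. Your argument is slightly longer but has the virtue of actually \emph{proving} the positive-correlation claim the paper treats as self-evident, and it also spells out two sub-cases the paper compresses or omits: the degenerate situation where $x = v$ (or $y = u$), which the paper handles implicitly via the observation that at most one edge of $\optmatch_i$ can run between $u$ and $v$; and the sub-case where the outgoing edge is a fake non-tree edge, which the paper does not explicitly discuss. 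None of this changes the underlying method; it is the same proof carried out with a bit more care.
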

\begin{proof}
    The proof is again by induction. Like before, we have $h_{u, v, S}(m') = 1 = h_{\frac 1 4}(m')$. Then, for any $i > m'$, as in the proof of Lemma \ref{lemma:graph-algorithm-g} we have that with probability $1 - \frac 2 i$ the edge that is presented at step $i$ is not directed from either of $u, v$; given this, the probability is a weighted average of terms $h_{u, v, S'}(i - 1)$, which we can lower bound by $h_{\frac 1 4}(i - 1)$. This contributes $(1 - \frac 2 i)h_{\frac 1 4}(i - 1)$ to our lower bound.

    In the case that an edge directed away from $u, v$ is selected, note that as $\optmatch_i$ is a spanning forest, at most one edge in $\optmatch_i$ can be between $u, v$, meaning that at least one of the edges directed from $u, v$ must connect to a third vertex $w$. There is a $\frac 1 i$ probability that said edge is presented at step $i$. Suppose without loss of generality that said edge goes from $u$ to $w$. There is a $\frac 1 2$ probability that we have $b = u$; independently, there is a $\frac 1 2$ probability that for the edge $(u, w)$ itself we have that $w$ is selected as the endpoint to which to apply the stronger condition. The probability that $seen\_outgoing(w) = True$ by step $i$ is then $(1 - f_{w, S'}(i - 1))$.
    
    We therefore have an additional probability of $\frac 1 i \cdot \frac 1 4 \cdot (1 - f_{w, S'}(i - 1))$ that the edge presented at step $i$ does not violate the conditions defining $h$. We additionally require that edges prior to step $i$ did not violate said conditions. Note that we are conditioned on the fact that $seen\_outgoing(w) = True$ by step $i$; we therefore impose the stronger condition that $seen\_outgoing(u) = seen\_outgoing(v) = False$ by step $i$, as the likelihood of this is only increased by the fact that $selected(w) \neq null$. We can thus use $g_{u, v, S''}(i - 1)$ as a lower bound for said probability.

    All of this combines to give us
    \begin{align*}
        h_{u, v, S}(i)
        &\geq \left(1 - \frac 2 i\right)h_{\frac 1 4}(i - 1) + \frac 1 i \cdot \frac 1 4 \cdot (1 - f_{w, S'}(i - 1)) \cdot g_{u, v, S''}(i - 1)\\
        &= \left(1 - \frac 2 i\right)h_{\frac 1 4}(i - 1) + \frac 1 i \cdot \frac 1 4 \cdot \left[1 - \frac {m'} {i - 1}\right] \cdot \frac {m'(m' - 1)} {(i - 1)(i - 2)} &\EqComment{Lemmas \ref{lemma:graph-algorithm-f} and \ref{lemma:graph-algorithm-g}}\\
        &= h_{\frac 1 4}(i). &\EqComment{Definition \ref{def:h-delta}}
    \end{align*}
\end{proof}
As mentioned before, the conditions defining $h_{u, v, S}(i)$ are identical to the conditions for taking an edge given that it is in $\optmatch_i$. As any edge in the overall optimum is in $\optmatch_i$, lower bounding the probability of such an edge being taken allows us to lower bound the overall approximation factor. This is expressed in the following lemma.
\begin{lemma}\label{lemma:graph-algorithm-bounded-by-h-delta}
    Let $W(S)$ be the sum of the weights of edges in $S$, and recall that $\accset$ is the set of edges accepted by Algorithm \ref{alg:graph}. Then,
    \begin{align*}
        \frac {\mathbb{E}[W(A)]} {W(OPT)} \geq \frac 1 m \sum_{t = m' + 1}^m h_{\frac 1 4}(t).
    \end{align*}
\end{lemma}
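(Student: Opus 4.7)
The plan is to bound the acceptance probability of each edge in $\opt$ separately and combine via linearity. By linearity of expectation, $\mathbb{E}[W(\accset)] = \sum_{e \in E} w(e)\Pr[e \in \accset] \geq \sum_{e \in \opt} w(e)\Pr[e \in \accset]$ (edges of nonpositive weight can be discarded without loss of generality, since the algorithm could ignore them). Hence it suffices to show that for every $e \in \opt$, $\Pr[e \in \accset] \geq \tfrac{1}{m}\sum_{t=m'+1}^{m} h_{\frac{1}{4}}(t)$; dividing through by $W(\opt)$ then yields the claim.

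Fix $e = (u,v) \in \opt$. I would first establish the structural fact that whenever $e$ has been observed, $e$ lies in the current maximum spanning forest: for any $E' \subseteq E$ with $e \in E'$, one has $e \in \optmatch(G[E'])$. This is a Kruskal-style argument---if $e$ were omitted from the max forest of $E'$, its endpoints would already be joined in $E'$ by a path of strictly heavier edges, and the very same path (still present in $E$) would force $e$ out of $\opt$, a contradiction. Consequently, whenever $e$ arrives at a step $t \geq m'+1$, we have $e \in \optmatch_t$, so the algorithm enters the outer if-block and proceeds to the taken-check.

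The next step is to argue that, conditioned on $e$ arriving at step $t$ and on the set $S$ of edges presented in steps $1,\ldots,t-1$, the probability that $e$ is accepted is exactly the probability that with $(a,b)$ drawn uniformly from $\{(u,v),(v,u)\}$ the pre-step-$t$ state satisfies $seen\_outgoing(a)=\text{False}$ and $taken\_outgoing(b)=\text{False}$. This is precisely the quantity $h_{u,v,S}$ evaluates (at the appropriate index reflecting the state just before step $t$), which by Lemma~\ref{lemma:graph-algorithm-h} is at least $h_{\frac{1}{4}}(t)$. Marginalizing over $S$ preserves the bound, and since every step $t \in \{1,\ldots,m\}$ is equally likely to be the arrival time of $e$, summing over $t \in \{m'+1,\ldots,m\}$ (weighted by $\tfrac{1}{m}$) yields $\Pr[e \in \accset] \geq \tfrac{1}{m}\sum_{t=m'+1}^{m} h_{\frac{1}{4}}(t)$.

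The main subtlety I anticipate is verifying the conditional-distribution alignment in the penultimate step: one must check that, conditioned on $e$ being the edge that arrives at step $t$, the permutation of the remaining edges within the first $t-1$ positions is still uniformly random, so that the conditional distribution of the algorithm's state at the beginning of step $t$ coincides with the distribution used to define $h_{u,v,S}(\cdot)$. This symmetry is standard for uniformly random permutations but deserves care because the state variables $seen\_outgoing$ and $taken\_outgoing$ were built up using the same permutation. Once that is in hand, the rest of the argument is pure bookkeeping via linearity of expectation.
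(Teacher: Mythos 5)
Your proof is correct and takes essentially the same route as the paper's: argue that every $e \in \opt$ survives into $\optmatch_t$ when observed, identify its conditional acceptance probability with $h_{u,v,S}$, lower-bound via Lemma~\ref{lemma:graph-algorithm-h}, and sum over arrival times weighted by $\tfrac{1}{m}$. The only cosmetic differences are that you spell out the linearity-of-expectation step and the conditional-distribution alignment more explicitly than the paper does, which the paper compresses into the final sentence ``the expected ratio is at least the probability of any individual element of $\opt$ being taken.''
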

\begin{proof}
    For any element $e = (u, v)$ of $OPT$, we note as stated above that if it is presented at step $t$, then it is guaranteed to be in $\optmatch_i$. It follows that it will be taken iff when we randomly order $u, v$ as $a, b$, $seen\_outgoing(a) = False$ and $taken\_outgoing(b) = False$. Letting $S$ be the set of edges presented prior to step $t$, the probability that said conditions are satisfied conditioned on $S$ but not its order is $h_{u, v, S}(t)$. We know from Lemma \ref{lemma:graph-algorithm-h} that $h_{u, v, S}(t) \geq h_{\frac 1 4}(t)$ for any $S$. Therefore, given that $e$ is presented at step $t$, the probability that it is taken is at least $h_{\frac 1 4}(t)$.

    It then follows that the overall probability that $e$ is taken is the sum over steps after $m'$ of the probability that $e$ appears at that step and is taken. As the probability of $e$ being presented at a particular step is $\frac 1 m$, this gives the following:
    \begin{align*}
        P[e\text{ is taken}] \geq \sum_{t = m' + 1}^m \frac 1 m h_{\frac 1 4}(t).
    \end{align*}
    The expected ratio of the algorithm's output's weight to the optimal weight is at least the probability of any individual element of $OPT$ being taken, meaning that the desired conclusion follows.
\end{proof}

It now only remains to give a concrete lower bound for $\frac 1 m\sum_{t = m' + 1}^m h_{\frac 1 4}(t)$. We proceed to derive such a bound in the following lemmas.

\begin{lemma}\label{lemma:h-delta-bound}
    For any $\delta \geq 0$ and $i \geq m'$,
    \begin{align*}
        h_\delta(i) \geq \frac {m'(m' - 1)} {i^2} \left[1 + \delta \left(\ln \frac {i + 1} {m' + 1} - 1 + \frac {m'} i\right)\right].
    \end{align*}
\end{lemma}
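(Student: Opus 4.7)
The plan is to derive a closed-form expression for $h_\delta$ by solving the recurrence exactly, and then reduce the target inequality to a single scalar condition that I can verify analytically.

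First, multiplying the recurrence defining $h_\delta$ by $i(i-1)$ produces the telescoping identity
\[
i(i-1)\,h_\delta(i) \;-\; (i-1)(i-2)\,h_\delta(i-1) \;=\; \frac{\delta\,m'(m'-1)(i-1-m')}{(i-1)(i-2)}.
\]
Summing from $m'+1$ to $i$ (using $h_\delta(m')=1$) and applying the partial-fraction identity $\frac{k-m'}{k(k-1)} = \frac{m'}{k} + \frac{1-m'}{k-1}$ causes the resulting sum to collapse, yielding the closed form
\[
h_\delta(i) \;=\; \frac{m'(m'-1)}{i(i-1)}\left[1 + \delta\left(\sum_{k=m'}^{i-2}\frac{1}{k} \;-\; 1 \;+\; \frac{m'}{i-1}\right)\right].
\]

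Next, I would compare this closed form to the target. Writing $A(i) = \ln\frac{i+1}{m'+1} - 1 + \frac{m'}{i}$ and subtracting, an algebraic simplification yields
\[
h_\delta(i) \;-\; \frac{m'(m'-1)}{i^2}\bigl[1 + \delta A(i)\bigr] \;=\; \frac{m'(m'-1)}{i^2(i-1)}\bigl[1 + \delta\,\Phi(i)\bigr],
\]
where $\Phi(i) := i\sum_{k=m'}^{i-2}\frac{1}{k} - (i-1)\ln\frac{i+1}{m'+1} - 1 + \frac{m'(2i-1)}{i(i-1)}$. Since the claim must hold for every $\delta \geq 0$, it suffices to establish $\Phi(i) \geq 0$ for all $i \geq m'$.

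Proving $\Phi(i) \geq 0$ is the main obstacle. For the small cases $i \in \{m',m'+1\}$ the harmonic sum is empty, so I would verify directly: $\Phi(m') = \frac{m'}{m'-1} > 0$, and $\Phi(m'+1) = m'\bigl[\frac{1}{m'+1} - \ln(1 + \frac{1}{m'+1})\bigr] \geq 0$ by $\ln(1+x)\leq x$. For $i \geq m'+2$, I would weaken $\Phi$ via the standard integral estimate $\sum_{k=m'}^{i-2}\frac{1}{k} \geq \ln\frac{i-1}{m'}$ to reduce to showing $\Psi(i) \geq 0$, where $\Psi$ is obtained from $\Phi$ by replacing the sum with $\ln\frac{i-1}{m'}$. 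A direct computation then shows
\[
\Psi(i) - \Psi(i-1) \;=\; \ln\frac{(i-1)(m'+1)}{m'(i+1)} \;+\; (i-1)\ln\!\left(1 + \frac{1}{i-2}\right) \;-\; (i-2)\ln\!\left(1 + \frac{1}{i}\right) \;-\; \frac{2m'}{i(i-2)},
\]
and I would apply the sharp alternating Taylor bounds $x - \frac{x^2}{2} \leq \ln(1+x) \leq x - \frac{x^2}{2} + \frac{x^3}{3}$ at $x = \frac{1}{i}$ and $x = \frac{1}{i-2}$ to show this increment is non-negative for $i \geq m'+2$, with the $-\frac{2m'}{i(i-2)}$ term absorbed by the $(i-1)\ln(1+\frac{1}{i-2})$ dominance. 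Since $\Psi(m'+1) = \Phi(m'+1) \geq 0$, monotonicity of $\Psi$ from $i = m'+1$ onwards then yields $\Psi(i) \geq 0$ for all $i \geq m'+1$. The delicate aspect is that $\Psi$ is \emph{not} monotone on all of $[m',\infty)$: it actually decreases from $i=m'$ to $i=m'+1$ before turning around, so the induction must be anchored at $i=m'+1$ rather than at $i=m'$, and the $\frac{m'(2i-1)}{i(i-1)}$ correction in $\Phi$ is essential for keeping the minimum non-negative.
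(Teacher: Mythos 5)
Your proposal takes a genuinely different route from the paper and the high-level structure is sound, but the hardest estimate is only sketched and is much tighter than the sketch acknowledges, so I cannot sign off on it as a complete proof.

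Both arguments begin by unrolling the recurrence; the paper, however, immediately introduces slack by replacing $\frac{1}{j-2}$ with $\frac{1}{j}$ and $\frac{1}{i-1}$ with $\frac{1}{i}$ before integrating, which makes the remaining inequality loose enough to finish with two easy integral comparisons. You instead keep the exact closed form $h_\delta(i) = \frac{m'(m'-1)}{i(i-1)}\bigl[1 + \delta(\sum_{k=m'}^{i-2}\frac{1}{k} - 1 + \frac{m'}{i-1})\bigr]$ (correct — I checked the telescoping and the partial fraction) and then compare it directly to the target; your reduction to the scalar inequality $\Phi(i) \geq 0$, and the formula for $\Phi$, are also correct, as are the checks $\Phi(m') = \frac{m'}{m'-1}$ and $\Phi(m'+1) = m'\bigl[\frac{1}{m'+1} - \ln(1+\frac{1}{m'+1})\bigr]$. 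The subtlety you flag about anchoring at $m'+1$ rather than $m'$ is also real. The issue is the remaining step: the increment $\Psi(i) - \Psi(i-1)$ at $i = m'+2$ is positive but only by $\Theta\bigl((m')^{-2}\bigr)$ — for $m' = 1000$ the margin is about $1.3\times 10^{-6}$ — because the $1/m'$-order contributions of the three non-constant terms cancel exactly. Establishing non-negativity by Taylor bounds at that precision is a genuinely delicate computation: one must track third-order terms for all the logarithms, and your stated bounds $x - x^2/2 \leq \ln(1+x) \leq x - x^2/2 + x^3/3$ are not by themselves sufficient because the term $\ln\frac{(i-1)(m'+1)}{m'(i+1)}$ contains $\ln\frac{i-1}{i+1}$, a logarithm of a number below $1$, which needs a matching lower bound for $\ln(1-y)$. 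In short, your exact-closed-form approach obligates you to prove a near-sharp inequality where the paper, by loosening earlier, only has to prove a slack one; the claim you leave at the level of a sketch is precisely the hardest and most sensitive part of the argument, and given the $\Theta((m')^{-2})$ margin the sketch does not yet constitute a proof.
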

\begin{proof}
    First recall that the definition of $h_\delta(i)$ is $(1 - \frac 2 i)h_\delta(i - 1) + l(i) = \frac {i - 2} i h_\delta(i - 1) + l(i)$ with the nonrecursive terms contained in $l(i)$. If we expand this definition out, the coefficient by which $h_\delta(j)$ will be multiplied is $\frac {i - 2} i \times \dotsb \times \frac {(j + 1) - 2} {j + 1} = \frac {j(j - 1)} {i(i - 1)}$. It follows by expanding the recursive definition fully that
    \begin{align*}
        h_\delta(i)
        &= \frac {m'(m' - 1)} {i(i - 1)} h_\delta(m') + \sum_{j = m' + 1}^i \frac {j(j - 1)} {i(i - 1)} l(i) &\EqComment{Expanding definition} \\
        &= \frac {m'(m' - 1)} {i(i - 1)}h_\delta(m') + \sum_{j = m' + 1}^i \frac {j(j - 1)} {i(i - 1)} \cdot \frac \delta j \cdot \frac {m'(m' - 1)} {(j - 1)(j - 2)}\left[1 - \frac {m'} {j - 1}\right] &\EqComment{Definition of $l(i)$}\\
        &\geq \frac {m'(m' - 1)} {i^2}h_\delta(m') + \sum_{j = m' + 1}^i \frac {\delta m'(m' - 1)} {i^2j}  \left[1 - \frac {m'} {j - 1}\right] &\EqComment{$j \geq j - 2$, $i \geq i - 1$}\\
        &= \frac {m'(m' - 1)} {i^2} \left[1 + \delta \sum_{j = m' + 1}^i \left(\frac 1 j - \frac {m'} {j(j - 1)}\right)\right]. &\EqComment{Rearranging}
    \end{align*}
    We can lower bound the sum using integrals. First, see that $\sum_{j = m' + 1}^i \frac 1 j \geq \int_{m' + 1}^{i + 1} \frac {dx} x = \ln \frac {i + 1} {m' + 1}$. Second, see that $\sum_{j = m' + 1}^i \frac {m'} {j(j - 1)} \leq \sum_{j = m' + 1}^i \frac {m'} {j^2} \leq m'\int_{m'}^i \frac {dx} {x^2} = m'(\frac 1 {m'} - \frac 1 i) = 1 - \frac {m'} i$. We can therefore write
    $
        h_\delta(i) \geq \frac {m'(m' - 1)} {i^2} \left[1 + \delta \left(\ln \frac {i + 1} {m' + 1} - 1 + \frac {m'} i\right)\right]$
    as desired.
    
\end{proof}
\begin{lemma}\label{lemma:h-delta-sum-bound}
    For any $\delta \geq 0$, $\sum_{i = m' + 1}^m h_\delta(i)$ is at least $m'(m' - 1)$ multipled by
    \begin{align*}
        \left(1 - \delta\left(\ln(m' + 1) + 1\right)\right)\left(\frac 1 {m'} - \frac 1 {m}\right) + \delta \frac {\ln\left(m' + 1\right) + 1} {m'} - \delta\frac {\ln(m + 1) + 1} m + \frac {\delta m'} {2(m' + 1)^2} - \frac {\delta m'} {2(m + 1)^2}.
    \end{align*}
\end{lemma}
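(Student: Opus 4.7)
The plan is to exploit that $h_\delta(i)$ is affine in $\delta$. Writing $h_\delta = h_0 + \delta h^{(1)}$, I would handle the $h_0$ part exactly (so the bound is tight at $\delta = 0$) and bound $\sum h^{(1)}$ via integral estimates. A direct application of Lemma~\ref{lemma:h-delta-bound} summed pointwise would fail, because its $1/i^2$ leading factor sums to strictly less than $1/m' - 1/m$, whereas setting $\delta = 0$ in the target expression gives exactly $\tfrac{1}{m'} - \tfrac{1}{m}$, which must equal $\sum h_0(i)/[m'(m'-1)]$.

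Solving the homogeneous recursion $h_0(i) = (1 - 2/i)\,h_0(i-1)$ with $h_0(m') = 1$ gives $h_0(i) = \tfrac{m'(m'-1)}{i(i-1)}$, and telescoping $\tfrac{1}{i(i-1)} = \tfrac{1}{i-1} - \tfrac{1}{i}$ yields $\sum_{i=m'+1}^m h_0(i) = m'(m'-1)(\tfrac{1}{m'} - \tfrac{1}{m})$, matching the $\delta = 0$ portion of the target. For $\sum h^{(1)}$, I retrace the proof of Lemma~\ref{lemma:h-delta-bound} but retain the stronger leading factor $\tfrac{1}{i(i-1)}$ rather than weakening to $\tfrac{1}{i^2}$, obtaining $h^{(1)}(i) \geq \tfrac{m'(m'-1)}{i(i-1)}(\ln\tfrac{i+1}{m'+1} - 1 + \tfrac{m'}{i})$. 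Summing splits this into two pieces: a $m'\sum \tfrac{1}{i^2(i-1)}$ piece, where $\tfrac{1}{i^2(i-1)} \geq \tfrac{1}{i^3}$ together with $\sum \tfrac{1}{i^3} \geq \int_{m'+1}^{m+1} x^{-3}\,dx$ produces $\tfrac{m'}{2(m'+1)^2} - \tfrac{m'}{2(m+1)^2}$ (matching the last two target terms); and the sum $T := \sum \tfrac{\ln((i+1)/(m'+1)) - 1}{i(i-1)}$, which is the main obstacle.

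To handle $T$, I would apply Abel summation with $\tfrac{1}{i(i-1)} = \tfrac{1}{i-1} - \tfrac{1}{i}$ and $a_i = \ln\tfrac{i+1}{m'+1} - 1$, obtaining
\[
T = \tfrac{a_{m'+1}}{m'} - \tfrac{a_m}{m} + \sum_{k=m'+1}^{m-1} \tfrac{\ln(1 + 1/(k+1))}{k}.
\]
Lower bounding $\ln(1 + 1/(k+1))$ by $\tfrac{1}{k+1} - \tfrac{1}{2(k+1)^2}$ via Taylor, the telescoping $\sum \tfrac{1}{k(k+1)} = \tfrac{1}{m'+1} - \tfrac{1}{m}$, combined with $\tfrac{a_{m'+1}}{m'} \geq -\tfrac{1}{m'} + \tfrac{1}{m'(m'+1)}$ (using $\ln(1 + \tfrac{1}{m'+1}) \geq 0$) and the arithmetic identity $\tfrac{1}{m'(m'+1)} + \tfrac{1}{m'+1} = \tfrac{1}{m'}$, makes the first-order terms cancel exactly. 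The net contribution is $-\tfrac{\ln((m+1)/(m'+1))}{m}$, matching the target's remaining term. The second-order Taylor error terms are absorbed by the slack from the inequality $\tfrac{1}{i^2(i-1)} \geq \tfrac{1}{i^3}$ used in the first piece.

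Collecting the three contributions and multiplying through by $m'(m'-1)$ yields the stated lower bound.
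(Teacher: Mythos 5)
Your decomposition $h_\delta = h_0 + \delta h^{(1)}$ with exact telescoping of $\sum h_0$ and Abel summation for $\sum h^{(1)}$ is a genuinely different and conceptually cleaner route than the paper's, which applies Lemma~\ref{lemma:h-delta-bound} (with its weakened $1/i^2$ factor) termwise and then uses integral comparisons, relying implicitly on the coefficient $1-\delta(\ln(m'+1)+1)$ being negative so that an integral \emph{upper} bound on $\sum 1/i^2$ turns into a lower bound. Your observation that the $1/i^2$ factor cannot recover equality at $\delta=0$ is correct and worth noting. However, the Abel-summation step contains a genuine error.

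Specifically, you assert $\tfrac{a_{m'+1}}{m'} \geq -\tfrac{1}{m'} + \tfrac{1}{m'(m'+1)}$. Since $-\tfrac{1}{m'}+\tfrac{1}{m'(m'+1)}=-\tfrac{1}{m'+1}$ and $a_{m'+1}=\ln\bigl(1+\tfrac{1}{m'+1}\bigr)-1$, this is equivalent to $\ln\bigl(1+\tfrac{1}{m'+1}\bigr)\geq\tfrac{1}{m'+1}$, which is the \emph{wrong} direction of the universal inequality $\ln(1+x)\leq x$. The justification ``using $\ln(1+\tfrac{1}{m'+1})\geq 0$'' only gives $\tfrac{a_{m'+1}}{m'}\geq-\tfrac{1}{m'}$, weaker by exactly $\tfrac{1}{m'(m'+1)}$. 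Hence the ``first-order terms cancel exactly'' claim fails: a residual $-\tfrac{1}{m'(m'+1)}=\Theta(1/m'^2)$ remains in your lower bound on $T$, on top of the second-order Taylor error $-\tfrac12\sum_k \tfrac{1}{k(k+1)^2}\approx-\tfrac{1}{4m'^2}$. Your closing claim that these are absorbed by the slack from $\tfrac{1}{i^2(i-1)}\geq\tfrac{1}{i^3}$ is unquantified, and a rough count gives that slack only $m'\sum\tfrac{1}{i^3(i-1)}\approx\tfrac{1}{3m'^2}$, which falls short of the roughly $\tfrac{5}{4m'^2}$ total deficit. To close the argument you would need either to retain more of the slack discarded when deriving the lower bound on $h^{(1)}(i)$ (e.g.\ from $\tfrac{1}{j-2}\geq\tfrac{1}{j}$ or from $\sum_j 1/j \geq \ln\tfrac{i+1}{m'+1}$), or to prove a version of the lemma whose second-order terms correctly account for the $-\tfrac{1}{m'(m'+1)}$ residual.
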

\begin{proof}
    We first apply Lemma \ref{lemma:h-delta-bound} and separate out terms in the resulting sum:
    \begin{align*}
        \sum_{i = m' + 1}^m h_\delta(i) &\geq \sum_{i = m' + 1}^m \frac {m'(m' - 1)} {i^2} \left[1 + \delta \left(\ln \frac {i + 1} {m' + 1} - 1 + \frac {m'} i\right)\right]\\
        &= m'(m' - 1)\sum_{i = m' + 1}^m \left[\frac {1 - \delta(\ln(m' + 1) + 1)} {i^2} + \frac {\delta\ln(i + 1)} {i^2} + \frac {\delta m'} {i^3}\right].
    \end{align*}
    We can now lower bound each of these sums using integrals.
    \begin{itemize}
        \item Firstly, $\sum_{i = m' + 1}^m \frac {1 - \delta(\ln(m' + 1) + 1} {i^2} \leq (1 - \delta(\ln(m' + 1) + 1))\int_{m'}^m \frac {dx} {x^2} = (1 - \delta(\ln(m' + 1) + 1))(\frac 1 {m'} - \frac 1 {m})$.
        \item Secondly, $\sum_{i = m' + 1}^m \frac {\ln(i + 1)} {i^2} \geq \sum_{i = m' + 1}^m \frac {\delta\ln i} {i^2}\delta \int_{m' + 1} \geq \delta \int_{m' + 1}^{m + 1} \frac {\ln x} {x^2} dx = \delta\frac {\ln(m' + 1) + 1} {m'} - \delta\frac {\ln(m + 1) + 1} m$.
        \item Finally, $\sum_{i = m' + 1}^m \frac {\delta m'} {i^3} \geq \delta m'\int_{m' + 1}^{m + 1} \frac {dx} {x^3} = 
        \frac {\delta m'} 2(\frac 1 {(m' + 1)^2} - \frac 1 {(m + 1)^2})$.
    \end{itemize}
    We thus have as desired that\\$\sum_{i = m' + 1}^m h_\delta(i) \geq m'(m' - 1)\left[\left(1 - \delta\left(\ln(m' + 1) + 1\right)\right)\left(\frac 1 {m'} - \frac 1 {m}\right) + \delta \frac {\ln\left(m' + 1\right) + 1} {m'} - \delta\frac {\ln(m + 1) + 1} m + \frac {\delta m'} {2(m' + 1)^2} - \frac {\delta m'} {2(m + 1)^2}\right]$.
\end{proof}
\begin{lemma}\label{lemma:h-delta-limit}
    If we let $m' = \lfloor \alpha m \rfloor$, then
    \begin{align*}
        \lim_{m \to \infty} \frac 1 m \sum_{i = m' + 1}^m h_\delta(i) \geq \alpha - \alpha^2 + \delta \alpha^2 \ln \alpha + \frac \delta 2 \left(\alpha - \alpha^3\right).
    \end{align*}
\end{lemma}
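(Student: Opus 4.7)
The plan is to apply the bound from Lemma \ref{lemma:h-delta-sum-bound} directly, divide by $m$, and take $m \to \infty$ with $m' = \lfloor \alpha m \rfloor$ (so $m'/m \to \alpha$ and $(m'-1)/m \to \alpha$). The one subtlety is that the bound contains terms like $\ln(m'+1)$ and $\ln(m+1)$ which each grow like $\ln m$; if treated separately, the relevant products $\frac{m'(m'-1)}{m} \cdot \text{(stuff)}$ look like they diverge. So the first step is to verify that the $\ln m$ contributions cancel before sending $m$ to infinity.

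Concretely, I would label the five summands in the bracket of Lemma \ref{lemma:h-delta-sum-bound} as $A,B,C,D,E$ in order of appearance and simplify $A+B-C$ by routine algebra. Expanding
\[
A = \left(\tfrac{1}{m'} - \tfrac{1}{m}\right) - \delta(\ln(m'+1)+1)\left(\tfrac{1}{m'} - \tfrac{1}{m}\right),
\]
and adding $B = \delta(\ln(m'+1)+1)/m'$ cancels the $1/m'$ piece of the $\delta$ term, leaving $\delta(\ln(m'+1)+1)/m$; subtracting $C$ then collapses this to
\[
A + B - C = \tfrac{1}{m'} - \tfrac{1}{m} + \tfrac{\delta}{m}\ln\tfrac{m'+1}{m+1}.
\]
All $\ln m$ contributions have cancelled, leaving only the bounded $\ln\bigl(\tfrac{m'+1}{m+1}\bigr) \to \ln \alpha$.

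Now multiply by $m'(m'-1)$, divide by $m$, and take limits term by term. For the combined $A+B-C$ piece,
\[
\tfrac{m'(m'-1)}{m}\!\left[A+B-C\right] = \tfrac{m'-1}{m} - \tfrac{m'(m'-1)}{m^2} + \tfrac{\delta m'(m'-1)}{m^2}\ln\tfrac{m'+1}{m+1} \longrightarrow \alpha - \alpha^2 + \delta \alpha^2 \ln \alpha.
\]
For the $D$ term, $\frac{m'^2}{(m'+1)^2} \to 1$ and $(m'-1)/m \to \alpha$ give $m'(m'-1)D/m \to \delta\alpha/2$. For the $E$ term, $m'^2(m'-1)/(m(m+1)^2) \to \alpha^3$ gives $m'(m'-1)E/m \to \delta\alpha^3/2$. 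Summing these contributions yields
\[
\alpha - \alpha^2 + \delta \alpha^2 \ln \alpha + \tfrac{\delta}{2}\bigl(\alpha - \alpha^3\bigr),
\]
as claimed.

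The whole argument is essentially bookkeeping; the main obstacle is purely notational, namely keeping track of the cancellation of the $\ln(m'+1)$ and $\ln(m+1)$ contributions. Once that cancellation is exhibited (which is why Lemma \ref{lemma:h-delta-sum-bound} is stated with these specific groupings in the first place), every remaining limit is a trivial consequence of $m'/m \to \alpha$.
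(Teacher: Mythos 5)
Your proof is correct and follows the paper's route: apply Lemma \ref{lemma:h-delta-sum-bound}, simplify algebraically, and pass to the limit using $m'/m \to \alpha$. The only difference is a cosmetic one of ordering — you exhibit the cancellation of the $\ln(m'+1)$ and $\ln(m+1)$ contributions \emph{before} taking limits (collapsing $A+B-C$ to $\tfrac{1}{m'}-\tfrac{1}{m}+\tfrac{\delta}{m}\ln\tfrac{m'+1}{m+1}$), whereas the paper first substitutes $m'\pm 1, m\pm 1 \mapsto \alpha m, m$ and then lets the $\ln(\alpha m)$ and $\ln m$ terms cancel. Your ordering is arguably a bit more careful, since the paper's justification ("the resulting differences go to $0$") is stated at the level of individual bracketed terms, which are subsequently multiplied by the diverging factor $m'(m'-1)/m$; showing the exact cancellation first sidesteps the need to verify that these products of small differences with a growing factor vanish. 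Either way the computation and the answer agree.
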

\begin{proof}
    The proof proceeds by applying the lower bound from Lemma \ref{lemma:h-delta-sum-bound}, then applying straightforward algebraic manipulations. First note that as we are taking the limit, we can ignore the floor in $\lfloor \alpha m \rfloor$ and further replace all terms of the form $m' - 1$ and $m' + 1$ by $m' = \alpha m$ and all terms of the form $m - 1$ and $m + 1$ by $m$, because the resulting differences go to $0$ as $m$ goes to infinity. Therefore,
    \begin{align*}
        \lim_{m \to \infty} \sum_{i = m' + 1}^m \frac {h_\delta(i)} m
        &\geq \lim_{m \to \infty} \frac {\left(\alpha m\right)^2} m \left[(1 - \delta(\ln (\alpha m) + 1))\left(\frac 1 {\alpha m} - \frac 1 m\right) + \delta\frac {\ln (\alpha m) + 1} {\alpha m} - \delta\frac {\ln m + 1} m + \frac {\delta \alpha m} {2(\alpha m)^2} - \frac {\delta \alpha m} {2m^2}\right]\\
        &= \lim_{m \to \infty} \left[(1 - \delta(\ln (\alpha m) + 1))(\alpha - \alpha^2) + \delta \alpha \ln(\alpha m) + \delta \alpha - \delta \alpha^2\ln m - \delta \alpha^2 + \frac {\delta \alpha} 2 - \frac {\delta \alpha^3} 2\right]\\
        &= \lim_{m \to \infty} \left[\alpha - \alpha^2 + \delta \alpha^2 \ln \alpha + \frac \delta 2 \left(\alpha - \alpha^3\right)\right]\\
        &= \alpha - \alpha^2 + \delta \alpha^2 \ln \alpha + \frac \delta 2 \left(\alpha - \alpha^3\right).
    \end{align*}
    with the penultimate step involving multiple uses of $\ln(\alpha m) = \ln \alpha + \ln m$ and cancellations.
\end{proof}
We now complete the analysis by proving Theorem \ref{thm:graph-algorithm}.
\begin{proof}[Proof of Theorem \ref{thm:graph-algorithm}]
    We first prove that if $m$, the number of edges, is sufficiently large -- more precisely, if $m \geq M_n$ where $M_n$ may depend on $n$ -- then Algorithm \ref{alg:graph} is $3.95$-competitive for graphic matroid secretary with an appropriate choice of $m'$. To see this, first note by Lemma \ref{lemma:graph-algorithm-independent} that the set of edges taken by Algorithm \ref{alg:simple-graph}. Now note that if we let $\alpha = 0.4914$, then the lower bound for $\lim_{m \to \infty} \frac 1 m \sum_{i = m' + 1}^m h_{\frac 1 4}(i)$ provided by Lemma \ref{lemma:h-delta-limit} is greater than $0.2536$. It follows that there exists some $M$ such that for all $m \geq M$, $\lim_{m \to \infty} \frac 1 m \sum_{i = m' + 1}^m h_{\frac 1 4}(i) \geq 0.2536$. We then note by Lemma \ref{lemma:graph-algorithm-bounded-by-h-delta} that $\frac {\mathbb{E}[W(A)]} {W(OPT)}$, i.e. the inverse of the competitive ratio, for Algorithm \ref{alg:graph} is lower bounded by $\lim_{m \to \infty} \frac 1 m \sum_{i = m' + 1}^m h_{\frac 1 4}(i)$, from which it follows that if $m \geq M$, then Algorithm \ref{alg:graph} is $\frac 1 {0.2536} < 3.95$-competitive when we take $m' = \lfloor \alpha m \rfloor$.

    Additionally note (as previously mentioned in a footnote) that in order for Algorithm \ref{alg:graph} to successfully execute, we need that at every step after step $m'$, there exist enough edges in order to assign an outgoing edge to each node. This is equivalent to requiring that $m' + 1 \geq n$, for which it is sufficient to demand that $m \geq \frac n {\alpha}$. We can therefore set $M_n = \max(M, \frac n \alpha)$ where $M$ is as defined before to complete the proof of this step.

    We now prove the full theorem. To do this, when $m < M_n$, we simulate the execution of Algorithm \ref{alg:graph} on a graph with additional edges by mixing in dummy edges. Specifically, let $l = M_n - m$. Then we first construct a "bag" containing $m$ 1s and $l$ 0s, and choose permute the contents of this bag uniformly at random. We then execute Algorithm \ref{alg:graph} as if on a graph with $M_n$ edges as we iterate through this permutation. For each 1 that we see, we present Algorithm \ref{alg:graph} with the next real edge, while for each 0 that we see, we present Algorithm \ref{alg:graph} with a dummy edge. The endpoints of the dummy edge do not matter as we can simply modify Algorithm \ref{alg:graph} to never include the dummy edges in $\optmatch_t$, which will mean that the if statement whose body actually references the endpoints of the edge will never execute.

    As the optimum in the modified graph created by the inclusion of the dummy edges is the same as the optimum of the original graph, and the algorithm will never take a dummy edge as one will never be in $\optmatch_t$, the competitive ratio attained by the algorithm on the modified graph will carry over to the original graph, and so the algorithm modified in this way will be $3.95$-competitive. This description therefore gives a $3.95$-competitive algorithm for graphic matroid secretary as desired.
\end{proof}

\subsection{Improvement for simple graphs}
\label{section:simple}

\paragraph{Overview.}

When the graph $G$ is restricted to be a simple graph, meaning that between any two nodes $u, v$ there exists at most one edge, we can obtain an improved competitive ratio by modifying \ref{alg:graph}. Specifically, rather than randomly choosing which endpoint $a$ of each edge $e$ to which to apply the stronger condition that no outgoing edge from $a$ had been seen, we always choose it to be the endpoint $v$ such that $e$ is not the outgoing edge from $v$. Pseudocode for this modified algorithm is provided in Figure~\ref{alg:simple-graph}

\begin{algorithm}[!ht]
  Let $E'$ be the first $m'$ edges \hfill \Comment{set of observed edges}
  $\accset \gets \emptyset$ \hfill\Comment{set of accepted edges}
  
  $\forall v\in V$: $seen\_outgoing(v)\gets False$\\
  $\forall v\in V$: $taken\_outgoing(v)\gets False$\\
  \For{$t \in \cbr{m' + 1, \dots, m}$} {
    Let $e_t$ be the edge arriving in time $t$\\
    Add $e_t$ to $E'$ \\
    $\optmatch_t \gets $ maximum weighted directed
    forest on $G[E']$\\
    $\forall v \in V: outgoing(v) \gets $ the edge directed away from $v$ in $\optmatch_t$ if it exists\\
    $\forall v \in V$ such that $outgoing(v) = null : outgoing(v) \gets $ a unique edge in $E'$ not in $\optmatch_t$\\
    \If{$e_{t} \in \optmatch_{t}$} { 
    $e_{t} \gets (u, v)$ \hfill
    \Comment{where $e_t$ is directed from $u$ to $v$ in $\optmatch_{t}$
    }
    
    \If{$seen\_outgoing(v) = False$ and $taken\_outgoing(u) = False$}
    {
      $taken\_outgoing(u) \gets True$ \\
      add $e_{t}$ to $\accset$
    }
    
  }
  \If{$outgoing(w) = e_t$ for some $w \in V$} {
    $seen\_outgoing(w) \gets True$
  }
  }
  \caption{Algorithm for graphic matroid secretary on simple graphs.}
  \label{alg:simple-graph}
\end{algorithm}

To understand how this improvement is possible, recall that the motivation for the random choice of $(a, b)$ in Algorithm \ref{alg:graph} was that the stronger condition applied to node $a$ in that algorithm yields some improvement in the competitive ratio by allowing us to consider the possibility that a previous edge $e'$ that was outgoing edge from an endpoint of the current edge $e$ was not actually taken. This stronger condition provides a concrete lower bound for this possibility. However, we must simultaneously ensure that previous edges also would not prevent $e$ from being taken, and if the endpoints of $e'$ are the same as those of $e$, then $e'$ not being taken could only mean that $e$ would not be taken either (see Figure \ref{fig:four_cases} which demonstrated a "worst-case" example where only one of the four random choices led to an improvement),

In simple graphs, it is not possible for $e'$ to have the same endpoints as $e$, meaning that this random choice is unnecessary. We can therefore fix the vertex to which we apply the stronger condition to be the endpoint for which $e'$ is not the outgoing edge, as this endpoint will then be the endpoint not in common with $e$, ensuring that the possibility of $e'$ not being taken for a reason that does not also prevent $e$ from being taken always exists.

\paragraph{Analysis.}

We aim to prove the following theorem:
\begin{theorem}\label{thm:simple-graph-algorithm}
    There exists a $3.77$-competitive algorithm for graphic matroid secretary when the graph is restricted to be a simple graph.
\end{theorem}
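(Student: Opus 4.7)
The plan is to adapt the probabilistic analysis from the proof of Theorem~\ref{thm:graph-algorithm} to Algorithm~\ref{alg:simple-graph}, exploiting the fact that in a simple graph two random-coin-flip factors of $1/2$ in the original argument become deterministic favorable choices. Concretely, this upgrades the parameter $\delta = 1/4$ in the recursion of Definition~\ref{def:h-delta} to $\delta = 1$, which enlarges the limiting ratio.

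First I would establish an analog of Lemma~\ref{lemma:graph-algorithm-independent} for Algorithm~\ref{alg:simple-graph}: the accepted set $\accset$ is always independent. The proof transfers essentially verbatim, since only two structural facts about the algorithm are used: each accepted edge is charged to a unique one of its endpoints via $taken\_outgoing$, and the algorithm refuses any edge whose source already has a taken outgoing edge. Hence a hypothetical cycle $C \subseteq \accset$ of length $\ell$ would force all $\ell$ of its vertices to serve as outgoing-charged endpoints, while the last-arriving edge $e=(u,v)$ of $C$ supplies a charge to at most one of $u,v$, a contradiction.

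Next I would define $h'_{u,v,S}(i)$ to be the probability that after step $i$, with $S$ the set of edges seen so far, both $seen\_outgoing(v) = False$ and $taken\_outgoing(u) = False$ hold; this is exactly the acceptance test performed by Algorithm~\ref{alg:simple-graph} when presented with an edge $(u,v)$ directed from $u$ to $v$ in $\optmatch_t$. Because Lemmas~\ref{lemma:graph-algorithm-f} and~\ref{lemma:graph-algorithm-g} depend only on how $seen\_outgoing$ evolves and not on the taking rule, the bounds $f_{u,S}(i) = m'/i$ and $g_{u,v,S}(i) \geq m'(m'-1)/(i(i-1))$ carry over unchanged. The heart of the argument is the claim $h'_{u,v,S}(i) \geq h_1(i)$ (i.e.\ the function $h_\delta$ with $\delta = 1$). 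The $\delta/i$ slack term in the recursion for $h_\delta$ arises from earlier steps $i$ at which $u$'s outgoing edge $(u,w)$ in $\optmatch_i$ is presented but not taken --- thereby sparing $taken\_outgoing(u)$ --- because $seen\_outgoing(w)$ is already True. In Algorithm~\ref{alg:graph} this required two independent coin flips to fall favorably (one selecting $b = u$ for the current edge and one applying the stronger $seen\_outgoing$ check to $w$ on the earlier edge), producing the factor $\tfrac12 \cdot \tfrac12 = \tfrac14$. Algorithm~\ref{alg:simple-graph} hard-codes both of these assignments to the favorable configuration, removing both $1/2$'s. The simple-graph hypothesis is essential here: since $(u,v)$ does not lie in $\optmatch_i$ for $i < t$ and there are no parallel copies of $(u,v)$, the destination $w$ of $u$'s outgoing edge in $\optmatch_i$ must satisfy $w \neq v$, so the positive-correlation lower bound used in Lemma~\ref{lemma:graph-algorithm-h} --- namely that the probability that $seen\_outgoing(u)$ and $seen\_outgoing(v)$ are both False while $seen\_outgoing(w)$ is True is at least $g_{u,v,S''}(i-1)\,(1 - f_{w,S'}(i-1))$ --- transfers without modification, and the resulting recursion matches $h_\delta$ with $\delta = 1$.

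Once $h'_{u,v,S}(i) \geq h_1(i)$ is in hand, the analog of Lemma~\ref{lemma:graph-algorithm-bounded-by-h-delta} lower bounds the inverse competitive ratio by $\frac{1}{m}\sum_{t=m'+1}^m h_1(t)$, and Lemma~\ref{lemma:h-delta-limit} with $\delta = 1$ and $m' = \lfloor \alpha m \rfloor$ shows this tends to at least $\alpha - \alpha^2 + \alpha^2 \ln\alpha + \tfrac12(\alpha - \alpha^3)$. A one-variable numerical optimization (maximized near $\alpha \approx 0.465$, with value exceeding $0.2653$) then yields a competitive ratio strictly below $3.77$ for all sufficiently large $m$, and the small-$m$ case is absorbed via the dummy-edge padding trick from the proof of Theorem~\ref{thm:graph-algorithm}. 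I expect the only real subtlety to be a clean statement of the positive-correlation inequality underlying the $h_1$ recursion, but because Algorithm~\ref{alg:simple-graph} uses a strict subset of the randomness of Algorithm~\ref{alg:graph}, no correlations beyond those already handled in Lemma~\ref{lemma:graph-algorithm-h} enter the analysis.
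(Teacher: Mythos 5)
Your proposal matches the paper's proof essentially exactly: you define $h^{\text{simple}}_{u,v,S}$ mirroring the acceptance test of Algorithm~\ref{alg:simple-graph}, upgrade $\delta$ from $\tfrac14$ to $1$ in the $h_\delta$ recursion by observing that both coin-flip factors from the multigraph analysis become deterministic favorable choices in the simple-graph setting (since $u$'s outgoing edge in $\optmatch_i$ must point to a third vertex $w \neq v$), then optimize $\alpha$ in Lemma~\ref{lemma:h-delta-limit} and handle small $m$ by dummy-edge padding. Your choice $\alpha \approx 0.465$ gives a bound essentially identical to the paper's $\alpha = 0.4642$, and both yield a competitive ratio strictly below $3.77$.
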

We first note the following lemma demonstrating the validity of Algorithm \ref{alg:simple-graph}. We omit the proof as it is identical to the proof of Lemma \ref{lemma:graph-algorithm-independent}.
\begin{lemma}\label{lemma:simple-graph-algorithm-independent}
    The set $\accset$ of accepted edges in Algorithm \ref{alg:simple-graph} is always an independent set in the graphic matroid.
\end{lemma}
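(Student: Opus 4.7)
The plan is to mirror the proof of Lemma \ref{lemma:graph-algorithm-independent} essentially verbatim, since the only change in Algorithm \ref{alg:simple-graph} is that the random choice of $(a,b)$ is replaced by the deterministic assignment $a = v$, $b = u$, and this change does not affect the invariant that drives the independence argument. The key observation is that Algorithm \ref{alg:simple-graph} still accepts an edge $e_t = (u,v)$ only when $taken\_outgoing(u) = False$, and upon acceptance sets $taken\_outgoing(u) \gets True$; thus each vertex $u$ can be the ``outgoing endpoint'' of at most one accepted edge, and every accepted edge is assigned a unique such outgoing endpoint, namely the tail $u$ of its orientation in $\optmatch_t$.

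I would proceed by contradiction: suppose $\accset$ contains a cycle $C$ of length $l$, with vertex set $V(C)$ of size $l$. By the observation above, each of the $l$ edges of $C$ is the outgoing edge of a unique vertex, and since every edge of $C$ has both endpoints in $V(C)$, each of the $l$ vertices in $V(C)$ must be the outgoing endpoint of some accepted edge in $C$. In particular, every $w \in V(C)$ satisfies $taken\_outgoing(w) = True$ at the end of the algorithm's execution, witnessed by some edge of $C$.

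Next, I would pick the edge $e_t = (u,v) \in C$ that is presented at the latest time $t$ among all edges of $C$. Since no edge of $C$ is processed before time $t$, the values $taken\_outgoing(u)$ and $taken\_outgoing(v)$ are both $False$ just before step $t$. After step $t$, only $taken\_outgoing(u)$ is set to $True$; the vertex $v$ still has $taken\_outgoing(v) = False$, and because $e_t$ was the last edge of $C$ to arrive, no subsequent edge of $C$ can be the outgoing edge for $v$. This contradicts the conclusion of the previous paragraph that every vertex of $V(C)$, including $v$, is the outgoing endpoint of some edge in $C$, completing the proof.

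I do not expect any genuine obstacle: the argument is purely structural and uses only the acceptance check on $taken\_outgoing(u)$ and the uniqueness of the tail vertex in the orientation of $\optmatch_t$, both of which are preserved from Algorithm \ref{alg:graph}. The stronger $seen\_outgoing(v) = False$ condition, as well as the absence of randomization in choosing which endpoint to apply it to, play no role whatsoever in the independence proof; they only affect the competitive-ratio analysis carried out separately.
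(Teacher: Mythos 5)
Your high-level outline matches the paper's proof, but one crucial step is misjustified and your closing remark reflects a genuine misunderstanding. You write ``Since no edge of $C$ is processed before time $t$, the values $taken\_outgoing(u)$ and $taken\_outgoing(v)$ are both $False$ just before step $t$.'' The premise is false: $e_t$ is the \emph{latest}-arriving edge of $C$, so every other edge of $C$ is processed strictly before $t$. The correct reason that $taken\_outgoing(v)=False$ holds just before step $t$ is that $e_t$ was accepted, so the test $seen\_outgoing(v)=False$ must have passed; and if $taken\_outgoing(v)$ were ever set to $True$ at some step $t'<t$, the accepted edge at step $t'$ would be $outgoing(v)$ in $\optmatch_{t'}$, so $seen\_outgoing(v)$ would also be set to $True$ in that same iteration, contradicting $seen\_outgoing(v)=False$ at step $t$. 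Thus $seen\_outgoing(v)=False\Rightarrow taken\_outgoing(v)=False$ is precisely how the claim is established.

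Consequently, your assertion that the $seen\_outgoing(v)=False$ condition ``plays no role whatsoever in the independence proof'' is wrong; the paper's own proof flags this (``an edge $(u,v)$ cannot be taken if for either $u$ or $v$ we have taken an edge outgoing from that vertex''). In fact, if the acceptance test were weakened to only $taken\_outgoing(u)=False$, the algorithm could accept an entire triangle: because the root used to orient $\optmatch_t$ may change as $E'$ grows, one can arrange that $e_1=(1,2)$ arrives oriented $1\to 2$, $e_2=(2,3)$ arrives oriented $2\to 3$, and $e_3=(3,1)$ arrives oriented $3\to 1$, and then the weakened test passes for each edge in turn, accepting the 3-cycle. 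The $seen\_outgoing$ check on the head endpoint is exactly what forecloses this: accepting $e_1$ sets $seen\_outgoing(1)=True$, which blocks $e_3$. So the step you dismissed is load-bearing, and your proof as written does not close the gap.
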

The remainder of the analysis essentially consists of modifying the analysis of Algorithm \ref{alg:graph} to first show that the modification contained in Algorithm \ref{alg:simple-graph} allows us to use $\delta = 1$ instead of $\delta = \frac 1 4$, then apply the lemmas shown previously bounding $h_\delta$ to derive the new competitive ratio.

We redefine $f$ and $g$ below; note that these definitions are in fact identical to their definitions in the previous analysis.
\begin{definition}
    Define $f_{u, S}(i)$ to be the probability that after step $i$, $seen\_outgoing(u) = False$, given that $S$ is the set of edges that are presented in steps $1, \ldots, i$.
\end{definition}
\begin{definition}
    Define $g_{u, v, S}(i)$ to be the probability that after step $i$, both $seen\_outgoing(u) = False$ and $seen\_outgoing(v) = False$, given that $S$ is the set of edges that are presented in steps $1, \ldots, i$.
\end{definition}
The definition below is analogous to the definition of $h_{u, v, S}$ from before; in particular, the condition used to define $h^{\text{simple}}$ is the same condition used to check whether an edge between $u$ and $v$ can be taken in Algorithm \ref{alg:simple-graph}.
\begin{definition}
    Define $h^{\text{simple}}_{u, v, S}(i)$ to be the probability that after step $i$, $seen\_outgoing(v) = False$ and $taken\_outgoing(u) = False$, given that $S$ is the set of edges that are presented in steps $1, \ldots, i$.
\end{definition}
Additionally recall the definition of $h_\delta$:
\begin{definition}
    Define $h_\delta(i)$ recursively by $h_a(m') = 1$ and $h_a(i) = (1 - \frac 2 i)h_a(i - 1) + \frac \delta i \cdot \frac {m'(m' - 1)} {(i - 1)(i - 2)} \left[1 - \frac {m'} {i - 1}\right].$
\end{definition}
We now prove the following lemma analogous to Lemma \ref{lemma:graph-algorithm-h}.
\begin{lemma}\label{lemma:simple-graph-algorithm-h}
    For all vertices $u, v$, all $i$ from $m'$ to $n$, and all $S$ with $|S| = i$, $h^{\text{simple}}_{u, v, S}(i) \geq h_1(i).$
\end{lemma}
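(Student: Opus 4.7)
My plan is to prove this lemma by induction on $i$, paralleling the argument for Lemma~\ref{lemma:graph-algorithm-h} but capitalising on two determinism-based simplifications afforded by the simple-graph setting. First, Algorithm~\ref{alg:simple-graph} no longer flips a coin for the ordering $(a, b)$: for the current edge $(u,v)$ directed from $u$ to $v$ in $\optmatch_t$, the stronger $seen\_outgoing$ condition is always imposed on $v$. Second, for any prior outgoing edge $(u, w)$ directed from $u$ to $w$, Algorithm~\ref{alg:simple-graph} likewise always imposes the stronger condition on the target $w$. These two observations together eliminate the two independent $\tfrac{1}{2}$ factors that forced $\delta = \tfrac{1}{4}$ in the general-graph analysis, so the same recursion goes through with $\delta = 1$.

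The base case $i = m'$ is immediate since no flags have been set. For the inductive step, I would condition on the edge $e_i$ presented at step $i$ and let $e^u, e^v$ denote the edges $outgoing(u)$ and $outgoing(v)$ induced by $\optmatch_i$ (both determined by $S$ alone). When $e_i \notin \{e^u, e^v\}$, neither of the flags appearing in the event is affected by step $i$, contributing at least $(1 - 2/i)\,h_1(i-1)$ by the inductive hypothesis. When $e_i = e^v$, the flag $seen\_outgoing(v)$ is set to $True$ and the event fails. When $e_i = e^u \in \optmatch_i$ is directed from $u$ to a neighbor $w$ with $w \neq v$, the event after step $i$ holds exactly when it held before step $i$ and, additionally, $seen\_outgoing(w) = True$ held before step $i$, since in that case Algorithm~\ref{alg:simple-graph} refuses to take $e^u$ and $taken\_outgoing(u)$ remains $False$. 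Replaying the argument from the proof of Lemma~\ref{lemma:graph-algorithm-h} that strengthens $taken\_outgoing(u) = False$ to $seen\_outgoing(u) = False$ and then uses $g_{u,v,S'}(i-1)$ in conjunction with $1 - f_{w,S'}(i-1)$, this contribution is at least $\tfrac{1}{i} \cdot \tfrac{m'(m'-1)}{(i-1)(i-2)}\bigl(1 - \tfrac{m'}{i-1}\bigr)$. Summing the two contributions reproduces exactly the recursive definition of $h_1(i)$ from Definition~\ref{def:h-delta} with $\delta = 1$.

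The main obstacle I anticipate is the subcase in which $e^u$ coincides with the edge $(u, v)$ itself, which can only arise when $(u, v) \in S$. In that subcase the contribution from $e_i = e^u$ degenerates to $0$, and the accounting above appears to fall short of $h_1(i)$ by the $\tfrac{1}{i} \cdot \tfrac{m'(m'-1)}{(i-1)(i-2)}(1 - \tfrac{m'}{i-1})$ term. The resolution is that the lemma will only be invoked downstream on sets $S$ equal to the collection of edges presented strictly before the arrival of the edge $(u, v) \in OPT$ of interest, so $(u, v) \notin S$ in every relevant application and therefore $e^u \neq (u, v)$; once this constraint is in force, the induction closes cleanly with the improved $\delta = 1$.
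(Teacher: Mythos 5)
Your proposal follows the same inductive route as the paper's proof: condition on the edge $e_i$ presented at step $i$, split into the case $e_i \notin \{e^u, e^v\}$ (contributing $(1-\tfrac 2 i) h_1(i-1)$), the case $e_i = e^v$ (contributing $0$), and the case $e_i = e^u$ directed to a third vertex $w$ (contributing $\tfrac 1 i (1 - f_{w,S'}(i-1))\, g_{u,v,S''}(i-1)$), then matching the recursion for $h_1$; the key simplification over the general case is precisely what the paper exploits. Your caveat about the subcase $e^u = (u,v)$ is a genuine observation that the paper glosses over: the printed proof asserts without qualification that the outgoing edge from $u$ ``must connect to a third vertex $w$,'' which fails when $(u,v) \in S$ and $(u,v) = outgoing(u)$ in $\optmatch_i$, and in that subcase the contribution from $e_i = e^u$ degenerates to $0$ so the stated recursion cannot close; your fix of adding the hypothesis $(u,v) \notin S$ is correct, since it is preserved under passing to $S' = S \setminus \{e\}$ in the induction and is satisfied in the only downstream invocation via Lemma~\ref{lemma:simple-graph-algorithm-bounded-by-h-delta}. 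One further edge case both you and the paper leave implicit is $e^u \notin \optmatch_i$ (when $u$ has no outgoing edge in the forest and $outgoing(u)$ is assigned an arbitrary non-forest edge): presenting such an $e^u$ at step $i$ does not trigger the taking logic, so its contribution is $\tfrac 1 i h_1(i-1)$, which one can verify dominates $\tfrac 1 i \cdot \tfrac{m'(m'-1)}{(i-1)(i-2)}\bigl(1 - \tfrac{m'}{i-1}\bigr)$, so the recursion is still met.
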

\begin{proof}
    The proof is by induction. We have $h^{\text{simple}}_{u, v, S}(m') = 1 = h_1(m')$. Then, for any $i > m'$, we have that with probability $1 - \frac 2 i$ the edge that is presented at step $i$ is not directed from either of $u, v$; given this, the probability is a weighted average of terms $h^{\text{simple}}_{u, v, S'}(i - 1)$, which we can lower bound by $h_1(i - 1)$. This contributes $(1 - \frac 2 i)h_1(i - 1)$ to our lower bound.

    In the case that an edge $e'$ directed away from $u$ is selected, note that as the input graph is a simple graph, $e'$ must connect to a third vertex $w$. There is a $\frac 1 i$ probability that $e'$ is presented at step $i$. Then, the probability that $seen\_outgoing(w) = True$ by step $i$, and therefore $e'$ is taken, is $(1 - f_{w, S'}(i - 1))$.
    
    We therefore have an additional probability of $\frac 1 i \cdot (1 - f_{w, S'}(i - 1))$ that the edge presented at step $i$ does not violate the conditions defining $h^{\text{simple}}$. We additionally require that edges prior to step $i$ did not violate said conditions. Note that we are conditioned on the fact that $seen\_outgoing(w) = True$ by step $i$; we therefore impose the stronger condition that $seen\_outgoing(u) = seen\_outgoing(v) = False$ by step $i$, as the likelihood of this is only increased by the fact that $selected(w) \neq null$. We can thus use $g_{u, v, S''}(i - 1)$ as a lower bound for said probability.

    All of this combines to give us
    \begin{align*}
        h^{\text{simple}}_{u, v, S}(i)
        &\geq \left(1 - \frac 2 i\right)h_1(i - 1) + \frac 1 i  \cdot (1 - f_{w, S'}(i - 1)) \cdot g_{u, v, S''}(i - 1)\\
        &= \left(1 - \frac 2 i\right)h_1(i - 1) + \frac 1 i \cdot \left[1 - \frac {m'} {i - 1}\right] \cdot \frac {m'(m' - 1)} {(i - 1)(i - 2)} &\EqComment{Lemmas \ref{lemma:graph-algorithm-f} and \ref{lemma:graph-algorithm-g}}\\
        &= h_1(i). &\EqComment{Definition \ref{def:h-delta}}
    \end{align*}
\end{proof}

As with the previous analysis, the conditions defining $h^{\text{simple}}_{u, v, S}(i)$ are identical to the conditions in Algorithm \ref{alg:simple-graph} for taking an edge given that it is in $\optmatch_i$. As any edge in the overall optimum is in $\optmatch_i$, lower bounding the probability of such an edge being taken allows us to lower bound the overall approximation factor. This is expressed in the following lemma, whose proof we omit as it is analogous to the proof of Lemma \ref{lemma:graph-algorithm-bounded-by-h-delta}.
\begin{lemma}\label{lemma:simple-graph-algorithm-bounded-by-h-delta}
    Let $W(S)$ be the sum of the weights of edges in $S$, and recall that $\accset$ is the set of edges accepted by Algorithm \ref{alg:simple-graph}. Then,
    \begin{align*}
        \frac {\mathbb{E}[W(A)]} {W(OPT)} \geq \frac 1 m \sum_{t = m' + 1}^m h_1(t).
    \end{align*}
\end{lemma}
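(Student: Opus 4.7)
The plan is to mirror the proof of Lemma~\ref{lemma:graph-algorithm-bounded-by-h-delta}, with Lemma~\ref{lemma:simple-graph-algorithm-h} playing the role of Lemma~\ref{lemma:graph-algorithm-h}. Concretely, I would fix an arbitrary edge $e = (u,v) \in OPT$ and lower bound the probability that $e$ ends up in $\accset$, then combine these probabilities by linearity of expectation.

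First, I would argue that whenever $e$ is presented at some step $t > m'$, it must lie in $\optmatch_t$. This is the standard matroid/greedy observation: since $OPT$ is a maximum weight spanning forest of $G$ and $\optmatch_t$ is a maximum weight spanning forest of $G[E']$ where $E'$ is the set of edges observed so far (which now contains $e$), the exchange property forces $e \in \optmatch_t$. Hence the outer \textbf{if} in Algorithm~\ref{alg:simple-graph} is entered when $e$ arrives, and some orientation (say $u \to v$, induced by the rooted forest $\optmatch_t$) is assigned; $e$ is accepted iff $seen\_outgoing(v) = False$ and $taken\_outgoing(u) = False$.

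Second, I would condition on the \emph{set} $S$ of edges appearing in steps $1,\dots,t$ (which includes $e$) and on $e$ being the element presented at step $t$. Because the arrival order is a uniform random permutation, conditioning on these events leaves the arrival order of $S \setminus \{e\}$ in steps $1,\dots,t-1$ uniform, which is exactly the probability space underlying Definition of $h^{\text{simple}}_{u,v,S}(t)$. Hence the probability that $e$ is accepted, given these conditions, is exactly $h^{\text{simple}}_{u,v,S}(t)$, and by Lemma~\ref{lemma:simple-graph-algorithm-h} this is at least $h_1(t)$. Since the unconditional probability that $e$ arrives at step $t$ is $\tfrac{1}{m}$, averaging over $t$ gives
\[
\Pr[e \in \accset] \;\geq\; \sum_{t=m'+1}^{m} \frac{1}{m}\, h_1(t).
\]

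Finally, by linearity of expectation,
\[
\mathbb{E}[W(\accset)] \;=\; \sum_{e \in OPT} w(e)\cdot\Pr[e \in \accset] \;\geq\; W(OPT)\cdot\frac{1}{m}\sum_{t=m'+1}^{m} h_1(t),
\]
and dividing by $W(OPT)$ yields the claim. The only conceptual subtlety is the conditioning argument in the second step — making sure that the probability space defining $h^{\text{simple}}$ (uniform order on a fixed set $S$ of size $t$) really matches the conditional law of the algorithm's execution when we fix both $S$ and which element of $S$ is presented last. I do not expect this to be a serious obstacle, since it is the exact analogue of the argument already used (implicitly) in the proof of Lemma~\ref{lemma:graph-algorithm-bounded-by-h-delta}.
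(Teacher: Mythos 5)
Your proof is correct and is exactly the argument the paper intends: lower bound the acceptance probability of each $e \in OPT$ by observing $e \in \optmatch_t$ on arrival, conditioning on the arrival set and step to invoke $h^{\text{simple}}_{u,v,S}(t) \geq h_1(t)$ via Lemma~\ref{lemma:simple-graph-algorithm-h}, averaging over $t$, then applying linearity of expectation; the paper explicitly omits this proof as ``analogous to the proof of Lemma~\ref{lemma:graph-algorithm-bounded-by-h-delta}.'' One tiny slip: the final step should read $\mathbb{E}[W(\accset)] \geq \sum_{e \in OPT} w(e)\cdot\Pr[e \in \accset]$ rather than $=$, since $\accset$ may also contain edges outside $OPT$ (which have nonnegative weight because they lie in some $\optmatch_t$); this does not affect the conclusion.
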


We now complete the analysis by proving Theorem \ref{thm:simple-graph-algorithm}.

\begin{proof}[Proof of Theorem \ref{thm:simple-graph-algorithm}]

    As in the proof of Theorem \ref{thm:graph-algorithm}, we first prove that if $m$, the number of edges, is sufficiently large -- more precisely, if $m \geq M_n$ where $M_n$ may depend on $n$ -- then Algorithm \ref{alg:simple-graph} is $3.77$-competitive for graphic matroid secretary with an appropriate choice of $m'$. To see this, first note by Lemma \ref{lemma:simple-graph-algorithm-independent} that the set of edges taken by Algorithm \ref{alg:simple-graph}. Now note that if we let $\alpha = 0.4642$, then the lower bound for $\lim_{m \to \infty} \frac 1 m \sum_{i = m' + 1}^m h_1(i)$ provided by Lemma \ref{lemma:h-delta-limit} is greater than $0.2654$. It follows that there exists some $M$ such that for all $m \geq M$, $\lim_{m \to \infty} \frac 1 m \sum_{i = m' + 1}^m h_1(i) \geq 0.2654$. We then note by Lemma \ref{lemma:simple-graph-algorithm-bounded-by-h-delta} that $\frac {\mathbb{E}[W(A)]} {W(OPT)}$, which is the inverse of the competitive ratio, for Algorithm \ref{alg:simple-graph} is lower bounded by $\lim_{m \to \infty} \frac 1 m \sum_{i = m' + 1}^m h_1(i)$, from which it follows that if $m \geq M$, then Algorithm \ref{alg:simple-graph} is $\frac 1 {0.2654} < 3.77$-competitive when we take $m' = \lfloor \alpha m \rfloor$.
    
    Additionally note, as for Algorithm \ref{alg:graph}, that in order for Algorithm \ref{alg:graph} to successfully execute, we need that at every step after step $m'$, there exist enough edges in order to assign an outgoing edge to each node. This is equivalent to requiring that $m' + 1 \geq n$, for which it is sufficient to demand that $m \geq \frac n {\alpha}$. We can therefore set $M_n = \max(M, \frac n \alpha)$ where $M$ is as defined before to complete the proof of this step.

    We now prove the full theorem. To do this, when $m < M_n$, we simulate the execution of Algorithm \ref{alg:simple-graph} on a graph with additional edges by mixing in dummy edges. This modified algorithm then achieves the desired competitive ratio.
    
    As this modification is fully described in the proof of Theorem \ref{thm:graph-algorithm}, we do not repeat said description. We instead note that there may appear to be an issue with the strategy of introducing dummy edges in a simple graph, as the total number of edges is limited to $\binom n 2$. However, as described previously, the endpoints of the introduced dummy edges are irrelevant, because they are excluded from $\optmatch_t$, and so they do not need to have actual endpoints at all, meaning that we can add an arbitrarily large amount of them without invalidating the execution of the algorithm in any way. This can alternatively be interpreted as introducing $2(M_n - m)$ dummy nodes which we connect with $M_n - m$ dummy edges of weight $0$.

\end{proof}

\section{Impossibility result for high girth graphs: proof of Theorem~\ref{lowerboundthm}}

In this section we will prove our hardness result for high girth graphs. We first prove that the weighted single secretary problem is hard even if instance is drawn from a fixed distribution known to the algorithm. The proof of existence of this distribution uses Finite Ramsey Theorem and the Strong Duality Theorem from zero-sum games. We then show how to reduce this problem to the weighted graphic matroid secretary problem on graphs with large girth using a construction based on Ramanujan graphs. Specifically, we show the existence of bipartite graphs with certain properties by starting with a Ramanujan graph, then applying the probabilistic method to remove edges appropriately.
 We then show how to embed the hard instances of the weighted single secretary problem in these bipartite graphs in order to create a similarly hard instance of the weighted graphic secretary problem.
 
 We note that it is crucial that we utilize a distribution for which we have hardness even when the distribution is known up front, as opposed to simpler deterministic instances which are hard when the algorithm does not already know the instance. This is because, roughly speaking, we are able to present multiple instances drawn from such a distribution to the algorithm without the information the algorithm learns about the distribution from earlier instances aiding it when handling later instances -- the algorithm already completely knows the distribution.

\subsection{Hard inputs for single secretary}

Given an algorithm $a \in \mathcal{A}$ and an instance $I$ of the single secretary problem, $a(I)$ denotes the set of elements output by $a$ on instance $I$. Let $opt(I)$ be the optimal solution to the problem on the instance $I$. Given any subset $X$ of the elements, we denote by $w(X)$ the sum of the weights of the elements in $X$.

We say that a (randomized) algorithm $a \in \mathcal{A}$ for the single secretary problem is $\alpha$-probability competitive over a finite set of weights $W$ if for any instance of the problem where weights are chosen from $W$, the probability that $a$ selects the maximum element in the instance is at least $1/\alpha$,
where the probability is over the internal randomness of algorithm $a$ and over the random choice of the elements' order.

We will build on the proof of Correa et al.~\cite{CorreaDFS21,CorreaDFS19} to prove the following theorem. The main technical difference between that proof and ours is that we work with single secretary rather than single choice prophet inequality algorithms, and we require that the set of weights $W$ is finite, whereas the authors of \cite{CorreaDFS21} construct an infinite set.

\begin{theorem}
\label{thm:exists_W}
  For any $\varepsilon > 0$ and for any $\rho > 1$, there exists an integer $n$ and a finite set of numbers $W \subset \{1, \rho, \rho^2, \ldots\}$ over which no (randomized) algorithm for the single secretary problem with $n$ can choose the maximum with probability $\frac 1 e + \varepsilon$.
\end{theorem}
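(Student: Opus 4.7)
The plan is to adapt the argument of Correa, Dütting, Fischer, and Schewior from single-choice prophet inequalities to single secretary, and to strengthen the conclusion from an infinite weight ladder to a finite one using finite Ramsey. Throughout, the payoff is the probability that the algorithm picks the maximum element of the instance.

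\textbf{Step 1 (set-up and duality).} Fix a candidate finite $W$ and instance length $n$. Because $W$ and $n$ are finite, the space of deterministic secretary algorithms has only finitely many equivalence classes (each is a decision tree over transcripts drawn from $W^{\le n}$), and randomized algorithms are convex combinations of these. The theorem therefore reduces, via the minimax theorem for finite zero-sum games, to exhibiting $n$ and $W \subset \{1,\rho,\rho^2,\ldots\}$ such that every deterministic algorithm admits some instance in $W^n$ on which its success probability is at most $\tfrac{1}{e}+\varepsilon$. Strong duality then promotes this worst-case statement to a distribution $D^\star$ over $W^n$ that is simultaneously bad for every randomized algorithm.

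\textbf{Step 2 (infinite-ladder lower bound).} Adapting the Correa et al.\ technique to the single-secretary payoff, I would prove that on the infinite geometric ladder $U=\{\rho^i : i\ge 0\}$ with $\rho$ sufficiently large, the value-based decisions of any algorithm collapse to ordinal (rank-based) decisions: the extreme multiplicative gaps between successive weights mean that revealing the actual numerical value gives essentially no advantage over revealing only its rank among the weights seen so far. Plugging this into the classical ordinal lower bound for single secretary yields a success probability bounded by $\tfrac{1}{e}+o_n(1)$ for instances drawn as $n$-subsets of $U$.

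\textbf{Step 3 (finite truncation via Ramsey).} Given any deterministic algorithm $a$, color the $n$-subsets of $\mathbb{N}$ according to whether $a$ achieves success probability above or below $\tfrac{1}{e}+\varepsilon$ on the corresponding instance from the ladder. Step~2 guarantees that for $n$ large enough only "few" $n$-subsets can receive the good color, so the finite Ramsey theorem, applied to a sufficiently large initial segment $\{0,1,\dots,N\}$, produces a monochromatic $n$-subset which must necessarily be of the bad color. Iterating this extraction over the (finitely many) equivalence classes of deterministic algorithms once $W$ is fixed produces a single finite $W$ that is bad for every deterministic algorithm. Combined with Step~1, this completes the argument.

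The main technical obstacle is Step~2: the Correa et al.\ proof is calibrated for the prophet-inequality payoff, and one must reestablish the "value implies ordinal" reduction for the probability-of-maximum payoff with explicit dependence on $n$ so that $n$ can be chosen as a function of $\varepsilon$. A secondary subtlety is the bookkeeping in Step~3: one has to control the size of $W$ while iterating the Ramsey extraction over all relevant algorithm types, avoiding circularity between "finitely many algorithms" (which needs $W$ fixed) and "$W$ large enough for Ramsey" (which depends on the number of algorithm classes).
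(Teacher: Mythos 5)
Your overall plan — reduce to ordinal behavior, then use finite Ramsey — is in the same spirit as the paper, but two of the key steps as you describe them would not go through.

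First, Step~2 as stated is not sound, and it contradicts the quantifier structure of the theorem. You propose that for $\rho$ sufficiently large, value-based decisions ``collapse'' to ordinal ones because of the extreme multiplicative gaps. But the theorem is required to hold \emph{for every} $\rho > 1$, so the argument cannot assume $\rho$ is large. More fundamentally, a large multiplicative gap between consecutive ladder values does not by itself make an algorithm blind to values: the algorithm sees the exact exponent $i$ in $\rho^i$ and could in principle condition its stopping rule on it. The paper's way around this is precisely the Ramsey step: for each time $\ell$, one colors $\ell$-subsets by a discretized version of the algorithm's \emph{acceptance probability} $a_\ell(v_1,\dots,v_\ell)$, and Ramsey yields a subset on which that probability is (approximately) constant across choices of values. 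That is what forces value-obliviousness — not the size of $\rho$. The factor $\rho$ only enters later (in the step from probability-of-maximum to weight ratio, Lemma~\ref{lemma:Hard_Distr_2_Secretary}), and for a different purpose.

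Second, Step~3 applies Ramsey to the wrong coloring and uses it in a direction Ramsey does not support. You color $n$-subsets by whether the algorithm's success probability is above or below $1/e+\varepsilon$, and then claim the monochromatic set must be of the ``bad'' color. Ramsey gives you a monochromatic set but says nothing about which color — if most subsets were good, the monochromatic set would be good. You defer this to Step~2, which is the flawed step, so the argument is circular in the dangerous sense. The paper instead colors by \emph{per-step} acceptance probabilities, extracts a subset $W'$ on which the algorithm is $\varepsilon$-value-oblivious, and only then invokes the classical $1/e$ ordinal lower bound on that subset; there is no need to argue that ``few'' subsets are good. Relatedly, the circularity between ``$W$ fixed so that there are finitely many algorithms'' and ``$W$ large enough for Ramsey'' that you flag as a secondary subtlety is resolved in the paper by not iterating over algorithms at all: the nested Ramsey tower determines a single $W=W_0$ whose size depends only on $n$ and the discretization parameter $c=\lceil 1/(2\varepsilon)\rceil$, and the algorithm-dependent object is merely the monochromatic subset $W'\subseteq W$ on which value-obliviousness holds, which the adversary is free to choose. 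Finally, your Step~1 (minimax/duality) is not actually part of this theorem — it is used for the later Theorem~\ref{thm:exists_Distr_W} to convert the ``for every algorithm there is a bad instance'' statement into a ``there is a hard distribution'' statement; the present theorem needs only the former.
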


Given a set $X$ and a positive integer $k \in \mathbb N$, let $\binom X k$ denote the collection of all $k$-element subsets of $X$.
The following finite version of his famous theorem was proved by Ramsey in his original paper \cite{Ramsey1930} as Theorem B.

\begin{theorem}[Finite Ramsey Theorem, \cite{Ramsey1930}]\label{thm:FiniteRamsey}
  Let $W$ be any finite set and and $c,n \in \mathbb N, n \leq |W|$, and let us arbitrarily color all elements of $\binom W n$ using $c$ different colors; such a coloring is any mapping $\phi : \binom W n\rightarrow \{1,2,\ldots,c\}$. For any given numbers $c, n, T \in \mathbb N$, there exists a $k \in \mathbb N$ such that if $|W| \ge k$, then there is a subset $W' \subseteq W$ of size $|W'| \ge T$, such that all subsets  $\binom {W'} n $ are monochromatic, i.e., colored by the same color.
\end{theorem}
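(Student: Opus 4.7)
The plan is to prove the Finite Ramsey Theorem by induction on $n$, the size of the subsets being colored, following Ramsey's original argument. The base case $n=1$ reduces to the pigeonhole principle: a $c$-coloring of a set $W$ of size $k = c(T-1)+1$ must have some color class of size at least $T$, which yields the desired monochromatic $W'$.

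For the inductive step, I assume the statement for $n-1$, i.e., for every $c', T' \in \mathbb N$ there is some threshold $k'(c', n-1, T')$ with the advertised property, and aim to prove it for $n$. The key idea is an iterative ``peeling'' construction. Given a sufficiently large $W$ and a $c$-coloring $\phi$ of $\binom{W}{n}$, I pick an arbitrary element $x_1 \in W$ and define an auxiliary $c$-coloring $\phi_1$ of $\binom{W \setminus \{x_1\}}{n-1}$ by $\phi_1(S) = \phi(S \cup \{x_1\})$. If $|W|-1$ is large enough, the inductive hypothesis yields a large subset $U_1 \subseteq W \setminus \{x_1\}$ that is monochromatic with respect to $\phi_1$; call its color $\gamma_1$. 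I then pick $x_2 \in U_1$, repeat the construction inside $U_1 \setminus \{x_2\}$, and iterate, producing a sequence $x_1, x_2, \ldots, x_N$ and colors $\gamma_1, \ldots, \gamma_N \in \{1,\ldots,c\}$ with the crucial property that for any indices $i_1 < i_2 < \cdots < i_n$ one has $\phi(\{x_{i_1}, \ldots, x_{i_n}\}) = \gamma_{i_1}$ (since $x_{i_2}, \ldots, x_{i_n}$ all lie inside the monochromatic set produced at step $i_1$).

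Once such a sequence of length $N \geq c(T-1)+1$ is built, a final application of pigeonhole to $\gamma_1, \ldots, \gamma_N$ yields a color $\gamma^\star$ attained at some $T$ indices $i_1 < \cdots < i_T$, and the set $W' = \{x_{i_1}, \ldots, x_{i_T}\}$ is then monochromatic under $\phi$, completing the induction. To make this work we only need to choose $k$ large enough that all $N$ iterations can be carried out, which amounts to defining $k$ recursively by $k(c, n, T) = k'(c, n-1, k(c, n, T) - 1) + 1$ unrolled the appropriate number of times; existence of a finite such $k$ is guaranteed by the inductive hypothesis.

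The main obstacle is bookkeeping of the quantitative bound: at each of the $N$ peeling steps the size of the remaining set shrinks, and we need the inductive threshold $k'$ to still be met after $N \geq c(T-1)+1$ rounds. The cleanest way to handle this is to fix the target $N$ upfront, then define backwards a sequence of required sizes $N = n_N < n_{N-1} < \cdots < n_0 = k$ via $n_{i-1} = k'(c, n-1, n_i) + 1$, so that step $i$ begins with at least $n_{i-1}$ elements and the inductive hypothesis applied with parameter $n_i$ guarantees the next monochromatic subset has size at least $n_i$. Since the theorem only asserts the \emph{existence} of some $k$, no explicit estimate is required, though one should note that this recursion produces tower-type bounds in $n$.
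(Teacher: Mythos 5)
The paper does not prove Theorem~\ref{thm:FiniteRamsey}; it is stated as a classical result and attributed directly to Ramsey's original 1930 paper (Theorem~B there), so there is no in-paper argument to compare against. Your proof is a correct and standard one: induction on $n$, with the pigeonhole base case $n=1$ and the inductive ``peeling'' step that builds a long sequence $x_1,\ldots,x_N$ together with a nested chain of monochromatic sets $U_1 \supseteq U_2 \supseteq \cdots$, so that $\phi(\{x_{i_1},\ldots,x_{i_n}\})$ depends only on the least index $i_1$; a final pigeonhole then extracts the monochromatic $W'$. This is essentially Ramsey's own argument and the one given in textbook treatments, and it yields the tower-type bound you mention. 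One tiny bookkeeping remark: you set the terminal target $n_N = N$, but in the last iteration you only need $U_{N-1}$ to be nonempty so that $x_N$ can be chosen; taking $n_N = 1$ (or dropping the final application of the inductive hypothesis) would suffice. This is harmless since you only need existence of some finite $k$, but it slightly inflates the recursion.
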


We will formalize a notion of a (randomized) algorithm for the single secretary problem as follows. Let the set of all randomized algorithms for the single secretary problem be denoted by $\mathcal{A}$. Given an algorithm $a \in \mathcal{A}$, we denote by $v_1,v_2,\ldots,v_n$ the sequence of values that the algorithm sees in subsequent times, i.e., $a$ sees vale $v_i$ at time $i$. Then let us denote by $a_i(v_1,\ldots,v_i) \in [0,1]$ the probability that algorithm $a$ stops at time $i$ and accepts value $v_i$, conditioned on having seen the values $v_1,\ldots,v_i$ and not having stopped on any of the previous values $v_1,\ldots,v_{i-1}$. 

We will define first a notion of an order-oblivious algorithm following \cite{CorreaDFS21}. 

\begin{definition}
An algorithm $a \in \mathcal{A}$ is called {\em order-oblivious} if for each $j \in [n]$, and all pair-wise disjoint values $v_1,\ldots,v_j \in \reals_{\geq 0}$, and all permutations $\pi \in \mathcal{S}_{j-1}$, we have  
$a_j(v_1,\ldots,v_j) = a_j(v_{\pi(1)},\ldots,v_{\pi(j-1)}, v_j)$. 
\end{definition}

In words, the decision of an order-oblivious algorithm $a$ at any time $j$ depends only on the set of previous elements and the current element at time $j$, i.e., it does not depend on the order of the previous elements up to time $j-1$. We first note the following lemma; this lemma is proven as the slightly stronger Lemma 2 in \cite{CorreaDFS21}.

\begin{lemma}\label{lemma:order-oblivious}
  Suppose that there is a (randomized) algorithm for the single secretary problem that is $\alpha$-probability competitive over a set of weights $W$. Then there is another algorithm for this problem that is also $\alpha$-probability competitive over $W$, and is additionally order-oblivious.
\end{lemma}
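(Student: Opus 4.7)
The plan is to apply a symmetrization argument essentially identical to the one used in \cite[Lemma 2]{CorreaDFS21}. Given an $\alpha$-probability competitive algorithm $a \in \mathcal{A}$, I define a new algorithm $a'$ whose stopping probability at time $j$ on prefix $(v_1,\ldots,v_j)$ is the average of $a$'s stopping probability over all permutations of the past:
\[
  a'_j(v_1,\ldots,v_j) \;=\; \frac{1}{(j-1)!} \sum_{\pi \in \mathcal{S}_{j-1}} a_j\bigl(v_{\pi(1)},\ldots,v_{\pi(j-1)},v_j\bigr).
\]
By construction $a'_j$ is invariant under any permutation of $(v_1,\ldots,v_{j-1})$, so $a'$ satisfies the order-obliviousness property stated in the excerpt.

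It remains to check that $a'$ still selects the maximum with probability at least $1/\alpha$ on every instance whose weights come from $W$. The crucial observation is that the arrival order is itself a uniformly random permutation of the input multiset, so conditional on the unordered multiset $\{v_1,\ldots,v_{j-1}\}$, the actual ordered prefix is uniformly distributed over all of its orderings. I would proceed by induction on $j$, showing that for every multiset $U \subseteq W$ of size $j$ and every $v \in U$, the conditional probability that the algorithm stops at time $j$ on value $v$, given that $\{v_1,\ldots,v_j\}=U$ and $v_j=v$, is the same under $a$ and under $a'$. Summing this identity over $j$ and specializing $v$ to the maximum element of the instance gives $\Pr[a'\text{ selects the max}] \;=\; \Pr[a\text{ selects the max}] \;\geq\; 1/\alpha$.

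The main obstacle is that the stopping event at step $j$ depends on the path-dependent event of not having stopped earlier: $a$ and $a'$ do not generate identical sample paths, only identical distributions over the triple (multiset observed, current value, still-running indicator). I would discharge this by conditioning at every step on the multiset $\{v_1,\ldots,v_j\}$ and repeatedly exploiting the exchangeability of the uniformly random arrival order to interchange applications of $a$'s rule on a particular ordering with applications of $a'$'s rule, which by definition is $a$'s rule averaged over orderings of the past. This is precisely the manipulation carried out in the proof of \cite[Lemma 2]{CorreaDFS21}, so I would largely follow their treatment; the argument is in fact slightly simpler in the secretary setting than in the prophet setting, since the identity of the maximum element of the instance does not depend on the arrival order at all.
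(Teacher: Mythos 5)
The paper does not supply its own proof of this lemma: it states the result and cites it to~\cite[Lemma~2]{CorreaDFS21}, so there is no paper proof to compare against; you are in effect reconstructing that cited argument. Your reconstruction, however, has a genuine gap: the naive symmetrization
\[
  a'_j(v_1,\ldots,v_j) \;=\; \frac{1}{(j-1)!} \sum_{\pi \in \mathcal{S}_{j-1}} a_j\bigl(v_{\pi(1)},\ldots,v_{\pi(j-1)},v_j\bigr)
\]
does \emph{not} preserve the per--step, per--value stopping probabilities that your inductive invariant asserts, and in fact it can strictly lower the overall probability of selecting the maximum. Concretely, take $n=3$ with weights $\{x,y,v\}$ and $v$ the maximum, and consider the deterministic rule with $a_1(x)=0.5$, $a_1(y)=0$, $a_2(x,y)=a_2(y,x)=0$, $a_3(x,y,v)=0$, $a_3(y,x,v)=1$. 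Expanding the six equally likely arrival orders, one finds
\[
  \Pr[a'\ \text{selects}\ v] - \Pr[a\ \text{selects}\ v]
  = \tfrac{1}{12}\bigl(a_3(y,x,v)-a_3(x,y,v)\bigr)
    \bigl[(1-a_1(x))(1-a_2(x,y)) - (1-a_1(y))(1-a_2(y,x))\bigr],
\]
which here equals $-\tfrac{1}{24}<0$. So the identity ``$\Pr[a'\ \text{selects the max}]=\Pr[a\ \text{selects the max}]$'' you appeal to is false for the uniform average, and so is the claimed equality of the conditional stop-at-$j$-on-$v$ probabilities (the left side can be $0$ while the right side is positive already with $a_1(y)=1$).

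The problem is that the survival factor $\prod_{i<j}(1-a_i(\cdot))$ and the stopping probability $a_j(\cdot,v_j)$ are correlated functions of the same ordering of the past, and averaging only the latter breaks that correlation in an uncontrolled way. The fix is to average \emph{conditionally on survival}: set
\[
  a'_j(v_1,\ldots,v_j)
  \;=\;
  \frac{\Exu{\tau}{\prod_{i<j}\bigl(1-a_i(\tau_1,\ldots,\tau_i)\bigr)\, a_j(\tau_1,\ldots,\tau_{j-1},v_j)}}
       {\Exu{\tau}{\prod_{i<j}\bigl(1-a_i(\tau_1,\ldots,\tau_i)\bigr)}},
\]
with $\tau$ a uniformly random ordering of the multiset $\{v_1,\ldots,v_{j-1}\}$ (and $a'_j:=0$ when the denominator vanishes). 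This $a'$ is order-oblivious by construction. One then shows by induction on $j$ that the survival probability of $a'$, conditioned on the multiset of the first $j-1$ arrivals, equals the corresponding quantity for $a$; combined with the definition of $a'_j$ this yields exact equality of the conditional stop-at-$j$-on-$v$ probabilities, and hence $\Pr[a'\ \text{selects the max}]=\Pr[a\ \text{selects the max}]$. That is the version of the invariant that actually closes. Your write-up flags ``the path-dependent survival event'' as the obstacle but then hand-waves it away; with the unconditional average that obstacle is fatal, and the conditional average is the missing ingredient.
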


We then need a notion of $\varepsilon$-value-oblivious algorithm from \cite{CorreaDFS21}. 

\begin{definition}
Given any $\varepsilon > 0$ and any set $W \subseteq \R_{+}$, algorithm $a \in \mathcal{A}$ is called {\em $\varepsilon$-value-oblivious} on set $W$ if, for all $i \in [n]$, there exists $q_i \in [0,1]$ such that, for all pairwise distinct $v_1,\ldots,v_i \in W$ with $v_i > \max\{v_1,\ldots,v_{i-1}\}$, we have that $a_i(v_1,\ldots,v_i) \in [q_i - \varepsilon, q_i + \varepsilon)$. 
\end{definition}

Such an algorithm's decision of whether to stop at any time $i$ does not depend too much on the previously seen values $v_1,\ldots,v_{i-1}$; note also that the notion of value-obliviousness is importantly connected to the set of weights $W$. 

We will now need the following lemma, which corresponds to Lemma 1 in \cite{CorreaDFS21}. For the purpose of proving this lemma, we will also need a notion of $(\varepsilon,i)$-value-obliviousness. Given any $\varepsilon > 0$, $i \in [n]$, and any set $W \subseteq \R_{+}$, algorithm $a \in \mathcal{A}$ is called {\em $(\varepsilon,i)$-value-oblivious} on set $W$ if there exists $q \in [0,1]$ such that, for all pairwise distinct $v_1,\ldots,v_i \in W$ with $v_i > \max\{v_1,\ldots,v_{i-1}\}$, we have that $a_i(v_1,\ldots,v_i) \in [q - \varepsilon, q + \varepsilon)$.

Note that an algorithm is $\varepsilon$-value-oblivious iff it is $(\varepsilon,i)$-value-oblivious for all $i \in [n]$.

\begin{lemma}\label{lemma:epsilon-value-oblivious}
  Let $\rho > 1$ be a real number, and let us denote $\Psi = \{1, \rho, \rho^2, \ldots\}$.
  For any $\varepsilon > 0$ and $n \in \nats$ there exists a finite set $W \subset \Psi$ with the following property.
  For any algorithm $a \in \mathcal{A}$ for the single secretary problem with $n$ items that is $\alpha$-probability competitive,
  there exists another algorithm $a' \in \mathcal{A}$ for the same problem that is also $\alpha$-probability competitive and a finite set
  $W' \subseteq W$, such that algorithm $a'$ is $\varepsilon$-value-oblivious over $W'$ and $|W'| = h(n) \geq n$, for any function $h : \nats \rightarrow \nats$.
\end{lemma}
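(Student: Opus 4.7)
The plan is to use an iterated application of the Finite Ramsey Theorem to the decision probabilities of an order-oblivious algorithm. First, by Lemma \ref{lemma:order-oblivious}, we may replace $a$ with an $\alpha$-probability competitive algorithm $a'$ that is order-oblivious; hence $a'_i(v_1,\ldots,v_i)$ depends only on the set $\{v_1,\ldots,v_{i-1}\}$ and on $v_i$. Under the $\varepsilon$-value-oblivious condition we additionally require $v_i > \max\{v_1,\ldots,v_{i-1}\}$, so for each $i \in [n]$ the value $a'_i(v_1,\ldots,v_i)$ is determined by the unordered $i$-subset $\{v_1,\ldots,v_i\} \subseteq W$ (with $v_i$ being its maximum).

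Next, discretize $[0,1]$ into $c = \lceil 1/\varepsilon \rceil$ intervals of length at most $\varepsilon$. For each $i \in [n]$, define a coloring $\phi_i : \binom{W}{i} \to \{1,\ldots,c\}$ by letting $\phi_i(S)$ be the index of the interval containing $a'_i(v_1,\ldots,v_i)$, where $\{v_1,\ldots,v_i\} = S$ and $v_i = \max S$ (well-defined by order-obliviousness). Apply Theorem \ref{thm:FiniteRamsey} iteratively for subset-sizes $r = n, n-1, \ldots, 1$, each time with $c$ colors, to construct a descending chain $W \supseteq W_n \supseteq W_{n-1} \supseteq \cdots \supseteq W_1$ such that each $W_r$ is monochromatic under $\phi_r$, and remains monochromatic under $\phi_n,\ldots,\phi_{r+1}$ by subset inclusion. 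Working backwards from the final target size $h(n)$, set $T_1 = h(n)$ and let $T_{r+1}$ be the Ramsey number supplied by Theorem \ref{thm:FiniteRamsey} for $r$-subsets, $c$ colors, and target $T_r$; the required size of the starting universe $W$ is then $K := T_{n+1}$, a finite integer depending only on $n$, $\varepsilon$, and $h$, but not on $a$.

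Finally, choose $W$ to be the first $K$ powers of $\rho$ in $\Psi = \{1,\rho,\rho^2,\ldots\}$; this is the finite set promised by the lemma. Given any $\alpha$-competitive algorithm $a$, derive $a'$ and the colorings $\phi_1,\ldots,\phi_n$ as above and extract $W' := W_1 \subseteq W$ with $|W'| \geq h(n)$ on which every $\phi_i$ is monochromatic. Monochromaticity of $\phi_i$ means that for all pairwise distinct $v_1,\ldots,v_i \in W'$ with $v_i > \max\{v_1,\ldots,v_{i-1}\}$, the value $a'_i(v_1,\ldots,v_i)$ lies in a single interval of length $\varepsilon$, supplying the $q_i$ required by the definition of $\varepsilon$-value-obliviousness.

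The main subtlety is to ensure that a single $W'$ is simultaneously monochromatic for all subset sizes $1,\ldots,n$; this is handled by the backward tower of targets $T_1 \leq T_2 \leq \cdots \leq T_{n+1}$, whose feasibility is exactly what the Finite Ramsey Theorem guarantees. Crucially, the Ramsey bound $K$ depends only on $(c,n,h)$ and not on the specific colorings, so a single finite $W$ works uniformly for every algorithm $a$, as required by the quantifier order in the lemma.
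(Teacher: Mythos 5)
Your proposal is correct and follows essentially the same route as the paper's proof: pass to an order-oblivious algorithm via Lemma~\ref{lemma:order-oblivious}, discretize $[0,1]$ into $c$ intervals to get a finite coloring, iterate the Finite Ramsey Theorem across all subset sizes $1,\ldots,n$ (monochromaticity for larger subset sizes being preserved under passing to subsets), and pre-compute the tower of Ramsey numbers so that $W$ can be fixed independently of $a$. The only cosmetic differences are that you peel off subset sizes in decreasing order $n,\ldots,1$ while the paper goes increasing $1,\ldots,n$, and you use $c=\lceil 1/\varepsilon\rceil$ intervals of length $\varepsilon$ where the paper uses $c=\lceil 1/(2\varepsilon)\rceil$ intervals of length $2\varepsilon$; both choices satisfy the $\varepsilon$-value-oblivious condition.
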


\begin{proof}
Let $W$ be a subset of $\Psi$ to be chosen later. Let $a \in \mathcal{A}$ be any algorithm for the single secretary problem with $n$ items that is $\alpha$-probability competitive. By Lemma \ref{lemma:order-oblivious}, we can assume that algorithm $a$ is order-oblivious.
 
We now fix any $\varepsilon > 0$ and will show by induction on $j \in [n]$ the following claim:

\noindent
{\bf (*)} there exists a finite set $W_j \subset W$ such that, for all $i \in [j]$, algorithm $a$ is $(\varepsilon,i)$-value-oblivious on $W_j$. 

For $j=n$, this will imply that $a$ is $(\varepsilon,j)$-value-oblivious on $W_n$ for all $j \in [n]$; hence $a$ is $\varepsilon$-value-oblivious on $W_n$.

We will intend to apply the Finite Ramsey's Theorem \ref{thm:FiniteRamsey} in the following way. Let $c, \ell \in \nats$. Suppose that we are given any finite set $W_{\ell-1}$ of size $|W_{\ell-1}| = w_{\ell-1}$, whose subsets ${W_{\ell-1} \choose \ell}$ we arbitrarily color using $c$ colors. Let $w_{\ell} \in \nats$ be any integer. By the Finite Ramsey's Theorem, there exists $w_{\ell-1} \in \nats$, such that if $|W_{\ell-1}| \geq w_{\ell-1}$ (let us take exactly $|W_{\ell-1}| = w_{\ell-1}$), then there is a subset $W_{\ell} \subseteq W_{\ell-1}$ of size $|W_{\ell}| \geq w_{\ell}$ (let us take exactly $|W_{\ell}| = w_{\ell}$), such that all subsets ${W_{\ell} \choose \ell}$ are monochromatic.

Given $c, \ell \in \nats$ and $w_{\ell} \in \nats$, let us denote this Ramsey number $w_{\ell-1}$ as 
$w_{\ell-1} = \mathcal{R}(w_{\ell}, \ell, c)$. Let $c = \lceil 1/2\varepsilon \rceil$. We define consecutive Ramsey numbers which we will use in the proof:
$$
 w_n = h(n), \,\,\, w_{n-1} = \mathcal{R}(w_{n}, n, c), \,\,\, w_{n-2} = \mathcal{R}(w_{n-1}, n-1, c), \,\, \ldots, \,\,
 w_{1} = \mathcal{R}(w_{2}, 2, c), \,\,\, w_{0} = \mathcal{R}(w_{1}, 1, c) \, .
$$ This implies that $w_0$ is defined as follows:
$$
   w_0 = \mathcal{R}(w_{1}, 1, c) = \mathcal{R}(\mathcal{R}(w_{2}, 2, c), 1, c) = \ldots = \underbrace{\mathcal{R}(\mathcal{R}( \ldots (\mathcal{R}}_{n}(\overbrace{h(n)}^{w_n}, \underbrace{n, c), n-1,c), \ldots),1,c)}_{n} \, .
$$

We will now proceed with the inductive proof to show claim {\bf (*)} by induction on  $j \in [n]$. Let the initial set $W_0 \subset \Psi$ be {\em any} subset of consecutive positive pair-wise distinct numbers from set $\Psi$ of size $|W_0| = w_0$. We then now choose $W = W_0$, noting that the definition of $W_0$ did not depend on the algorithm $a$. The set $W_0$ thus clearly satisfies the claim {\bf (*)} for $j=0$. We will show the claim for $j=\ell > 0$, assuming that it holds for $j < \ell$. Observe that it suffices to find a finite set $W_{\ell} \subseteq W_{\ell-1}$ such that algorithm $a$ is $(\varepsilon,\ell)$-value-oblivious on $W_{\ell}$, because the induction hypothesis implies $(\varepsilon,i)$-value-obliviousness of algorithm $a$ on $W_{\ell}$ as a subset of $W_i$ for all $i \in [\ell-1]$.

To apply the Finite Ramsey's Theorem, we define first a specific coloring of ${W_{\ell-1} \choose \ell}$ with $c$ colors from the set $\{1,2,\ldots,c = \lceil 1/2\varepsilon \rceil\}$. Consider any set $\{v_1,\ldots,v_{\ell}\} \in {W_{\ell-1} \choose \ell}$ such that 
$v_{\ell} > \max\{v_1,\ldots,v_{\ell-1}\}$. Observe that there exists a unique number $u \in \{1,2,\ldots,\lceil 1/2\varepsilon \rceil\}$ such that $a_{\ell}(v_1,\ldots,v_{\ell}) \in [(2u-1) \cdot \varepsilon - \varepsilon, (2u-1) \cdot \varepsilon + \varepsilon)$. We color the $\ell$-element set $\{v_1,\ldots,v_{\ell}\}$ with color $u$.

By the Finite Ramsey's Theorem, there exists a finite subset $W_{\ell} \subseteq W_{\ell-1}$ of size $|W_{\ell}| = w_{\ell}$, such that all subsets ${W_{\ell} \choose \ell}$ are monochromatic; suppose that this monochromatic color is $u \in \{1,2,\ldots,c\}$. Let us now take $q = (2u-1) \cdot \varepsilon$ and consider any $\ell$ distinct values $v_1,\ldots,v_{\ell} \in W_{\ell}$ such that
 $v_{\ell} > \max\{v_1,\ldots,v_{\ell-1}\}$. Because set $\{v_1,\ldots,v_{\ell}\} \in {W_{\ell-1} \choose \ell}$ has color $u$, we have that $a_{\ell}(v_{\pi(1)},\ldots,v_{\pi(\ell-1)},v_{\ell}) \in [q - \varepsilon, q + \varepsilon)$ for some permutation $\pi \in \mathcal{S}_{\ell-1}$. But because algorithm $a$ is order-oblivious, it also holds that $a_{\ell}(v_{1},\ldots,v_{\ell-1},v_{\ell}) \in [q - \varepsilon, q + \varepsilon)$. Therefore, algorithm $a$ is $(\varepsilon,\ell)$-value-oblivious on $W_{\ell}$. This finished the proof of the inductive step.
 
To conclude the proof of the lemma, we now simply take $W' = W_n$.
\end{proof}

We now use the above lemma to prove the main theorem.

\begin{proof}[Proof of Theorem~\ref{thm:exists_W}]

Say that an algorithm $a \in \mathcal A$ is \emph{optimally value-oblivious} if, for all $i \in [n]$, there exists $q_i \in [0,1]$ such that, for all pairwise distinct $v_1,\ldots,v_i \in W$ with $v_i > \max\{v_1,\ldots,v_{i-1}\}$, we have that $a_i(v_1,\ldots,v_i) = q_i$, and for all for all pairwise distinct $v_1,\ldots,v_i \in W$ with $v_i < \max\{v_1,\ldots,v_{i-1}\}$,  we have that $a_i(v_1,\ldots,v_i) = 0$. An algorithm that it is optimally value-oblivious then acts in a way that depends only on the relative order of elements, and not their specific weights; it then follows from the optimal solution to the original secretary problem \cite{Ferguson1989who} that for any integer $n$, no optimally value-oblivious algorithm for the single secretary problem with $n$ items can have a probability of selecting the maximum element that is greater than $\frac 1 e + o(1)$. It follows that there exists some $n$ such that no optimally value-oblivious algorithm for the single secretary problem with $n$ items has a probability of choosing the maximum that is at least $\frac 1 e + \frac \varepsilon 2$. Note that as the algorithm's actions depend only on the relative order of elements and not on the specific weights, the bound of $\frac 1 e + \frac \varepsilon 2$ applies to \emph{any} instance of $n$ item single secretary where the items are distinct.

Therefore, fix such an $n$. Now, apply Lemma \ref{lemma:epsilon-value-oblivious} with parameters $\rho$, $\frac \varepsilon {2n}$, $n$, and $h(n') = n'$ to obtain a finite set $W$ satisfying the conditions of said lemma. We will now prove the theorem for $n, W$, (noting that Lemma \ref{lemma:epsilon-value-oblivious} gives us that $W \subseteq \{1, \rho, \rho^2, \ldots\}$) meaning that we would like to show that for any algorithm $a \in \mathcal A$ that it is $\alpha$-probability competitive, $\alpha$ satisfies $\frac 1 \alpha \leq \frac 1 e + \varepsilon$. To see this, first use the property of $W$ derived from Lemma \ref{lemma:epsilon-value-oblivious} that there exists an algorithm $a' \in \mathcal A$ that is also $\alpha$-probability competitive over $W$ and which is $\frac \varepsilon {2n}$-value-oblivious on some set $W'$ of size at least $n$.

Now define an algorithm $a'' \in \mathcal A$ as a relaxation of $a'$. Specifically, as $a'$ is $\frac \varepsilon {2n}$-value-oblivious over $W'$, there exist $q_i$ such that for all pairwise distinct $v_1, \ldots, v_i \in W'$ with $v_i > \max\{v_1, \ldots, v_{i - 1}\}$, $a'_i(v_1, \ldots, v_i) \in [q_i - \frac \varepsilon {2n}, q_i + \frac \varepsilon {2n})$. Therefore, fix any such $q_i$ and define $a''$ by $a''_i(v_1, \ldots, v_i) = q_i$ for all pairwise distinct $v_1, \ldots, v_i \in W'$ such that $v_i > \max\{v_1, \ldots, v_{i - 1}\}$; when $v_i < \max\{v_1, \ldots, v_{i - 1}\}$ we simply take $a''_i(v_1, \ldots, v_i) = a'_i(v_1, \ldots, v_i)$. Now note that if we defined $a'''$ alternately by instead saying that when $v_i < \max\{v_1, \ldots, v_{i - 1}\}$ we  take $a'''_i(v_1, \ldots, v_i) = 0$, then $a'''$ would always have at least as high of a probability of selecting the maximum as $a''$, as selecting an element known not to be the maximum is never useful. However, $a'''$ is then an optimally value-oblivious algorithm, meaning that on any instance with distinct weights its probability of choosing the maximum is less than $\frac 1 e + \frac \varepsilon 2$; the same is then true of $a''$.

Given this, consider alternately defining the execution of $a'$ and $a''$ as follows. For any instance $I$ of the single secretary problem with $n$ items over $W'$, we both randomly permute $I$ and also choose $n$ numbers $x_1, \ldots, x_n$ independently and uniformly at random from $[0, 1]$. Then, at step $i$, given that the elements seen are $v_1, \ldots, v_i$ in that order, each algorithm $b$, if it has not already stopped prior to step $i$, stops at step $i$ if and only if $x_i < b_i(v_1, \ldots, v_i)$. It is easy to see that the execution of each of $a', a''$ individually in this model is the same as their usual execution.

Observe that for any $i$ and pairwise distinct $v_1, \ldots, v_i \in W'$, the difference between $a'_i(v_1, \ldots, v_i)$ and $a''_i(v_1, \ldots, v_i)$ is at most $\frac \varepsilon {2n}$. It follows that in the above model of execution, the probability that the actions taken by $a'$ and $a''$ differ on step $i$ given that they have taken the same actions prior to step $i$ is at most $\frac \varepsilon {2n}$. Therefore, by the union bound, as there are $n$ steps, the probability that they ever differ is at most $n \cdot \frac \varepsilon {2n} = \frac \varepsilon 2$.

We can therefore roughly argue that because $a''$ takes the maximum with probability less than $\frac 1 e + \frac \varepsilon 2$ and $a', a''$ differ with probability at most $\frac \varepsilon 2$, $a'$ must take the maximum with probability less than $\frac 1 e + \varepsilon$. Formally, define $S$ to be the event in the above model of execution that the actions taken by $a'$ and $a''$ are identical, and define $T', T''$ to be the event that $a', a''$ respectively take the maximum element and $T''$. We have seen that $P(T') \geq \frac 1 \alpha$, $P(T'') < \frac 1 e + \frac \varepsilon 2$, and $P(S^C) \leq \frac \varepsilon 2$. Furthermore, by the definition of $S$, the events $T' \land S$ and $T'' \land S$ are the same. It follows that
\begin{align*}
    \frac 1 \alpha = P(T') &\leq P(T' \land S) + P\left(T' \land S^C\right)\\
    &= P(T'' \land S) + P\left(T' \land S^C\right)\\
    &\leq P(T'') + P\left(S^C\right)\\
    &< \left(\frac 1 e + \frac \varepsilon 2\right) + \frac \varepsilon 2 = \frac 1 e + \varepsilon.
\end{align*}
We therefore have that $\alpha$ satisfies $\frac 1 \alpha \leq \frac 1 e + \varepsilon$ as desired.
\end{proof}

\subsubsection{Minimax argument}

We will now use the Minimax principle and Theorem \ref{thm:exists_W} to prove the following.

\begin{theorem}\label{thm:exists_Distr_W}
  For any $\varepsilon > 0$ and for any $\rho > 1$, there exists an integer $n$, a finite set $W \subset \{1, \rho, \rho^2, \ldots\}$, and a finite distribution $D$ over all subsets of size $n$ from $W$ with the following property.
  If a random instance of the secretary problem is drawn from $D$, for any algorithm, the probability of choosing the maximum element from the instance random instance is at most $\frac 1 e + \varepsilon$, where the probability is over the randomness of the instance and the algorithm.
\end{theorem}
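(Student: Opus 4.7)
The plan is to derive Theorem \ref{thm:exists_Distr_W} from Theorem \ref{thm:exists_W} by a direct application of von Neumann's minimax theorem (equivalently, Yao's principle). Fix $\varepsilon > 0$ and $\rho > 1$, and let $n$ and $W$ be as produced by Theorem \ref{thm:exists_W}. Consider the two-player zero-sum game whose row strategies are deterministic algorithms for the single-secretary problem with $n$ items over $W$, whose column strategies are the instances (i.e.\ $n$-element subsets of $W$), and whose payoff in row $a$ and column $I$ is $\mathrm{succ}(a,I)$, the probability, over the uniformly random arrival order of $I$, that $a$ selects the maximum element of $I$. The row player seeks to maximize and the column player to minimize.

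The crucial structural point is that both pure strategy sets may be taken finite, so that the elementary finite-matrix minimax theorem applies and no appeal to Sion's theorem or other infinite-dimensional duality is required. The adversary has only $\binom{|W|}{n}$ pure strategies since $W$ is finite. On the algorithm side, a deterministic strategy is fully described by its stop/continue decision on each ordered prefix $(v_1,\dots,v_i)$ with $v_j \in W$ and $i \leq n$; because both $W$ and $n$ are finite, there are only finitely many such prefixes and therefore only finitely many behaviorally distinct deterministic algorithms. Von Neumann's theorem then yields
$$
\max_{A}\, \min_{I}\, \mathbb{E}_{a \sim A}\bigl[\mathrm{succ}(a,I)\bigr]
\;=\;
\min_{D}\, \max_{a}\, \mathbb{E}_{I \sim D}\bigl[\mathrm{succ}(a,I)\bigr],
$$
where $A$ ranges over distributions on deterministic algorithms (equivalently, randomized algorithms), $D$ over distributions on instances, and $a$ over deterministic algorithms.

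Theorem \ref{thm:exists_W} exactly bounds the left-hand side: its contrapositive states that for every randomized algorithm $A$ there exists an instance $I$ with $\mathbb{E}_{a \sim A}[\mathrm{succ}(a,I)] \leq \tfrac{1}{e}+\varepsilon$, hence $\max_A \min_I \leq \tfrac{1}{e}+\varepsilon$. By the duality identity, the right-hand side inherits the same bound, so there exists a distribution $D^*$ on instances with $\max_a \mathbb{E}_{I \sim D^*}[\mathrm{succ}(a,I)] \leq \tfrac{1}{e}+\varepsilon$. Finally, since any randomized algorithm's expected success probability against $D^*$ is a convex combination of those of deterministic algorithms, the same bound holds against every randomized algorithm, which is exactly Theorem \ref{thm:exists_Distr_W}. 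The main (and essentially only) obstacle is making the finiteness argument airtight: one must verify that after quotienting deterministic algorithms by behavioral equivalence on inputs drawn from $W$, the set of pure row strategies is genuinely finite, so that the standard finite-matrix minimax theorem suffices; the remainder of the argument is standard duality followed by a one-line convexity step.
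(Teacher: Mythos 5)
Your proposal is correct and follows essentially the same route as the paper's proof: set up the finite two-player zero-sum game between algorithm designer and adversary, observe both pure-strategy sets are finite (instances because $|W|$ and $n$ are finite, deterministic algorithms because they are boolean functions on finitely many prefixes), apply von Neumann's minimax theorem, and read off the hard distribution from the value of the game combined with Theorem~\ref{thm:exists_W}. The only cosmetic difference is that you make the final convexity step (passing from deterministic to randomized algorithms against $D^*$) explicit, which the paper leaves implicit.
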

\begin{proof}
  Let $W$ and $n$ denote the values obtained from Theorem~\ref{thm:exists_W}.
  Consider the following finite game between
  two players which we refer to
  as the algorithm \enquote{designer} and 
  the \enquote{adversary}.
  The set of actions for the adversary is all finite sets of size $n$ from $W$.
  The set of actions for the designer is all deterministic algorithms for the secretary problem with instances that have size $n$ and are drawn from $W$.
  Given a deterministic algorithm $A$ chosen by the designer, and an instance $x$ chosen by the adversary, we define the utility of the designer
  to be the probability that $A$ chooses the maximum value in the instance $x$, where the probability here is over the random arrival order of the elements in $x$. The game is zero-sum and the adversary's goal is to minimize the designer's utility.

  We first claim that the set of actions for both the designer and the adversary is finite. For the adversary, this is clear since $n$ is finite. For the designer, observe that each deterministic algorithm can be uniquely described by a boolean function $f: (\cup_{i=1}^{n} W^{i}) \to \cbr{0, 1}$
  indicating whether or not it would accept the last element if presented with a tuple $w$ of size $i$ which has elements chosen from $W$. Since $n$ and $W$ are finite, the set of possible deterministic algorithms is finite as well.

  We next consider the mixed strategies in the game. 
  The set of mixed strategies for the algorithm designer is a the set of all randomized algorithms. This is because any randomized algorithm can be made deterministic by fixing the \enquote{seed} value used by the algorithm. 
  A mixed strategy for the adversary is a distribution over~the~instances.

  By the Von Neumann's minimax theorem~\cite{VonNeumann28}, if the players use mixed strategies, then the expected utility of the designer is the same regardless of which player moves first.
  Note that we can invoke the theorem as the game is finite.
  If the designer moves first, then the adversary can choose a distribution over instance after observing the randomized algorithm of the designer. By Theorem~\ref{thm:exists_W},
  for any such randomized algorithm, the adversary can choose an instance of size $n$ from $W$ for which the expected utility of the algorithm is at most $\frac 1 e + \varepsilon$. 
  If the adversary moves first and chooses some distribution $D$ over the instances, then the designer can choose its (randomized) algorithm after observing $D$. 
  Since the expected utility of the algorithm is the same in both cases, it follows that there exists a distribution $D$ over instances for which the designer cannot choose an algorithm which chooses the maximum with probability greater than $\frac 1 e + \varepsilon$, finishing the proof.
\end{proof}

\subsection{Embedding single secretary in simple graphs}

It is straightforward to reduce the single secretary problem to the graphical matroid secretary problem when the graph allows for multiple edges. In that case, given the single secretary problem with $n$ adversarial weights, we create a multi-graph with just two vertices $u$ and $v$ and a set of $n$ parallel edges between $u$ and $v$, with each edge assigned exactly one of these $n$ adversarial weights. Note, that this multi-graph has girth $2$. We will need to work harder to reduce the single secretary problem to the graphical matroid problem on simple graphs, where we will even assume that the simple graph has an arbitrarily large girth.

For said reduction, we define below three different measures of performance of a secretarial algorithm. Given an algorithm $a \in \mathcal{A}$ and an instance $I$ of the matroid secretary problem, we abuse the notation from the previous section and denote by $a(I)$ the set of elements of the matroid output by $a$ on instance $I$. Let also $opt(I)$ be the optimal solution to the problem on the instance $I$. Given any subset $X$ of the matroid's elements, we denote by $w(X)$ the sum of the weights of all elements in $X$.

Let $\Pi$ be a matroid secretary problem with matroid which has $n \in \nats$ elements. Given a finite set of numbers (weights) $W \subset \R_{+}$, let $D$ be a probability distribution over the set $W^n$. We will later on instantiate $\Pi$ with the single secretary problem and with the graphical matroid secretary problem.

Let $a \in \mathcal{A}$ be a (randomized) algorithm for problem $\Pi$. Given an instance $I$ of problem $\Pi$, let $S_a(I)$ be a random variable such that $S_a(I) = 1$ if algorithm $a$ output the optimal solution on instance $I$, that is, when $a(I) = opt(I)$; and $S_a(I) = 0$ if $a(I) \not = opt(I)$. Then we have that $\Exp[S_a(I)] = \Prob[a(I) = opt(I)]$; not that the probability here is over the internal randomness of algorithm $a$ and over the randomness in the random order of the elements of the matroid.

For the convenience of this subsection, it will be useful to define an algorithm $a$ as having a performance guarantee $\alpha \leq 1$ of 
\begin{itemize}
 \item type 1 if $\Exp_{I \sim D}[\Exp[S_a(I)]] \geq \alpha$. \\
   In words, type 1 performance guarantee is the expected probability of choosing the optimum by the algorithm, where the expectation is over the random choice $I \sim D$.
 \item type 2 if $\Exp_{I \sim D}[w(a(I))/w(opt(I))] \geq \alpha$. \\
   In words, type 2 performance guarantee is the expected value of the algorithm's weight divided by the optimum weight. Note that the expectation here is not only over the random choice $I \sim D$, but also over the random choices of the algorithm $a$ and over the random choice of the matroid's elements' order. 
 \item type 3 if $\Exp_{I \sim D}[w(a(I))] \geq \Exp_{I \sim D}[w(opt(I))] \cdot \alpha$. \\
 In words, type 3 performance guarantee is the algorithm's expected weight across all instances divided by the expected optimum weight. Note that the first expectation here is not only over the random choice $I \sim D$, but also over the random choices of the algorithm $a$ and over the random choice of the matroid's elements' order. The second expectation is only over the random choice $I \sim D$.
\end{itemize}

We first reformulate Theorem \ref{thm:exists_Distr_W} in terms of the type 1 performance guarantee as Lemma \ref{lemma:Hard_Distr_1_Secretary}.

\begin{lemma}\label{lemma:Hard_Distr_1_Secretary}
    For any $\epsilon > 0$, $\rho > 1$, there exists a finite distribution $D$ of instances of weighted single secretary such that no algorithm's type $1$ performance guarantee is $\geq \frac 1 e + \epsilon$. For any instance $I$ of single secretary in the support of $D$, any two weights in $I$ differ by at least a factor of $\rho$.  
\end{lemma}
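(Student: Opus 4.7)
The plan is to derive this lemma as a direct reformulation of Theorem~\ref{thm:exists_Distr_W} that has just been established. First, I would invoke Theorem~\ref{thm:exists_Distr_W} with the given parameters $\varepsilon > 0$ and $\rho > 1$ to obtain an integer $n$, a finite weight set $W \subset \{1, \rho, \rho^2, \ldots\}$, and a finite distribution $D$ over size-$n$ subsets of $W$ such that, for any randomized algorithm, the probability over both $I \sim D$ and the random arrival order (and internal coins of the algorithm) of selecting the maximum-weight element is at most $\tfrac{1}{e} + \varepsilon$.

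Next, I would connect this guarantee to the type 1 performance measure defined earlier in this subsection. By definition, $S_a(I) = 1$ exactly when $a(I) = \mathrm{opt}(I)$, so $\Exp[S_a(I)] = \Prob[a(I) = \mathrm{opt}(I)]$ where the inner expectation is over the random order and the algorithm's coins. Therefore
\[
  \Exp_{I \sim D}\bigl[\Exp[S_a(I)]\bigr] \;=\; \Exp_{I \sim D}\bigl[\Prob[a(I) = \mathrm{opt}(I)]\bigr] \;=\; \Prob_{I \sim D,\,\pi,\,a}\bigl[a(I) = \mathrm{opt}(I)\bigr],
\]
which is precisely the quantity bounded by Theorem~\ref{thm:exists_Distr_W}. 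Hence no algorithm achieves type 1 performance guarantee $\geq \tfrac{1}{e} + \varepsilon$ on this distribution.

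Finally, I would verify the weight-separation property. Since $W \subseteq \{1, \rho, \rho^2, \ldots\}$ and each instance in the support of $D$ is a \emph{subset} of $W$ of size $n$, the weights within any instance are distinct powers of $\rho$. Given any two such weights $\rho^i$ and $\rho^j$ with $i < j$, their ratio satisfies $\rho^j/\rho^i = \rho^{j-i} \geq \rho$, establishing the desired separation. There is no genuine obstacle here---the entire technical content (the Ramsey argument in Lemma~\ref{lemma:epsilon-value-oblivious} and the minimax step) already lives in the preceding two theorems; the work of this lemma is purely a matter of aligning notation between the "probability of choosing the maximum" phrasing of Theorem~\ref{thm:exists_Distr_W} and the type 1 performance guarantee terminology introduced at the start of this subsection.
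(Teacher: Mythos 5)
Your proof is correct and matches the paper's approach: the paper explicitly introduces Lemma~\ref{lemma:Hard_Distr_1_Secretary} as a reformulation of Theorem~\ref{thm:exists_Distr_W} in type~1 language, with the weight-separation property following from $W \subset \{1, \rho, \rho^2, \ldots\}$. Your alignment of the probability of selecting the maximum with $\Exp_{I \sim D}[\Exp[S_a(I)]]$ and your verification of the $\rho$-separation are exactly the intended (and only) steps.
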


We now extend the result of Lemma \ref{lemma:Hard_Distr_1_Secretary} to the type 2 performance guarantee. This proof will make use of the guarantee depending on $\rho$ provided by Lemma \ref{lemma:Hard_Distr_1_Secretary} to bound the weight attained by an algorithm when it fails to select the maximum.
\begin{lemma}\label{lemma:Hard_Distr_2_Secretary}
    For any $\epsilon > 0$, there exists a finite distribution $D$ of instances of weighted single secretary such that no algorithm's type $2$ performance guarantee is $\geq \frac 1 e + \epsilon$.  
\end{lemma}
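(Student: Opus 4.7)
The plan is to apply Lemma \ref{lemma:Hard_Distr_1_Secretary} with two parameters chosen simultaneously: the failure tolerance $\epsilon' = \epsilon/2$ and a separation parameter $\rho$ to be fixed later. This yields a finite distribution $D$ of single-secretary instances such that (a) no algorithm's type~1 guarantee exceeds $\frac{1}{e} + \epsilon/2$, and, crucially, (b) inside every instance $I$ in the support of $D$ all weights are pairwise separated by a multiplicative factor of at least $\rho$. Property (b) implies that in each such $I$, every non-maximum weight is at most $\frac{1}{\rho}$ times the maximum; equivalently, for every element $x \in I$ that is not the maximum, $w(x)/w(opt(I)) \leq 1/\rho$.

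Now I will convert a type~1 bound into a type~2 bound. Fix any algorithm $a$ and any instance $I$. Since $a$ outputs at most one element, let $q(I,a)$ denote the probability (over the algorithm's internal randomness and the random arrival order) that $a$ outputs the maximum element of $I$; the remaining probability $1 - q(I,a)$ covers outputting either a non-maximum element or nothing. Using property (b),
$$\Exp\!\left[\tfrac{w(a(I))}{w(opt(I))}\right] \;\leq\; q(I,a)\cdot 1 + \bigl(1-q(I,a)\bigr)\cdot \tfrac{1}{\rho} \;=\; \tfrac{1}{\rho} + \left(1-\tfrac{1}{\rho}\right) q(I,a).$$
Taking expectation over $I \sim D$, and using $\Exp_{I\sim D}[q(I,a)] \leq \tfrac{1}{e} + \tfrac{\epsilon}{2}$ from property (a), gives
$$\Exp_{I \sim D}\!\left[\tfrac{w(a(I))}{w(opt(I))}\right] \;\leq\; \tfrac{1}{\rho} + \left(1-\tfrac{1}{\rho}\right)\!\left(\tfrac{1}{e} + \tfrac{\epsilon}{2}\right) \;=\; \tfrac{1}{e} + \tfrac{\epsilon}{2} + \tfrac{1 - 1/e - \epsilon/2}{\rho}.$$

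To finish, I would choose $\rho$ large enough that the last term is at most $\epsilon/2$, e.g.\ any $\rho \geq \tfrac{2(1 - 1/e)}{\epsilon}$ works. The resulting distribution $D$ then enforces a type~2 guarantee strictly smaller than $\tfrac{1}{e} + \epsilon$ against every algorithm, which is exactly the claim. I do not expect any real obstacle here; the only subtlety is coupling the two parameters from Lemma \ref{lemma:Hard_Distr_1_Secretary} (the error budget $\epsilon/2$ and the geometric spacing $\rho$) so that both the mass on non-maxima and the failure probability on the maximum are simultaneously absorbed into $\epsilon$. Note also that the argument is tight in spirit: without the $\rho$-separation feature of Lemma \ref{lemma:Hard_Distr_1_Secretary}, a type~1 hardness alone would not yield a type~2 hardness, because an algorithm could always settle for a close-to-maximum element.
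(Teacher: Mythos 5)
Your proof is correct and follows essentially the same route as the paper: apply Lemma~\ref{lemma:Hard_Distr_1_Secretary} with error budget $\epsilon/2$ and a separation parameter $\rho$, use the $\rho$-separation to bound the contribution of non-maximum picks, and take expectation over $I \sim D$. The paper simply fixes $\rho = 2/\epsilon$ up front rather than leaving it free and optimizing at the end; the resulting constant bookkeeping is equivalent.
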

\begin{proof}
    Apply Lemma \ref{lemma:Hard_Distr_1_Secretary} with parameters $\epsilon' = \frac \epsilon 2$ and $\rho' = \frac 2 \epsilon$. We then have a finite distribution $D$ of instances of weighted single secretary such that no algorithm's probability of choosing the maximum is $\geq \frac 1 e + \frac \epsilon 2$, and in any instance in the support of $D$, any two weights differ by at least a factor of $\frac 2 \epsilon$.

    Given any algorithm $a \in \mathcal A$, we would like to show that the type 2 performance guarantee of $a$ on $I$ is less than $\frac 1 e + \epsilon$. For any instance $I$ in the support of $D$, let $s(I)$ be the probability that $a$ will select the maximum when $a$ is run on instance $I$. It follows that with probability $s(I)$, $w(a(I))$ is equal to $opt(I)$, while with probability $1 - s(I)$, $w(a(I))$ is at most $\frac \epsilon 2 opt(I)$, as any other element in $I$ must differ from the maximum by at least a factor of $\frac 2 \epsilon$. We then have that the expected value of $w(a(I))$ over the choice of random order is at most $s(I)opt(I) + (1 - s(I)) \frac \epsilon 2 opt(I) \leq (s(I) + \frac \epsilon 2)opt(I)$, and so the expected value of $\frac {w(a(I))} {w(opt(I))}$ is at most $s(I) + \frac \epsilon 2$.

    It follows that $a$ has a type $2$ performance guarantee of at most
    \begin{align*}\mathbb E_{I \sim D} \left[\frac {w(a(I))} {w(opt(I))}\right] \leq \mathbb E_{I \sim D} \left[s(I) + \frac \epsilon 2\right] = \mathbb E_{I \sim D} \left[s(I)\right] + \mathbb E_{I \sim D}\left[\frac \epsilon 2\right] = \mathbb E_{I \sim D} \left[S_a(I)\right] + \frac \epsilon 2,\end{align*}
    noting that the expected value of $S_a(I)$ over the choice of random order is $s(I)$. We then have that the term $\mathbb E_{I \sim D} \left[S_a(I)\right]$ is simply the type $1$ performance guarantee of $a$, which is less than $\frac 1 e + \frac \epsilon 2$, and so it follows that $a$ has a type $2$ performance guarantee of at most $\frac 1 e + \frac \epsilon 2 + \frac \epsilon 2 = \frac 1 e + \epsilon$.
\end{proof}

The following lemma then shows an equivalence in hardness between the type 2 and type 3 performance guarantees. The idea of its proof is to reweight the distribution $D$ by making the probability of each instance being chosen inversely proportional to its optimum weight.
\begin{lemma}\label{lemma:type2_type3}
    For any matroid secretary problem $\Pi$, there exists a bijection $B$ between finite distributions of instances of $\Pi$ such that if $D' = B(D)$, then there exists an algorithm with a type $2$ performance guarantee of $r$ on $D$ iff there exists one with a type $3$ performance guarantee of $r$ on $D'$.
\end{lemma}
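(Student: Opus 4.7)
The plan is to construct $B$ by a reweighting trick: given a finite distribution $D$ over instances of $\Pi$, define $D' = B(D)$ so that the probability of an instance $I$ under $D'$ is proportional to $\Pr_D[I]/w(opt(I))$. Explicitly, let $Z = \sum_{I} \Pr_D[I]/w(opt(I))$, a finite positive sum since $D$ has finite support and we may assume $w(opt(I)) > 0$ for every $I$ in that support, and set
\[
\Pr_{D'}[I] \;=\; \frac{\Pr_D[I]/w(opt(I))}{Z}.
\]
The inverse map takes a finite $D'$ to the distribution with $\Pr_D[I] \propto \Pr_{D'}[I] \cdot w(opt(I))$; a direct check shows that applying $B$ and then its inverse recovers $D$, so $B$ is a bijection on the space of finite distributions over instances.

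Next I would fix an arbitrary (possibly randomized) algorithm $a$ and show that its type $2$ guarantee on $D$ equals its type $3$ guarantee on $D'$. Writing the type $3$ ratio as $\mathbb{E}_{I\sim D'}[w(a(I))]\,/\,\mathbb{E}_{I\sim D'}[w(opt(I))]$ and expanding both expectations against the definition of $\Pr_{D'}$, the denominator telescopes: the factors $w(opt(I))$ cancel and the remaining sum equals $1/Z$. The numerator, after the same cancellation, becomes $\tfrac{1}{Z}\sum_I \Pr_D[I]\cdot \mathbb{E}[w(a(I))]/w(opt(I))$. Taking the ratio, the factors of $Z$ cancel and one is left with $\mathbb{E}_{I\sim D}\bigl[w(a(I))/w(opt(I))\bigr]$, which is exactly the type $2$ guarantee of $a$ on $D$.

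Since this identity holds for every algorithm $a$, the existence of an algorithm achieving type $2$ guarantee $r$ on $D$ is equivalent to the existence of one achieving type $3$ guarantee $r$ on $D' = B(D)$; indeed the same algorithm witnesses both directions. There is no analytic obstacle here: the only point requiring a little attention is confirming that $B$ is well defined and bijective on the class of finite distributions, which reduces to checking that the two renormalizations are inverse to each other. The rest is the one-line reweighting calculation sketched above.
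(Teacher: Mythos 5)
Your proposal is correct and follows essentially the same route as the paper: the same inverse-optimum reweighting (your $Z$ is the paper's $C$), the same bijectivity observation, and the same two-line expectation calculation showing the type 2 guarantee on $D$ equals the type 3 guarantee on $D'$ for every fixed algorithm. No substantive differences.
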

\begin{proof}
    Note that all sums over instances $I$ are taken over $I$ in the support of $D$.
    
    We will reweight $D$ to obtain $D' = B(D)$. For each instance $I$ in the support of $D$, define $p(I)$ to be the probability that $I$ is chosen from $D$. Then, define $C = \sum_I \frac{p(I)} {w(opt(I))}$. We now define $D'$ by saying that $I$ is selected from $D'$ with probability equal to $\frac {p(I)} {C \cdot w(opt(I))}$. By the definition of $C$ this is a valid distribution. Further note that this is a bijection, as $D$ can be obtained uniquely from $D'$ by reweighting the probabilities of instances in the support of $D'$ proportionally to their optimum weight.

    We can then argue that for any algorithm $a$, the expected value of its weight on $D'$ is
    \begin{align*}
        \mathbb E_{I \sim D'}\left[w(a(I))\right] = \sum_I \frac {p(I)} {C \cdot w(opt(I))} w(a(I)) = \frac 1 C \sum_I p(I) \frac {w(a(I)} {w(opt(I))} = \frac 1 C  \mathbb E_{I \sim D}\left[\frac {w(a(I))} {w(opt(I))}\right],
    \end{align*}
    while the expected value over $D'$ of the optimum weight is
    \begin{align*}
        \mathbb E_{I \sim D'}\left[w(opt(I))\right] = \sum_I \frac {p(I)} {C \cdot w(opt(I))} w(opt(I)) = \frac 1 C \sum_I p(I) = \frac 1 C.
    \end{align*}
    It then follows that the type 3 performance guarantee of $a$ on $D'$ is
    \begin{align*}
        \frac {\mathbb E_{I \sim D'}\left[w(a(I))\right]} {\mathbb E_{I \sim D'}\left[w(opt(I))\right]} = \frac {\frac 1 C  \mathbb E_{I \sim D}\left[\frac {w(a(I))} {w(opt(I))}\right]} {\frac 1 C} = \mathbb E_{I \sim D}\left[\frac {w(a(I))} {w(opt(I))}\right],
    \end{align*}
    which is simply the type 2 performance guarantee of $a$ on $D$. Thus, for any algorithm $a$, its type 2 performance guarantee on $D$ is the same as its type 3 performance guarantee on $D'$, from which it follows that there exists an algorithm with a type $2$ performance guarantee of $r$ on $D$ iff there exists one with a type $3$ performance guarantee of $r$ on $D'$, as to show either direction we can simply take the same algorithm.
\end{proof}

We now move to the portion of the proof that extends the hardness of the single secretary problem to the graph setting. As part of this reduction, we first show the following lemma, demonstrating the existence of bipartite graphs of high girth with the key property that one part is both much larger than the other and has high degree. This construction proceeds by applying the probabilistic method to derive a graph with our desired properties by removing edges from a high-girth Ramanujan graph.

\begin{lemma}\label{lemma:Finite_Ramsey}
    For any $d, g, t \in \N$ with $d \geq 2$, $t \geq 4$, there exist $m, n \in \nats$ such that 
$\frac n m \geq t$ and there exists a graph $G'$ of girth at least $g$, whose vertices can be partitioned into sets $A$ and $B$, such that $|A| = m$, $|B| = n$, all edges are between $A$ and $B$, and each vertex in $B$ has degree at least $d$.
\end{lemma}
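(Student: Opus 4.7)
The plan is to follow the sketch given in the section's overview: combine a known high-girth regular graph construction with a random bipartition, analyzed by the probabilistic method. First I would invoke the classical fact (e.g., Erd\H{o}s--Sachs, or a sufficiently large Lubotzky--Phillips--Sarnak Ramanujan graph) that for every $k \geq 2$ and every $g \geq 3$ there exist $k$-regular graphs of girth at least $g$ on arbitrarily many vertices. The degree $k$ and the number of vertices $N$ will be fixed after the rest of the argument.

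Fix $p = \tfrac{1}{3t+1}$, let $H$ be a $k$-regular graph on $N$ vertices with girth at least $g$, and sample $A \subseteq V(H)$ by including each vertex independently with probability $p$. Then define $B = \{v \in V(H) \setminus A : |N_H(v) \cap A| \geq d\}$, and let $G'$ be the bipartite graph with parts $A$ and $B$ whose edges are those of $H$ with one endpoint in each part. Since $G'$ is a subgraph of $H$, its girth is at least $g$, and every $v \in B$ has degree at least $d$ in $G'$ by construction.

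The remaining task is to verify that some realization has $|A| \geq 1$ and $|B| \geq t|A|$. Because the event $\{v \in B\}$ depends only on $v$ and its $k$ neighbors, $\Exp[|A|] = Np$ and $\Exp[|B|] = N(1-p)\,\Prob[\mathrm{Bin}(k,p) \geq d]$. I would choose $k$ (depending only on $d$ and $t$) large enough that $\Prob[\mathrm{Bin}(k,p) \geq d] \geq \tfrac{2}{3}$, which is possible because the mean $pk$ grows linearly in $k$ while $d$ is fixed. Then
\[
\Exp\bigl[|B| - t|A|\bigr] \;\geq\; N\!\left(\tfrac{2}{3}(1-p) - tp\right) \;=\; \frac{Nt}{3t+1} \;>\; 0.
\]
Since $|B| - t|A| \leq N$ deterministically, a standard averaging argument gives $\Prob[|B| \geq t|A| + 1] \geq \tfrac{t}{3t+1}$, a constant independent of $N$. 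Meanwhile $\Prob[|A| \geq 1] = 1 - (1-p)^N \to 1$, so for $N$ large enough the intersection of these two events has positive probability, and any realization in it yields the desired partition with $m = |A|$ and $n = |B|$.

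The main obstacle is calibrating the parameters so that the three competing requirements---girth at least $g$, minimum degree $d$ on the $B$-side, and size ratio $n/m \geq t$---hold simultaneously. The girth condition forces the use of a structured host graph; the degree condition forces the expected number of $A$-neighbors $pk$ of each vertex to be comfortably larger than $d$, hence $k$ must be chosen much larger than $d$ once $p$ is fixed; and the ratio condition forces $p \asymp 1/t$. The scaling $p = 1/(3t+1)$ is chosen precisely so that, even after losing a constant factor to the Chernoff-type lower bound on $\Prob[\mathrm{Bin}(k,p) \geq d]$, the expectation $\Exp[|B| - t|A|]$ remains strictly positive for every $t \geq 4$, which is the one place where the hypothesis on $t$ is consumed.
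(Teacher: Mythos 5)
Your proof is correct and follows the same broad strategy as the paper — randomly bipartition a high-girth regular host graph and then use the probabilistic method — but the execution is cleaner in two respects, so it is worth noting the differences. First, you invoke a black-box existence of $k$-regular graphs with girth $\geq g$ on arbitrarily many vertices (e.g.\ Erd\H{o}s--Sachs, or disjoint unions of one such graph), whereas the paper carries through the explicit primes $p,q$ of the LPS Ramanujan construction, including the quadratic-reciprocity and Dirichlet arguments needed to make $\left(\frac{p}{q}\right)=1$; for this lemma that number-theoretic machinery is unnecessary. Second, rather than proving separate Chernoff-type concentration for $|A|$ and for each vertex's degree toward $A$, then conditioning and union-bounding bad events as the paper does, you apply a single first-moment averaging step to the integer random variable $|B|-t|A|$, which is bounded above by $N$: a positive expectation of order $N$ together with the trivial upper bound $N$ gives a constant lower bound on $\Pr\!\left[|B|-t|A|\geq 1\right]$ with no concentration needed. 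This also exposes that the hypothesis $t\geq 4$ is not actually consumed: your computation gives $\E\!\left[|B|-t|A|\right]\geq \frac{Nt}{3t+1}>0$ for every $t\geq 1$, whereas the paper's explicit constants implicitly assume $t,d$ are large "without loss of generality." One small further simplification you could make: since $d\geq 2>0$, the event $|A|=0$ forces $B=\emptyset$, so $|B|-t|A|\geq 1$ already implies $|A|\geq 1$, making the separate appeal to $\Pr[|A|\geq 1]\to 1$ dispensable.
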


\begin{proof}
Note that in the following proof, we will assume at certain points that $t, d$ are sufficiently large -- this is done without loss of generality as a graph which satisfies the statement of the lemma for larger $t, d$ also satisfies it for respectively smaller $t, d$.

\noindent
    Our construction will be based on using Ramanujan graphs with large girth. Theorem 4.2.2. in the book \cite{DavidoffSV2003} (see page 114) states the following:

\begin{quote}
Let $p, q$ be distinct, odd primes, such that $q > 2\sqrt{p}$. There exists a graph, denoted $X^{p,q}$, which is $(p+1)$-regular, connected and Ramanujan. Moreover, if $\left(\frac{p}{q}\right) = 1$, then graph $X^{p,q}$ has $\frac{q(q^2-1)}{2}$ vertices and its girth is at least $2 \log_p (q)$.
\end{quote}

We note that $\left(\frac{p}{q}\right)$ above is the Legendre symbol and it means that $\left(\frac{p}{q}\right) \equiv p^{\frac{q-1}{2}} \Mod{q}$. Let us take the smallest prime $p$ such that $p+1 \geq 4td$. By the Bertrand-Chebyshev Theorem, said prime fulfills  $8td - 2 \geq p \geq 4td - 1$. For any prime number $q$, $\left(\frac{p}{q}\right) = 1$ means that $p$ is a quadratic residue modulo $q$. 

Using the well known law of quadratic reciprocity we can say that $\left(\frac{p}{q}\right) = \left(\frac{q}{p}\right)$ if 
$p \equiv q \Mod{4}$. Then we note that if $q \equiv 1 \Mod{p}$, then $\left(\frac{q}{p}\right) = \left(\frac{1}{p}\right) = 1$. Thus we can consider the arithmetic sequence defined as those numbers equal to $p \Mod{4}$ and $1 \Mod{p}$ (said arithmetic sequence exists as $p, 4$ are coprime). By Dirichlet's Theorem on Primes in Arithmetic Progressions, there must be infinitely many primes $q$ in this sequence and such $q$ then satisfy $\left(\frac{p}{q}\right) =1$.

This means that given prime $p$, we can now take a sufficiently large prime $q$ such that $\left(\frac{p}{q}\right) = 1$, $q > 2\sqrt{p}$ and $q \geq (8td)^{g/2}$. Then, there exists the Ramanujan graph $X^{p,q} = (V,E)$, in which $|V| = N = \frac{q(q^2-1)}{2}$ and the degree of each vertex is $p+1 \geq 4td$.

To construct the graph $G = (V,E')$ on the set of vertices $V$, we select $A$ randomly -- independently for each vertex $u \in V$, we choose it for $A$ with probability $\frac{1}{2t}$. $B$ is then simply $V \setminus A$; for any vertex $u \in B$, we put into $E'$ only the edges between vertex $u$ to and a vertex in $A$. That is, $E' = \{\{u,v\} \in E : u \in B, v \in A\}$.

Let the random variable $S_u = 1$ if $u \in A$ and $S_u = 0$ otherwise. Then for $S  = \sum_{u \in V} S_u$ we have that $\Exp[S] = N/2t$. It follows that for any $\delta > 0$, by a Chernoff bound we have that 
$$ 
  \Pr\left[|A| \geq (1+\delta) \cdot \frac{N}{2t}\right] \leq 
  \exp\left(\frac{-\delta^2 N}{2(2+\delta)t}\right) = \tau_0 \, .
$$
Thus, letting $E_1$ be the event that $A$ is more than a $\frac {1 + \delta} {2t}$ proportion of the vertices in $G$, we have that $\Pr[E_1] \leq \tau_0$. Notably, $\tau_0$ becomes arbitrarily small as $N$ goes to $\infty$ -- this is important because we would like to minimize the size of $A$. We will take $\delta = \frac 1 2$, so that $\tau_0$ bounds the probability that $A$ is more than a $\frac 3 {4t}$ proportion of the vertices in $G$, then assume that $N$ is sufficiently large that $\tau_0$ is at most $\frac 1 {10}$. We will also assume that $t$ is sufficiently large that $\frac 3 {4t} \leq \frac 1 {18}$.

Moving to the condition that each vertex has degree at least $d$, for each vertex $u \in V$ define the random variable $Y_u$ to be equal to the number of edges in $X^{p, q}$ between $u$ and a vertex $v \in A$ -- when $u \in B$, this is equal to the degree of $u$ in $G$. We can see that $Y_u = \sum_{v | (u, v) \in E} S_v$; because the random variables $S_v$ are independent, we can apply a Chernoff bound as before. Note that as $u$ has degree at least $4td$ in $X^{p, q}$, we have that $\E\left[Y_u\right] \geq \frac {4td} {2t} = 2d$. It then follows that
\begin{align*}
    \Pr\left[Y_u \leq d\right] = \Pr\left[Y_u \leq \left(1 - \frac 1 2\right)2d\right] \leq \exp\left(\frac {-2d\left(\frac 1 2\right)^2} 2\right) = \exp\left(-\frac d 4\right)
\end{align*}
Assuming that $d$ is sufficiently large ($d \geq 12$ suffices), we have that $\Pr\left[Y_u \leq d\right] \leq \frac 1 {20}$. It follows that the expected proportion of vertices $u$ such that $Y_u \leq d$ is at most $\frac 1 {20}$. We would then like to condition on the event $E_1$ not occurring -- because $E_1$ occurs with probability at most $\frac 1 {10}$, $E_1$ does not occur with probability at least $\frac 9 {10}$; this then implies that given that $E_1$ does not occur, the expected proportion of vertices $u$ with $Y_u \leq d$ is at most $\frac {10} 9 \cdot \frac 1 {20} = \frac 1 {18}$.

Given this, define a "bad" vertex to be a vertex $u$ such that either $u \in A$ or $Y_u \leq d$. Combining our previous bounds, given that $E_1$ does not occur, the expected proportion of bad vertices in $G$ is at most $\frac 3 {4t} + \frac 1 {18} \leq \frac 1 {18} + \frac 1 {18} = \frac 1 9$. We now modify $G$ by removing any vertices $u$ in $B$ such that $Y_u \leq d$. We then have that all vertices in $B$ have degree greater than $d$, and that $|B|$ in expectation is at least $\frac 8 9 N$. Now note that $\frac 8 9 N \geq \frac 3 4 N$, while given that we are conditioning on $E_1$ not occurring, $|A| \leq \frac 3 {4t} N$ -- this means that in expectation, the size of $B$ is at least $t$ times the size of $A$. Therefore, by the probabilistic method, there exists a realization of $G$ such that $\frac {|B|} {|A|} \geq t$. We have already shown that all vertices in $B$ have degree at least $d$, and the girth of $G$ must be at least the girth of $X^{p, q}$ as we have only removed edges, so we are done.
\end{proof}

We now apply the graph provided by Lemma \ref{lemma:Finite_Ramsey} to extend the hardness of a distribution of instances of the single secretary problem to hardness of matroid secretary on a high-girth graph.

\begin{lemma}\label{lemma:type3_type3}
    If there exists a finite distribution $D$ of instances of weighted single secretary such that no algorithm has a type $3$ performance guarantee $\geq r$ on $D$, then for any $\epsilon > 0, g \in \N$ there exists a graph $G$ with girth $\geq g$ and a finite distribution $D_G$ of instances of weighted graph secretary on $G$ such that no algorithm has a type $3$ performance guarantee $\geq r + \epsilon$ on $D'$.  
\end{lemma}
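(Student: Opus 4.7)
My plan is to embed the given single-secretary distribution $D$ into a graphic-matroid instance supported on a high-girth bipartite graph, so that any algorithm achieving type~3 performance $\ge r+\epsilon$ on the embedded instance could be simulated to give a single-secretary algorithm with type~3 performance $\ge r$ on $D$. Let $d$ be the size of instances from $D$, $M$ the maximum weight in its support, and $L := \mathbb{E}_{I \sim D}[w(opt(I))]$. First, I would invoke Lemma~\ref{lemma:Finite_Ramsey} with parameters $(d, g, t)$ for a sufficiently large constant $t$ (to be fixed later in terms of $M, L, \epsilon$) to obtain a bipartite graph $G$ with parts $A, B$ of sizes $m, n$ such that $n \ge tm$, the girth of $G$ is at least $g$, and every $v \in B$ has degree at least $d$. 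For each $v \in B$ I would designate an arbitrary set of $d$ incident edges. The distribution $D_G$ is then obtained by independently sampling $I_v \sim D$ for every $v \in B$ and placing the $d$ weights of $I_v$ on the designated edges at $v$, with weight $0$ assigned to all remaining edges of $G$.

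The key structural observation I would leverage is that for any forest $F \subseteq E(G)$, letting $k_v$ count the edges of $F$ incident to $v \in B$, acyclicity forces $|F| \le |V(F) \cap A| + |\{v : k_v \ge 1\}| - 1 \le m + |\{v : k_v \ge 1\}| - 1$, hence $\sum_v \max(k_v - 1, 0) \le m - 1$. From this I would deduce two things: selecting for each $v \in B$ its best designated edge yields a forest (a disjoint union of stars centered in $A$), so $\mathbb{E}_{D_G}[w(opt(I_G))] \ge nL$; and the total weight contributed by the edges of any forest $F$ at vertices with $k_v \ge 2$ is at most $2mM$, since each such $v$ contributes $k_v \le 2(k_v - 1)$ edges of weight at most $M$.

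Given any algorithm $A_G$ with type~3 performance $\ge r+\epsilon$ on $D_G$, I would build a single-secretary algorithm $A_S$ as follows. On input instance $I \sim D$, $A_S$ picks $v^* \in B$ uniformly at random, sets $I_{v^*} = I$, independently samples the other $I_v \sim D$, and simulates $A_G$ online under a uniformly random edge-arrival order. The simulation is feasible online because $A_S$ can pre-sample both the permutation of the non-$v^*$ edges (whose weights it knows in advance) and the arrival positions of the $d$ designated edges at $v^*$, interleaving them as $I$'s items are revealed. At the end, $A_S$ outputs the unique edge at $v^*$ selected by $A_G$ if there is exactly one such edge, and otherwise outputs nothing.

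The final step combines the structural bound with a symmetry argument over the uniform choice of $v^*$; writing $W_v$ for the total weight of $A_G$'s selected edges at $v$, I get
\[
\mathbb{E}[w(A_S(I))] \;=\; \tfrac{1}{n}\,\mathbb{E}\!\left[\sum_v W_v \mathbf{1}[k_v = 1]\right] \;\ge\; \tfrac{1}{n}\bigl(\mathbb{E}[w(A_G)] - 2mM\bigr) \;\ge\; (r+\epsilon) L - \tfrac{2M}{t}.
\]
Choosing $t \ge 2M/(\epsilon L)$ would then yield $\mathbb{E}[w(A_S(I))] \ge rL$, contradicting the hypothesis on $D$. The main obstacle I anticipate is precisely this ``extras'' accounting: without the acyclicity-based bound $\sum_v \max(k_v - 1, 0) \le m - 1$, the naive ``one edge per $v^*$'' reduction would suffer a multiplicative factor-$d$ loss instead of merely an additive $2M/t$, making it impossible to absorb the discrepancy into $\epsilon$.
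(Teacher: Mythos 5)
Your proof follows essentially the same route as the paper: embed $D$ on the bipartite high-girth graph from Lemma~\ref{lemma:Finite_Ramsey}, pick $v^*\in B$ uniformly at random and simulate $a$ while feeding $I$'s weights into the edges at $v^*$, then use a forest-counting argument to show the ``excess'' edges at multiply-served $B$-vertices are only $O(m)$ out of $\Omega(n)$, so their total weight is an $O(1/t)$ additive error that is absorbed by choosing $t$ large. Your forest inequality $\sum_v \max(k_v-1,0)\le m-1$ is a restatement of the paper's bound that $|Bad|\le m-1$ and hence the bad vertices host at most $2m-2$ edges, so the combinatorics are identical.

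One difference worth noting: you bound the excess weight deterministically by $2mM$, where $M$ is the maximum weight in the support of $D$, and then pick $t\ge 2M/(\epsilon L)$; the paper instead bounds the expected weight of a single edge by $L=\mathbb{E}_{I\sim D}[w(opt(I))]$ and claims the $\le 2m-2$ excess edges contribute $\le (2m-2)L$ in expectation, choosing $t>2/\epsilon$. Your $M$-based bound is slightly cruder but is fully rigorous pointwise, whereas the paper's $(2m-2)L$ bound glosses over the fact that \emph{which} edges are excess is correlated with their weights (an algorithm could preferentially duplicate at the heaviest vertices), so your variant is arguably the cleaner of the two; both yield the same theorem since $D$ finite makes $M$ a constant. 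Two small cleanups you should make: (i) ``at the end, $A_S$ outputs the unique edge at $v^*$ \ldots\ and otherwise outputs nothing'' is not an online rule -- a secretary algorithm cannot wait until the end; you should instead commit to the \emph{first} edge $A_G$ takes at $v^*$ (as the paper does), which only helps since for $k_v\ge 2$ the committed weight is $\ge 0 = W_v\mathbf{1}[k_v=1]$, so your displayed inequality remains a valid lower bound; (ii) it is worth a sentence confirming that the interleaving simulation produces exactly the correct joint law on $D_G$, namely that assigning the $j$\th revealed weight of $I$ to the $j$\th-arriving designated edge at $v^*$ is distributed as an independent uniform bijection, which holds because the arrival order of $I$ is itself uniformly random.
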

\begin{proof}
 Assume that the distribution $D$ is over a finite set of instances of the single secretary problem with $d$ items, where $d$ is the degree of each vertex $u \in B$. Let the bipartite graph $G = (V' = A \cup B,E'')$ be as constructed in Lemma \ref{lemma:Finite_Ramsey}, with $d, g$ as already defined; $t$ will be chosen later.

 We construct the distribution $D_G$ of the instances of the weighted graph secretary problem on $G$ in the following way. For each vertex $u \in B$ we independently sample an instance $I$ from the distribution $D$, and then we assign the $d$ weights in the instance randomly to the edges incident to vertex $u$.

 Suppose now that we have an algorithm $a$ for the graph matroid secretary problem on $G$ with type 3 performance guarantee $\geq r + \epsilon$ on $D_G$. We will show that it leads to an algorithm $b$ for the single secretary problem with type $3$ performance guarantee $\geq r$ on $D$.

 Algorithm $b$ will act as follows on the weighted single secretary problem: it will simulate algorithm $a$ on $D_G$; however, for a single vertex $v$ chosen uniformly at random from $B$, rather than itself sampling an instance from $D$ in order to determine the weights on edges incident to $v$, it will randomly shuffle said edges, then take the weight of the $j$\th edge to be the $j$\th weight presented in the weighted single secretary problem. It will then mimic the action of $a$ on $v$: when algorithm $a$ takes an edge incident to $v$ for the first time, algorithm $b$ will take the corresponding item in the weighted single secretary game. Any further edges incident to $v$ that are taken will be ignored. This is possible because the weights from the single secretary game are presented by $b$ to $a$ in the same order that they are presented to $b$, and $a$ then immediately decides whether or not to take said edge, meaning that $b$ can then take the corresponding item if necessary before having to reveal the next item's weight.

 The key value of algorithm $b$'s choice of weights for edges incident to $v$ is that from the perspective of algorithm $a$, the weights for $v$ are being selected in the exact same manner as those for all other vertices in $B$; i.e. by independently sampling from $D$. We can therefore equivalently think of $v$ as having been chosen uniformly at random from $B$ after the execution of $a$ on an instance sampled from $D_G$.

 We now proceed to analyze the performance of $a$. Below, we will use $a(G)$ to refer to the set of edges taken by algorithm $a$, and furthermore for any vertex $u$ we will use $a(u)$ to refer to the set of edges incident to $u$ taken by algorithm $a$. Let $T = (V', a(G))$ be the spanning tree for graph $G$ resulting from $a$'s selection.
 
 Let the vertices in the set $Good = \{u \in B: deg_{T}(u) \leq 1\}$ be called good vertices -- these vertices will correspond to ``correct solutions'' to the single secretary sub-problems on such vertices of set $B$. Similarly define $Bad = B \setminus Good$.
 
 We first observe that $k = |Bad| \leq m-1$. This follows by the fact that the subgraph of $T$ induced on the vertex set $Bad \cup A$ is a tree, so the total number of edges incident on vertices in $Bad$ is at most $m+k-1$. Then, each vertex in $Bad$ has degree at least $2$ in $T$, so $ m+k-1 \geq 2k$, implying that $k \leq m-1$.

 Now, let $X = \Exp[w(opt(G))]$, where the expectation is over the choice of $n$ independent samples from distribution $D$ -- one for each vertex $u \in B$. 
 Let us observe now that under any fixed assignment of the weights to all edges in graph $G$, the following is a lower bound on the weight of the maximum weight spanning tree: take the maximum weight edge incident to each of the vertices in $B$.
 
 This implies that we can lower bound the expected optimum as follows:
 $$
   X = \Exp[w(opt(G))] 
   \geq n \cdot \Exp_{I \sim D}[w(opt(I))] \, ,
 $$ where $I$ denotes the random instance of the single secretary problem.

 We also observe that for any edge $e = (u,v)$ for some $u \in B, v \in A$, we have that the expected weight of $e$ is at most the expected largest weight of an edge incident to $u$, which is then the expected optimum of an instance sampled from $D$. Thus, $\Exp_{I \sim D}(w(e)) \leq \Exp_{I \sim D}[w(opt(I))]$; note that $w(e)$ is a random variable here.

 Below, the expectation $\Exp[w(a(G))]$ is taken over the sampling of instances from $D$, as well as over the randomness of algorithm $a$ and the randomness in the arrival order.

 As algorithm $a$ obtains a type 3 performance guarantee of $r + \epsilon$ on $D_G$, it must be that $\Exp[w(a(G))]/X \geq r+\epsilon$. Given this, we will show that algorithm $b$ obtains a type $3$ performance guarantee of $r$ on $D$.
 
 We have that
 $$
   \frac{1}{X} \cdot \Exp[w(a(G))] = \frac{1}{X} \cdot \Exp\left[\sum_{u \in Good} w(a(u))\right]
   \,\, + \,\, \frac{1}{X} \cdot \Exp\left[\sum_{u \in Bad} w(a(u))\right] \, ,
 $$
 By the previous argument, we have that the total number of edges incident on bad vertices in the tree $T$ is at most $2m-2$. This then implies that
 $$
  \Exp\left[\sum_{u \in Bad} w(a(u))\right] = 
  \Exp\left[\sum_{u \in Bad} \sum_{e \in a(u)} w(e)\right] \leq (2m-2) \cdot \Exp_{I \sim D}[w(opt(I))] \, .
 $$ which by 
   $X \geq n \cdot \Exp_{I \sim D}[w(opt(I))]$ implies that 
 $
   \frac{1}{X} \cdot \Exp\left[\sum_{u \in Bad} w(a(u))\right] \leq \frac{2m-2}{n} \leq \frac{2}{t} \, ,
 $ because $\frac n m \geq t$. Putting these estimates together, we see that
 $$
 r + \epsilon \leq \frac{1}{X} \cdot \Exp[w(a(G))] \leq \frac{1}{X} \cdot \Exp\left[\sum_{u \in Good} w(a(u))\right]
   \,\, + \,\, \frac{2}{t} \, .
 $$
 If we now constrain $t$ to satisfy $\epsilon > \frac 2 t$, we simplify this inequality to $r \leq \frac 1 X \cdot \Exp \left[\sum_{u \in Good} w(a(u))\right]$. By then applying the inequality $X \geq n \cdot \Exp_{I \sim D}[w(opt(I))]$ and rearranging, we obtain that 
 $
   \Exp\left[\frac{1}{n} \sum_{u \in Good} w(a(u))\right] \geq r \cdot \Exp_{I \sim D}[w(opt(I))] \, 
 $.
 
 Finally, recalling the operation of algorithm $b$, define $s(u)$ to be the weight obtained by algorithm $b$ given that $u$ is chosen to be $v$. As $v$ can be interpreted as having been chosen randomly after the execution of $a$, the expected weight $\Exp_{I \sim D}[w(b(I))]$ obtained by algorithm $b$ is equal to the average of the expected value of $s(u)$ over all vertices $u$. Furthermore, for vertices $u \in Good$, the weight $s(u)$ is simply the sum $w(a(u))$ of the weights of the either $0$ or $1$ edges incident to $u$ that $a$ takes; for vertices $u \in Bad$, we can lower bound $s(u)$ by $0$. 

 Therefore, we can finally argue that
 \begin{align*}
     \Exp_{I \sim D}[w(b(I))] = \Exp\left[\frac{1}{n} \sum_{u \in V} s(u)\right] &\geq \Exp\left[\frac{1}{n} \sum_{u \in Good} w(a(u))\right] \geq r \cdot \Exp_{I \sim D}[w(opt(I))].
 \end{align*}
 It follows immediately that algorithm $b$ obtains a type $3$ performance guarantee of $r$ on $D$. This contradicts the assumption on $D$, and so no such $a$ can exist as desired.
 \end{proof}

We now finally prove Theorem \ref{lowerboundthm}, which states that there does not exist an algorithm for the graphic matroid secretary problem on graphs of girth $\geq g$ that obtains competitive ratio less than $e$, by proving the below theorem, noting that the type $2$ performance guarantee is the inverse of the usual competitive ratio.
\begin{theorem}
    For any $\epsilon > 0, g \in \N$, there exists a graph $G$ with girth $\geq g$ and a finite distribution $D$ of instances of weighted graph secretary on $G$ such that no algorithm's type $2$ performance guarantee is $\geq \frac 1 e + \epsilon$.
\end{theorem}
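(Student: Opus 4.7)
The plan is to chain the four preceding lemmas together, carefully allocating an $\varepsilon$--slack budget across the two lossy steps. Fix $\varepsilon>0$ and $g\in\mathbb{N}$. The chain moves from single secretary (type 2) to single secretary (type 3) to graph secretary (type 3) to graph secretary (type 2), with the only genuine loss happening in the transition to graphs.

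First, apply Lemma~\ref{lemma:Hard_Distr_2_Secretary} with slack $\varepsilon/2$ to obtain a finite distribution $D_1$ over instances of weighted single secretary for which no algorithm attains a type~2 performance guarantee of at least $\frac{1}{e}+\varepsilon/2$. Next, invoke Lemma~\ref{lemma:type2_type3} with $\Pi$ set to be the single secretary problem; the bijection $B$ translates the nonexistence of a type~2 algorithm on $D_1$ into the nonexistence of a type~3 algorithm on $D_2:=B(D_1)$, so no algorithm attains type~3 performance $\geq \frac{1}{e}+\varepsilon/2$ on $D_2$.

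Now feed $D_2$ into Lemma~\ref{lemma:type3_type3} with slack $\varepsilon/2$ and girth bound $g$. This yields a graph $G$ of girth at least $g$ and a finite distribution $D_3$ over instances of weighted graph secretary on $G$ such that no algorithm attains type~3 performance of at least $\tfrac{1}{e}+\varepsilon/2+\varepsilon/2=\tfrac{1}{e}+\varepsilon$ on $D_3$. Finally, apply Lemma~\ref{lemma:type2_type3} once more, this time with $\Pi$ the graph secretary problem on $G$: the bijection converts the type~3 hardness of $D_3$ into type~2 hardness of the same value for the distribution $D:=B^{-1}(D_3)$, giving a distribution on $G$ (which has girth $\geq g$) for which no algorithm attains type~2 performance $\geq \tfrac{1}{e}+\varepsilon$, exactly as required.

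There is no real technical obstacle remaining at this stage, since all the heavy lifting has been done inside Lemmas~\ref{thm:exists_W}, \ref{thm:exists_Distr_W}, \ref{lemma:Finite_Ramsey}, and \ref{lemma:type3_type3}; the only subtle point is bookkeeping the $\varepsilon$--slack. Because Lemma~\ref{lemma:type3_type3} inflates the threshold by an additive constant when lifting from single secretary to graph secretary, one must start the chain with an hardness threshold strictly smaller than $\frac{1}{e}+\varepsilon$; splitting the target slack $\varepsilon$ into two halves --- one consumed by Lemma~\ref{lemma:Hard_Distr_2_Secretary} and one by Lemma~\ref{lemma:type3_type3} --- is what makes the final bound come out to exactly $\frac{1}{e}+\varepsilon$, yielding the theorem and hence Theorem~\ref{lowerboundthm} (since type~2 performance is the reciprocal of the competitive ratio).
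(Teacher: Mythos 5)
Your proposal is correct and follows exactly the same four-step chain as the paper's own proof: apply Lemma~\ref{lemma:Hard_Distr_2_Secretary} at slack $\epsilon/2$, transfer type~2 hardness to type~3 hardness via Lemma~\ref{lemma:type2_type3}, lift to a high-girth graph with the additional $\epsilon/2$ loss via Lemma~\ref{lemma:type3_type3}, and finally apply Lemma~\ref{lemma:type2_type3} in reverse to return to a type~2 guarantee. The $\epsilon$-budget split and the two invocations of the type~2/type~3 bijection (once forward with $\Pi$ the single secretary problem, once backward with $\Pi$ the graph secretary problem) match the paper precisely.
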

\begin{proof}

We first apply Lemma \ref{lemma:Hard_Distr_2_Secretary} with parameter $\frac \epsilon 2$ to obtain a finite distribution $D_1$ of instances of single secretary such that no algorithm achieves a type 2 performance guarantee $\geq \frac 1 e + \frac \epsilon 2$ on $D_1$. We then apply Lemma \ref{lemma:type2_type3} on $D_1$ to obtain a finite distribution $D_2$ of instances of single secretary such that no algorithm achieves a type 3 performance guarantee $\geq \frac 1 e + \frac \epsilon 2$ on $D_2$. Following that, we apply Lemma \ref{lemma:type3_type3} on $D_2$ with parameters $r' = \frac 1 e + \frac \epsilon 2$, $\epsilon' = \frac \epsilon 2$, and $g' = g$ to obtain a graph $G$ with girth $\geq g$ and a finite distribution $D_3$ of instances of weighted graph secretary on $G$ such that no algorithm achieves a type 3 performance guarantee $\geq \frac 1 e + \frac \epsilon 2 + \frac \epsilon 2 = \frac 1 e + \epsilon$ on $D_3$. Finally, we apply Lemma \ref{lemma:type2_type3} in reverse on $D_3$ to obtain a finite distribution $D_4$ of instances of weighted graph secretary on $G$ such that no algorithm achieves a type 3 performance guarantee $\geq \frac 1 e + \epsilon$ on $D_4$. $G$ and $D_4$ are then as desired.
\end{proof}

\section{Tight algorithm for high girth graphs: proof of Theorem~\ref{highgirththm}}
\label{sec:high-girth}
In this section, we present an algorithm whose competitive ratio decreases with the girth of the graph it executes on. Specifically, let $\mathcal{G}_{g}$ denote the family of graphs with girth $\ge g$. When executed on a graph $G \in \mathcal{G}_{g}$, the algorithm achieves a competitive ratio of $ \frac{e}{x_g}$, where $x_{g}$ is the unique root in $[0, 1]$ of the polynomial $f(x) = -x^{g-1} - x + 1$. 
Here, we would like to note two facts~\footnote{We provide proofs of these facts in the Appendix, Lemma~\ref{lem:f}.}. First, it is easy to show that $\lim_{g \rightarrow \infty} x_{g} = 1$, implying that in the limit, the competitive ratio approaches $e$. Therefore, in the limit, the algorithm matches the impossibility result of Theorem~\ref{lowerboundthm}. Second, the convergence of $x_{g}$ is almost inversely linear in $g$, yielding values close to $\frac{1}{e}$ even for small $g$. In fact, for $g = 5$, the resulting competitive ratio is $<3.76$, which improves on the result obtained in Theorem~\ref{mainthm}.

\paragraph{Overview.}
The pseudocode of the algorithm is provided in Figure~\ref{alg:high_girth}. We assume that the algorithm is given an additional parameter $g$ as input, representing the minimal girth of the graph on which it is executed. The algorithm consists of two phases. During the first $\floor{\frac{m}{e}}$ steps, edges are only observed, and no actions are taken. When an edge $e_t$ arrives at time $t$, where $t \in \left[\floor{\frac{m}{e}} + 1, m \right]$, the algorithm computes the maximum spanning forest $T_{t}^{opt}$ of all edges seen so far. If $e$ does not belong to the spanning forest, it is rejected. Otherwise, let $(u, v) = e$ denote the direction of $e$ pointing to the vertex with the minimum label in $e$'s connected component in the spanning forest. If vertex $u$ has not \textit{selected} an edge by time $t$, $u$ selects edge $e$, and $e = (u,v)$ is added to an auxiliary set $A^{\ast}$. Otherwise, $e$ is rejected. Finally, two rules determine whether $e$ is added to the set of accepted edges, $A$. Rule (A): If adding $e$ to $A^{\ast}$ does not create a cycle, $e$ is added to $A$ with probability $\rho := x_{g}$. Rule (B): If adding $e$ to $A^{\ast}$ creates a cycle, $e$ is added to $A$ only if adding it does not create a cycle in $A$.

The rules underpin the following idea. The set $A^{\ast}$ is a superset of the set of accepted edges. Since adding $(u,v) = e$ requires that vertex $u$ selects $e$, edges in $A^{\ast}$ are always selected by a vertex, and each vertex can select at most one edge. This gives the set $A^{\ast}$ a specific structure: it forms a graph where each connected component can contain at most one cycle. Once the edge $e$ is added to $A^{\ast}$, there is an opportunity to add it to $A$. We can distinguish between two cases. In the first case, $e$ closes a cycle in $A^{\ast}$. In this situation, all edges that could form a cycle with $e$ have already been considered by the algorithm, allowing it to deterministically decide, based on past actions, whether adding $e$ to $A$ would create a cycle. In the remaining case, where adding $e$ to $A^{\ast}$ does not create a cycle, there is a possibility that some future edges of higher weight might form a cycle with $e$. To prevent $e$ from blocking future edges, the algorithm allows a probability of $1-\rho$ for not adding $e$ to $A$.

\paragraph{Analysis.}
We divide the analysis into two parts. First, we prove that for any edge $e \in OPT$ there is at least $\frac{1}{e}$ probability of adding $e$ to the set $A^{\ast}$. Next, we prove that conditioned on the fact that $e$ belongs the set $A^{\ast}$ the probability of $e$ being included is at least $x_{g}$.

\begin{algorithm}[!ht]
  Let $E'$ be the first $\floor{\frac m e}$ edges \hfill \Comment{set of observed edges}
  $\accset^{\ast} \gets \emptyset$ \hfill\Comment{superset of accepted edges}
  $\accset \gets \emptyset$ \hfill\Comment{set of accepted edges}
  
  $\forall v\in V$: $selected(v)\gets null$ \hfill\Comment{every vertex initially selects no edge}
  \For{$t \in \cbr{\floor{\frac m e} + 1, \dots, m}$} {
    Let $e_t$ be the edge arriving in time $t$\\
    Add $e_t$ to $E'$ \\
    $\optmatch_t \gets $ maximum weighted 
    forest on $G[E']$\\
    \If{$e_{t} \in \optmatch_{t}$} { 
    $e_{t} \gets (u, v)$ \hfill\Comment{directing $e_{t}$ towards the vertex with the smallest label in $e_{t}$'s connected component}
    \If{selected(u) $= null$}
    {
      $selected(u) \gets e_{t}$ \\
      add $e_{t}$ to $A^{\ast}$\\
      \If{$e_{t}$ forms a cycle in $A^{\ast}$} {
        Add $e_t$ to $\accset$ if $e_{t}$ does not create a cycle in $\accset$
      }
      \Else{
        Add $e_t$ to $\accset$ with probability $\rho := x_{g}$
      }
    }
  }
  }
  \caption{Algorithm for graphic matroid secretary for graphs of girth at least $g$.\label{alg:high_girth}}
\end{algorithm}

\begin{lemma}\label{lem:graphical-ext}
Let $e = (u,v)$ be an edge in $OPT$. It holds that $\Pr[e \in A^{\ast}] \ge \frac{1}{e} - \frac{2e}{m}$.
\end{lemma}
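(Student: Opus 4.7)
The plan is to decompose the event $\{e \in \accset^{\ast}\}$ into the conjunction that $e$ arrives at some time $t > m'$ (where $m' = \floor{m/e}$) and $\text{selected}(u_*) = \text{null}$ just before time $t$; here $u_* = u_*(S_t,e)$ denotes the endpoint of $e$ from which the algorithm orients $e$ at time $t$, i.e.\ the endpoint farther from the smallest-label vertex of $e$'s component in $\optmatch_t$, and $S_t$ is the unordered set of edges seen through time $t$. A standard matroid exchange argument shows that whenever $e \in \opt$ and $e \in S_t$ we have $e \in \optmatch_t$, so the first conjunct is automatic upon arrival; the entire burden is to lower-bound the probability of the second. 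The key observation is that $\text{selected}(u_*)$ can only be set at some time $t'$ at which $e_{t'}$ equals $p_{t'}(u_*)$, the parent edge of $u_*$ in $\optmatch_{t'}$ rooted at the smallest-label vertex of $u_*$'s component (and undefined when $u_*$ is isolated or is itself the root). Crucially $p_{t'}(u_*)$ is a deterministic function of the unordered set $S_{t'}$, so at each time step at most one specific edge can trigger the event.

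The core step will be the following inductive lemma: for every vertex $u$ and every set $S \subseteq E$ of size $k \geq m'$,
\[
  \Pr\bigl(\,\forall t' \in (m', k] :\ e_{t'} \neq p_{t'}(u)\,\bigm|\,S_k = S\,\bigr) \;\geq\; \frac{m'}{k},
\]
where the probability is over the uniformly random ordering of $S$. The base case $k = m'$ is trivial. For the inductive step I condition on $S_{k+1}$: since $e_{k+1}$ is uniform on $S_{k+1}$ we have $\Pr(e_{k+1} \neq p_{k+1}(u) \mid S_{k+1}) \geq 1 - \tfrac{1}{k+1} = \tfrac{k}{k+1}$; conditional on any such $e_{k+1}$, the set $S_k = S_{k+1} \setminus \{e_{k+1}\}$ is determined and its internal ordering remains uniform, so the inductive hypothesis yields $m'/k$, and the product is $m'/(k+1)$.

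Applying the lemma with $u = u_*(S_t, e)$ and $k = t-1$ (using that conditioning on $S_t$ and $e_t = e$ fixes $S_{t-1} = S_t \setminus \{e\}$ and preserves uniform ordering within $S_{t-1}$), and then averaging over $S_t$, gives $\Pr(\text{selected}(u_*) = \text{null}\text{ at time }t-1 \mid e_t = e) \geq m'/(t-1)$. Summing over arrival times,
\[
  \Pr(e \in \accset^{\ast}) \;\geq\; \sum_{t=m'+1}^{m} \frac{1}{m}\cdot\frac{m'}{t-1} \;=\; \frac{m'}{m}\sum_{s=m'}^{m-1}\frac{1}{s} \;\geq\; \frac{m'}{m}\,\ln\frac{m}{m'} \;\geq\; \frac{1}{e} - \frac{1}{m} \;\geq\; \frac{1}{e} - \frac{2e}{m},
\]
where I use the integral comparison $\sum_{s=m'}^{m-1} 1/s \geq \ln(m/m')$, together with $m/m' \geq e$ (so $\ln(m/m') \geq 1$) and $m'/m \geq 1/e - 1/m$ coming from $m' \geq m/e - 1$.

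The main obstacle is that the ``from'' vertex $u_*$ is itself random, depending on $S_t$. Phrasing the inductive lemma uniformly over all fixed vertices $u$ (with the probability taken only over the ordering of a fixed subset $S_k$) is precisely what allows its invocation with the random a posteriori choice $u_* = u_*(S_t, e)$ obtained at time $t$.
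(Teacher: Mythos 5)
Your proposal is correct and follows essentially the same route as the paper's own proof: fix the arrival time $t$, observe that each vertex has a unique parent edge in $\optmatch_{t'}$ (rooted at the min-label vertex of its component) whose arrival at step $t'$ is the only way $\textnormal{selected}(u)$ can be set, telescope the per-step survival probabilities $\tfrac{k-1}{k}$ from $m'$ up to $t-1$ to get $m'/(t-1)$, and sum over $t$. Your formulation of the inductive lemma as quantified over all fixed vertices $u$ before invoking it at the $S_t$-dependent endpoint $u_*$ makes the conditioning cleaner than the paper's prose, but the underlying argument and final arithmetic are the same.
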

\begin{proof}
Fix a time \( t \) at which the edge \( e \) arrives, i.e., \( e_{t} = e \). Since the input is randomly permuted, we have \( \Pr[e_{t} = e] = \frac{1}{m} \), and the event \( e_{t} = e \) is independent of the order of edges arriving before time \( t \).

The set of maximum-weighted forests forms a basis system of a matroid, which implies that if \( e \in OPT \), then \( e_{t} \in T_{t}^{OPT} \). Let \( (u, v) \) represent the direction of \( e_{t} \) in \( T_{t}^{OPT} \). Note that this direction is independent of the order in which edges arrive before time \( t \), as the direction is uniquely determined by the structure of the maximum-weighted forest, which in turn is uniquely determined by the \textit{set} of edges arriving up to time \( t \).

To add the edge \( e_{t} = (u, v) \) to \( A^{\ast} \), it must be the case that \( u \) has not selected any edge before time \( t \). We now seek to lower-bound the probability of this event. Consider step \( t - 1 \). Once again, the maximum-weighted forest \( T_{t-1}^{OPT} \) is independent of the order of edge arrivals. Assign each edge in \( T_{t-1}^{OPT} \) a direction toward the vertex in its connected component in \( T_{t-1}^{OPT} \) with the smallest label. Denote this set of directed edges as \( \overrightarrow{E_{t-1}} \). We first prove the following fact.
\begin{fact}
The out-degree of vertex $u$ in the set of edges $\overrightarrow{E_{t-1}}$ is at most $1$.
\end{fact}
\begin{proof}
Assume that the out-degree of \( u \) is greater than 2. Let \( e_{1} = (u, x_1) \) and \( e_{2} = (u, x_2) \) be two distinct edges outgoing from \( u \). Since \( T_{t-1}^{OPT} \) is a forest, the vertices \( x_{1} \) and \( x_{2} \) belong to two different connected components in \( T_{t-1}^{OPT} - e_{1} - e_{2} \). Without loss of generality, let \( x_{1} \) belong to the connected component with the smallest label among all vertices in these two connected components. This implies that the smallest label of the connected component of vertex \( u \) in \( T_{t-1}^{OPT} \) cannot be found in the part where \( x_{2} \) is, effectively proving that the edge \( e_{2} \) cannot have the direction \( (u, x_2) \). This contradiction completes the proof.
\end{proof}
Next, we continue to proving the lower bound on the probability of adding $e_{t}$ to $A^{\ast}$. Recall that we are lower-bounding the probability of $u$ selecting an edge at time $t-1$. For \( u \) to select an edge in this time step, the edge from \( \overrightarrow{E_{t-1}} \) directed outward from \( u \) (if any) must appear at time \( t - 1 \), for which there is a probability of \( \frac{1}{t - 1} \). Thus, with probability at least \( 1 - \frac{1}{t - 1} = \frac{t - 2}{t - 1} \), the vertex \( u \) does not select an edge at time \( t - 1 \). Additionally, conditioning on this event still preserves independence with respect to the order of edges up to time \( t - 2 \). Therefore, by following this argument inductively from time \( t - 1 \) down to \( \floor{ \frac{m}{e}} + 1 \), we obtain:

\[
\Pr[u \text{ has not selected an edge until time } t] \geq \prod_{k = \floor{m/e} + 1}^{t - 1} \frac{k - 1}{k} = \frac{\floor{m/e} + 1}{t - 1}.
\]

This yields that $\Pr[e_{t} \in A^{\ast}]\ge \frac{\floor{m/e} + 1}{t - 1}$, and given the fact that is equally likely for $e$ to appear at any time $t \in \left[\floor{m/e} + 1, m\right]$, we get,

\[
\Pr[e \in A^{\ast}] \geq \sum_{t = \floor{m/e} + 1}^{m} \frac{1}{m}\cdot\frac{\floor{m/e} + 1}{t - 1}
\]
\[
\ge \frac{m}{e} \cdot \frac{1}{m} \sum_{t = \floor{m/e} + 1}^{m} \frac{1}{t - 1} \ge \frac{1}{e} \left( \ln\left(\frac{m - 2}{\floor{m/e}}\right) \right) \ge \frac{1}{e} - \frac{2e}{m},
\]

which concludes the proof.
\end{proof}

\begin{lemma}\label{lem:graphical-final}
Let $e$ be an edge from $A^{\ast}$. With probability at least $x_{g}$ it holds that $e \in A$.
\end{lemma}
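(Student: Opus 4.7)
The plan is to condition on whether the edge $e$ closes a cycle in $A^{\ast}$ at the moment it was added. This splits into Case 1, where Rule~(A) applies and $e$ is added to $A$ with probability exactly $\rho = x_{g}$, immediately giving the desired bound; and Case 2, where Rule~(B) applies and $e$ closes some cycle $C = \{e_{1}, \ldots, e_{k-1}, e\}$ of length $k$ in $A^{\ast}$, with $k \geq g$ by the girth assumption.

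For Case~2, first observe that since $e$ is the edge completing $C$ in $A^{\ast}$, every other $e_{i}$ of $C$ must have arrived strictly before $e$ and have been successfully added to $A^{\ast}$ (otherwise the $C$-minus-$e$ path would not already be present when $e$ arrives). The key structural step is then to prove that none of $e_{1}, \ldots, e_{k-1}$ closed a cycle in $A^{\ast}$ at the moment of its own insertion. This is where the single-cycle-per-component property of $A^{\ast}$ is essential: because each vertex selects at most one edge, the directed version of $A^{\ast}$ has out-degree at most $1$, so every connected component of $A^{\ast}$ contains at most one cycle at all times. If some $e_{i}$ had closed a cycle $C' \neq C$, then after $e$ is eventually added, both $C$ and $C'$ would lie in the same connected component of $A^{\ast}$ (they share the endpoints of $e_{i}$), contradicting the single-cycle invariant. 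Consequently, each $e_{i}$ with $i < k$ was processed under Rule~(A), i.e., added to $A$ with probability $\rho$ via a fresh coin flip that is independent of the coin flips for the other $e_{j}$ and independent of the arrival order that determines Case~2.

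Conditional on Case~2, Rule~(B) adds $e$ to $A$ iff at least one of $e_{1}, \ldots, e_{k-1}$ is absent from $A$. Using the mutual independence of the $k-1$ Rule~(A) coin flips,
\begin{equation*}
\Pr\!\left[e \in A \;\big|\; \text{Case 2}\right] \;=\; 1 - \rho^{k-1} \;\geq\; 1 - \rho^{g-1},
\end{equation*}
where the inequality uses $k \geq g$. By the defining equation $-\rho^{g-1} - \rho + 1 = 0$ of $\rho = x_{g}$, the right-hand side equals exactly $\rho$. Combining with Case~1 (which also yields $\rho$) gives $\Pr[e \in A \mid e \in A^{\ast}] \geq \rho = x_{g}$, as desired.

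The main obstacle is the structural step above: ruling out that any $e_{i}$ with $i < k$ closed a cycle when inserted into $A^{\ast}$, since only this enables us to multiply independent Rule~(A) probabilities $\rho^{k-1}$. Once that invariant is in hand, the bound is immediate, and the choice of $\rho$ as the root of $1 - x^{g-1} = x$ is precisely what makes the worst case (cycles of minimum length $g$) balance against the Case~1 baseline.
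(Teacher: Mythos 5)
Your proof is correct and follows essentially the same strategy as the paper: split on whether $e$ closes a cycle in $A^{\ast}$, argue that in the cycle case the other $k-1$ edges of the unique cycle were all Rule~(A) coin flips, and compute $1 - \rho^{k-1} \geq 1 - \rho^{g-1} = x_g$. The one difference is cosmetic: where the paper establishes uniqueness of the cycle through $e$ by a "first-differing-edge-selected-by-the-same-vertex" contradiction and then tersely asserts that the earlier edges of $C$ did not close cycles, you derive both facts from the single explicit invariant that each connected component of $A^{\ast}$ contains at most one cycle (since out-degree is at most~$1$), which is arguably a cleaner way to justify the independence of the $k-1$ Rule~(A) flips.
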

\begin{proof}
The algorithm has two cases. In the first case, if adding \( e \) to \( A^{\ast} \) does not create a cycle, then \( e \) is added to \( A \) with probability \( \rho = x_{g} \), and thus the lemma follows.

The second case arises when adding \( e \) to \( A^{\ast} \) creates a cycle in \( A^{\ast} \). In this scenario, \( e \) is added to \( A \) only if \( e \) does not create a cycle with the other edges already in \( A \).

We first observe that in this case, \( e \) belongs to exactly one cycle in \( A^{\ast} \). Suppose, for contradiction, that there exist two cycles \( C_{1} = \left(e_{1}, \ldots, e_{p} \right) \) and \( C_{2} = \left(d_{1}, \ldots, d_{q} \right) \) such that \( e_{1} = e = d_{1} \). Here, we list the cycles by the edges they contain. If \( e_{i} \neq d_{i} \) is the first edge at which the cycles differ, this leads to a contradiction: since every vertex can select at most one edge, the two edges \( e_{i} \) and \( d_{i} \) would have to be selected by the same vertex as they are the first differing edges.

Continuing with the main proof, since \( A \subseteq A^{\ast} \), \( e \) could only create a cycle with the edges that form a cycle with \( e \) in \( A^{\ast} \). Denote the set of edges forming this cycle as \( C \). Given that the girth of \( G \) is at least \( g \), we have \( |C| \ge g \). Observe that upon adding any edge \( e' \in C \), where \( e' \neq e \), to \( A^{\ast} \), it could not create a cycle in \( A^{\ast} \) since this occurred before the arrival of \( e \). Thus, each edge \( e' \in C \), where \( e' \neq e \), was added to \( A \) with probability \( \rho \). The probability that all these edges were added to \( A \) is then \( \rho^{g-1} \), as all events are independent. Therefore, adding \( e \) to \( A \) will not create a cycle with probability \( 1 - \rho^{g - 1} \). Given that \( x_{g} \) is the root of the polynomial
\[
f(x) = -x^{g - 1} - x + 1
\]
in \( [0, 1] \), we conclude that
\[
1 - \rho^{g - 1} = 1 - x_{g}^{g - 1} = x_{g},
\]
and the lemma follows.
\end{proof}

\begin{theorem}
For any graph $G$ with girth at least $g$,
the Algorithm \ref{alg:high_girth} is $\frac{e}{x_g}$ competitive.
\end{theorem}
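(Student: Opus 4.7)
The plan is to combine Lemmas \ref{lem:graphical-ext} and \ref{lem:graphical-final} multiplicatively for each edge in $OPT$, then sum using linearity of expectation. Specifically, for any edge $e \in OPT$, we write
\[
\Pr[e \in A] \;=\; \Pr[e \in A^{\ast}] \cdot \Pr[e \in A \mid e \in A^{\ast}] \;\ge\; \left(\tfrac{1}{e} - \tfrac{2e}{m}\right) \cdot x_g,
\]
where the first factor comes from Lemma \ref{lem:graphical-ext} and the second from Lemma \ref{lem:graphical-final} (which, by inspection of its proof, applies for any edge that enters $A^{\ast}$, regardless of why it entered). I would emphasize that the conditional probability bound in Lemma \ref{lem:graphical-final} depends only on the random coin flips used by Rule (A) applied to the other edges of the unique cycle through $e$ in $A^{\ast}$, so it composes cleanly with the bound on $\Pr[e \in A^{\ast}]$.

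Next I would apply linearity of expectation. Since $A \subseteq E$ and the weights are non-negative,
\[
\mathbb{E}[w(A)] \;\ge\; \sum_{e \in OPT} w(e) \cdot \Pr[e \in A] \;\ge\; \left(\tfrac{1}{e} - \tfrac{2e}{m}\right) x_g \cdot w(OPT).
\]
Rearranging yields a competitive ratio of $\left[\left(\tfrac{1}{e} - \tfrac{2e}{m}\right) x_g\right]^{-1}$, which tends to $\tfrac{e}{x_g}$ as $m$ grows. To get a bound that is uniform in $m$, I would employ the dummy-edge padding trick already used in the proofs of Theorems \ref{thm:graph-algorithm} and \ref{thm:simple-graph-algorithm}: for small $m$ (say whenever the $-\tfrac{2e}{m}$ correction is non-negligible, or whenever there are not enough edges for the observation phase), mix in a sufficient number of weight-$0$ dummy edges that the algorithm is instructed to exclude from every $T^{opt}_t$. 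Since these dummy edges neither contribute to $OPT$ nor to $A$, they do not affect the competitive ratio, and we may assume $m$ is as large as needed.

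The main (very mild) obstacle is justifying that Lemma \ref{lem:graphical-final}'s bound holds unconditionally on the event $\{e \in A^{\ast}\}$, i.e., that it survives multiplication with the probability of $e$ reaching $A^{\ast}$. This is true because the randomness used by Rule (A) is independent of the random arrival order (which is the only randomness governing membership in $A^{\ast}$), and because the cycle-creation test in Rule (B) depends only on past decisions. I would state this independence explicitly before performing the multiplication. After that, the theorem follows directly, so no further significant computation is required.
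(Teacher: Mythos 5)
Your proposal is on the right track for the expectation bound, but it has a genuine gap: you never establish that the output set $A$ is a feasible solution, i.e.\ that $A$ is acyclic. The competitive-ratio inequality $\alpha\cdot\mathbb{E}[w(A)]\ge OPT$ is only meaningful if the algorithm always returns an independent set of the graphic matroid, and Algorithm~\ref{alg:high_girth} is not \emph{a priori} guaranteed to do so: Rule~(A) adds an edge with probability $\rho$ \emph{without} checking for cycle creation in $A$. The paper's proof explicitly addresses this. The argument is short but non-trivial: if $e_t$ is added under Rule~(A), then by construction $e_t$ does not close a cycle in $A^{\ast}$, and since $A\subseteq A^{\ast}$ any cycle through $e_t$ in $A\cup\{e_t\}$ would also be a cycle through $e_t$ in $A^{\ast}\cup\{e_t\}$ --- a contradiction; if $e_t$ is added under Rule~(B), the algorithm explicitly checks that no cycle is created in $A$. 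Without this half of the proof, the quantity you are calling a ``competitive ratio'' is not one.

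On the parts you did address: your multiplicative decomposition $\Pr[e\in A]=\Pr[e\in A^{\ast}]\cdot\Pr[e\in A\mid e\in A^{\ast}]$ and the appeal to linearity of expectation match the paper, and your explicit remark that the Rule~(A) coin flips are independent of the arrival order (and hence of the event $\{e\in A^{\ast}\}$) is a point the paper compresses into ``follows immediately.'' Making that independence explicit is a genuine (if mild) improvement in rigor. Your dummy-edge padding discussion to absorb the $-\tfrac{2e}{m}$ error term is consistent with the treatment in Theorems~\ref{thm:graph-algorithm} and~\ref{thm:simple-graph-algorithm}, though the paper's proof of the present theorem does not spell it out; this is a reasonable tightening rather than an error. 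Add the acyclicity argument and your proof is complete.
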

\begin{proof}
The fact that any edge \( e \in OPT \) is added to the set with probability \( \frac{1}{e} \cdot x_{g} \) follows immediately from Lemmas~\ref{lem:graphical-ext} and ~\ref{lem:graphical-final}. This proves that the weight of the solution set \( A \) is sufficiently large.

It remains to prove that the set \( A \) is acyclic. This, however, follows from the rules of Algorithm~\ref{alg:high_girth}, which determine when an edge is added to \( A \), as well as a similar argument to the one presented in Lemma~\ref{lem:graphical-final}. An edge \( e \) can be added to \( A \) as a result of a probabilistic event; in this case, the edge does not form a cycle upon being added to \( A^{\ast} \). Since \( A \subseteq A^{\ast} \), this edge cannot form a cycle in \( A \) either. Alternatively, the edge may be added when adding it to \( A^{\ast} \) creates a cycle. However, in such a case, the algorithm explicitly checks whether adding \( e \) to \( A \) would create a cycle and rejects any edge that does so. This completes the proof.
\end{proof}

\bibliographystyle{alpha}
\bibliography{references}

\newpage

\appendix
\section{Missing proofs}
\begin{lemma}\label{lem:f}
For a fixed integer $g \ge 2$, consider the polynomial $f(x) = -x^{g-1} - x + 1$. The following holds:\\
\noindent $\textit{(a)}$ there is exactly one root of the polynomial in $[0, 1]$ which we denote $x_{g}$.\\
\noindent $\textit{(b)}$ for sufficiently large $g$ the following are upper and lower bounds on $x_{g}$
$$1 - \frac{\ln{g}}{g - 1} \le x_{g} \le 1 - \frac{1}{g - 1}.$$
In particular $\lim_{g \rightarrow \infty} x_{g} = 1$. 
\end{lemma}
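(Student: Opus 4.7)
\textbf{Proof plan for Lemma~\ref{lem:f}.}

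For part (a), the approach is to combine the intermediate value theorem with strict monotonicity. First I would check the boundary values: $f(0) = 1 > 0$ and $f(1) = -1 < 0$, so the IVT gives at least one root in $[0,1]$. For uniqueness, I would differentiate: $f'(x) = -(g-1)x^{g-2} - 1 \leq -1 < 0$ on $[0,1]$ for every $g \geq 2$, so $f$ is strictly decreasing and the root $x_g$ is unique.

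For part (b), I would exploit strict monotonicity of $f$: since $f(x_g) = 0$ and $f$ is decreasing, $x_g \leq y$ is equivalent to $f(y) \leq 0$, and $x_g \geq y$ is equivalent to $f(y) \geq 0$. So I just need to evaluate $f$ at the two candidate points.

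For the upper bound $x_g \leq 1 - \tfrac{1}{g-1}$, I would compute
\[
f\!\left(1 - \tfrac{1}{g-1}\right) = \tfrac{1}{g-1} - \left(1 - \tfrac{1}{g-1}\right)^{g-1}.
\]
The sequence $(1 - 1/n)^n$ is classically increasing with limit $1/e$, hence $\geq 1/4$ for all $n \geq 2$. Thus for $g$ large enough that $1/(g-1) < 1/4$ (e.g.\ $g \geq 6$), the above is negative, so $x_g < 1 - 1/(g-1)$.

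For the lower bound $x_g \geq 1 - \tfrac{\ln g}{g-1}$, I would compute
\[
f\!\left(1 - \tfrac{\ln g}{g-1}\right) = \tfrac{\ln g}{g-1} - \left(1 - \tfrac{\ln g}{g-1}\right)^{g-1}
\]
and use the standard inequality $1 - x \leq e^{-x}$ to bound $(1 - \ln g/(g-1))^{g-1} \leq e^{-\ln g} = 1/g$. It then remains to verify $\ln g/(g-1) \geq 1/g$, i.e.\ $g \ln g \geq g - 1$, which is immediate for $g \geq 3$ (since $\ln g \geq 1$) and checked by hand at $g = 2$. So $f(1 - \ln g/(g-1)) \geq 0$, giving the stated lower bound.

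Finally, both bounds tend to $1$ as $g \to \infty$, so the squeeze theorem yields $\lim_{g \to \infty} x_g = 1$. No step is a real obstacle; the only mild subtlety is that the upper bound uses the asymptotic lower bound $(1-1/n)^n \geq 1/4$, which is why the lemma only claims the upper bound for sufficiently large $g$ rather than for all $g \geq 2$.
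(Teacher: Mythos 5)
Your proof is correct and takes essentially the same approach as the paper: strict monotonicity of $f$ plus the intermediate value theorem for part (a), and for part (b), evaluating $f$ at the two candidate points and reading off the sign via monotonicity. The only difference is that the paper establishes the signs by computing limits as $g \to \infty$ (getting $0^+$ and $-1/e$), whereas you derive explicit inequalities via $(1-1/n)^n \geq 1/4$ and $1-x \leq e^{-x}$, which yields concrete thresholds on $g$ and is, if anything, a bit tighter.
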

\begin{proof}
\textit{(a)} This follows immediately from analyzing the first derivative of \( f(x) \) on the interval \( [0, 1] \). We have that \( f'(x) = -(g-1)\cdot x^{g-2} - 1 \), which is less than \( 0 \) over the entire interval \( [0, 1] \). Thus, \( f(x) \) is decreasing on this interval. Given that \( f(0) = 1 \), \( f(1) = -1 \), and that \( f \) is continuous, there must be exactly one root.

\textit{(b)} Consider \( f\left(1 - \frac{\ln{g}}{g - 1} \right) \). We have that
\[
f\left(1 - \frac{\ln{g}}{g - 1} \right) = -\left(1 - \frac{\ln{g}}{g - 1} \right)^{g - 1} - \left(1 - \frac{\ln{g}}{g - 1} \right) + 1 = \left(1 - \frac{\ln{g}}{g - 1} \right)^{\frac{g - 1}{\ln{g}} \cdot \ln{g}} + \frac{\ln{g}}{g - 1}.
\]
Taking the limit as \( g \to \infty \), we obtain
\[
\lim_{g \rightarrow \infty} f\left(1 - \frac{\ln{g}}{g - 1} \right) = \lim_{g \rightarrow \infty} \left(1 - \frac{\ln{g}}{g - 1} \right)^{\frac{g - 1}{\ln{g}} \cdot \ln{g}} + \frac{\ln{g}}{g - 1}
\]
\[
= \lim_{g \rightarrow \infty} -\left( \frac{1}{e} \right)^{\ln{g}} + \frac{\ln{g}}{g - 1} = \lim_{g \rightarrow \infty} -\frac{1}{g} + \frac{\ln{g}}{g - 1} = 0^{+},
\]
and it follows that for sufficiently large \( g \), the value \( f\left(1 - \frac{\ln{g}}{g - 1} \right) \) is positive.

By a similar argument, the value \( f\left(1 - \frac{1}{g - 1} \right) \) is negative for large \( g \). 
\[
\lim_{g \rightarrow \infty} f\left(1 - \frac{1}{g - 1} \right) = \lim_{g \rightarrow \infty} \left(1 - \frac{1}{g - 1} \right)^{g - 1} + \frac{1}{g - 1}
\]
\[
= \lim_{g \rightarrow \infty} -\frac{1}{e} + \frac{1}{g - 1} = - \frac{1}{e},
\]
and the proof is complete.

\end{proof}

\end{document}